\def\keywords{\smallskip\noindent\textsc{Key Words. }}
\newif\ifomitproofs
\newif\ifsubmission
\newif\ifacm
\providecommand{\tbl}[2]{\caption{#1}\centering #2}
\DeclarePairedDelimiter{\tup}{\langle}{\rangle}
\DeclarePairedDelimiter{\enc}{\ulcorner}{\urcorner}
\newcommand{\eqby}[1]{\stackrel{\text{{\tiny{#1}}}}{=}}
\newcommand{\eqdef}{\eqby{def}}
\newcommand{\drule}[3]{\dfrac{#1}{#2}{\text{\footnotesize~#3}}}
\newcommand{\rst}[1][]{\xrightarrow{#1\stackrel{?}{=}\vec 0}}
\newcommand{\abv}{ABVASS$_{\vec 0}$}
\newcommand{\maxm}{\mathrm{max}^-}
\newcommand{\maxp}{\mathrm{max}^+}
\newcommand{\Ack}{\ensuremath{\text{\textsc{Ackermann}}}}
\newcommand{\prvs}[1]{\vdash_{\!\!\text{\tiny #1}}}
\newcommand{\jdg}{\mathrel{\triangleright}}
\newcommand{\thetabrule}[1]{\ensuremath{\mathsf{#1}}}
\providecommand{\Hmake@df@tag@@}[1]{}
\newcommand{\tabrulelabelr}[5][\relax]{
  \Hmake@df@tag@@{#5}%
  \Hy@GlobalStepCount\Hy@linkcounter
  \xdef\@currentHref{equation.\the\c@section.\the\Hy@linkcounter}%
  \Hy@raisedlink{\hyper@anchorstart{\@currentHref}\hyper@anchorend}%
  {\renewcommand{\arraystretch}{1}%
  \frac{ \begin{array}{c} #3 \end{array} }%
       { \begin{array}{c} #4 \end{array} }%
  \@bsphack
  \protected@write\@auxout{}%
    {\string\newlabel{#2}{{{\thetabrule{#5}}}{\thepage}{#1}{\@currentHref}{}}}%
  \:(\thetabrule{#5})}
}
\newcommand{\appref}[1]{\autoref{#1}}
\renewcommand{\vec}[1]{\bar{\mathsf{#1}}}
\newcommand{\IEEEhspace}[1]{}
\newtheorem{theorem}{Theorem}[section]
\newtheorem{lemma}[theorem]{Lemma}
\newtheorem{proposition}[theorem]{Proposition}
\newtheorem{corollary}[theorem]{Corollary}
\newtheorem{fact}[theorem]{Fact}
\theoremstyle{definition}
\theoremstyle{remark}
\newtheorem{claim}{Claim}[theorem]
\newtheorem{remark}[theorem]{Remark}
\tikzstyle{intermediate}=[draw=gray!90,very thick,fill=gray!20,circle]
\tikzstyle{state}=[intermediate,minimum width=.65cm,inner sep=1pt]
\tikzstyle{rstate}=[state,rectangle,rounded corners=8pt,minimum
\tikzstyle{every node}=[font=\small]
\tikzstyle{every edge}=[draw,->,>=stealth',shorten >=1pt,semithick]
\tikzstyle{accepting}=[accepting by arrow]
\tikzstyle{initial}=[initial by arrow,initial text=]
\providecommand{\urlstyle}[1]{}
\providecommand{\doi}[1]{\href{http://dx.doi.org/#1}{\nolinkurl{doi:#1}}}
\renewcommand{\cite}{\citep}
\begin{document}
\renewcommand{\sectionautorefname}{Section}
\renewcommand{\subsectionautorefname}{Section}
\renewcommand{\subsubsectionautorefname}[1]{\S}
\title[Non-Elementary Complexities for BVASS, MELL, and
  Extensions]{Non-Elementary Complexities for \mbox{Branching VASS}, MELL, and
  Extensions}
\thanks{Work partially supported by ANR grant ReacHard 11-BS02-001-01.}
\author[R.~Lazi\'c]{Ranko Lazi\'c}
\address{Department of Computer Science, University of Warwick, UK}
\email{lazic@dcs.warwick.ac.uk}
\author[S.~Schmitz]{Sylvain Schmitz}
\address{LSV, ENS Cachan \& CNRS \& INRIA, France}
\email{schmitz@lsv.ens-cachan.fr}
\begin{abstract}
We study the complexity of reachability problems on branching
extensions of vector addition systems, which allows us to derive new
non-elementary complexity bounds for fragments and variants of
propositional linear logic.  We show that provability in the
multiplicative exponential fragment is \textsc{Tower}-hard already in
the affine case---and hence non-elementary.  We match this lower bound
for the full propositional affine linear logic, proving its
\textsc{Tower}-completeness.  We also show that provability in
propositional contractive linear logic is \textsc{Ackermann}-complete.

\keywords\ Linear logic, vector addition systems, fast-growing
complexity.
\end{abstract}
\maketitle
\section{Introduction}\label{sec-intro}
The use of various classes of counter machines to provide
computational counterparts to propositional substructural logics has
been highly fruitful, allowing to prove for instance:
\begin{itemize}
\item the undecidability of provability in propositional linear logic
  (LL), thanks to a reduction from the halting problem in Minsky
  machines proved by \citet*{lincoln92}, who initiated much of this
  line of work,
\item the decidability of the $\oc$-Horn fragment of multiplicative
  exponential linear logic, proved by \citet{kanovich95} by reduction
  to reachability in vector addition systems,
\item the decidability of provability in {affine} linear logic, first
  shown by \citeauthor{kopylov01} using a notion of vector
  addition games~\citep{kopylov01},
\item the \Ack-completeness of provability in the
  conjunctive implicative fragment of {relevance logic}, proved by
  \citet{urquhart99}, using reductions to and from expansive
  alternating vector addition systems, and
\item the inter-reducibility between provability in
  {multiplicative exponential} linear logic and reachability in a
  model of {branching} vector addition systems, shown by
  \citet*{degroote04}.%
\end{itemize}

\subsection{Alternating Branching VASS}
In this paper, we revisit the correspondences between propositional
linear logic and counter systems with a focus on computational
complexity.  In \autoref{sec-abvass}, we define a model of
\emph{alternating branching vector addition systems} (ABVASS) with
\emph{full zero tests}.  While this model can be seen as an extension
and repackaging of \citeauthor{kopylov01}'s vector games, its
reachability problem enjoys very simple reductions to and from
provability in LL, which are suitable for complexity statements (see
\autoref{sec-focus}).  We prove that:
\begin{itemize}
\item coverability in the top-down, root-to-leaves direction is
  \textsc{Tower}-complete, i.e.\ complete for the class of problems
  that can be solved with time or space resources bounded by a tower
  of exponentials whose height depends elementarily on the input size
  (see \autoref{sec-upb} for the upper bound and \autoref{sec-lowb}
  for the lower bound), and
\item coverability in the bottom-up, leaves-to-root direction and
  so-called ``meet'' and ``zero-jump'' semantics is complete for \Ack,
  i.e.\ complete for resources bounded by the Ackermann function of
  some primitive-recursive function of the
  input\ifomitproofs\relax\else\ (see \appref{sec-ack})\fi.
\end{itemize}
\ifomitproofs Due to space constraints, the technical details for this
second point are omitted in this paper but can be found along with
other material in the full paper at the address
\url{http://arxiv.org/abs/1401.6785}.\fi

\subsection{Provability in Substructural Logics}
Our complexity bounds for ABVASS translate into the exact same bounds
for provability in fragments and variants of LL:
\subsubsection{Affine Linear Logic\nopunct} 
\ifomitproofs Affine Linear Logic \fi(LLW) was proved decidable
by~\citet{kopylov01} in~1995 using vector addition games; a
model-theoretic proof was later presented by
\citet{lafont97}.%
\footnote{
Variants of LLW are popular in the literature on implicit complexity;
for instance, \emph{light affine linear logic}~\citep{asperti02} is
known to type exactly the class \textsc{FP} of polynomial time
computable functions.  In this paper we are however interested in the
complexity of the provability problem (for the propositional
fragment), rather than the complexity of normalisation.  In terms of
typed lambda calculi, our results pertain to the complexity of the
type inhabitation problem.}

The best known complexity bounds for LLW are due to
\citet{urquhart98}: by a reduction from coverability in vector
addition systems~\citep{lipton76}, he derives an \textsc{ExpSpace}
lower bound, very far from the \Ack\ upper bound he obtains from
length function theorems for Dickson's
Lemma~\citep[see e.g.][]{FFSS-lics2011}.

\subsubsection{Contractive Linear Logic\nopunct} 
\ifomitproofs Contractive Linear Logic \fi(LLC) was proved
decidable by~\citet{okada99} by model-theoretic methods.%

\Citet{urquhart99} showed the \Ack-completeness of provability in a
fragment of relevance logic, which is also a fragment of intuitionistic
multiplicative additive LLC.  To the best of our knowledge, there are
no known complexity upper bounds for provability in LLC.

\subsubsection{Multiplicative Exponential Linear Logic\nopunct.}
The main open question in this area is whether the multiplicative
exponential fragment (MELL) is decidable.  It is related to many
decision problems, for instance in computational
linguistics~\citep{rambow94,schmitz10}, cryptographic protocol
verification~\citep{verma05}, the verification of parallel
programs~\citep{bouajjani13}, and data
logics~\citep{bojanczyk09,dimino13}%
.

Thanks to the reductions to and from the reachability problem in
branching vector addition systems with states
(BVASS)~\citep{degroote04} and to the bounds of \citet{lazic10}, we
know that provability in MELL is \textsc{2-ExpSpace}-hard.

\subsubsection{Summary of the Complexity Results\nopunct:}
\begin{description}
\item[LLW]\IEEEhspace{3pt} We improve both the lower bound and the
  upper bound of \citet{urquhart98}, and prove that LLW provability is
  complete for \textsc{Tower}.
\item[LLC]\IEEEhspace{3pt} We show that LLC provability is
  \textsc{Ackermann}-complete; the lower bound already holds for the
  multiplicative additive fragment MALLC.
\item[MELL]\IEEEhspace{3pt} Our \textsc{Tower}-hardness result for
  LLW already holds for affine MELL and thus for MELL, which improves
  over the \textsc{2-ExpSpace} lower bound of \citet{lazic10}.
\item[ILL]\IEEEhspace{3pt} All of our complexity bounds also hold for
  provability in the intuitionistic versions of our calculi.  See
  \appref{sec-ill} for details.
\end{description}

\section{Propositional Linear Logic}\label{sec-llw}

\subsection{Classical Linear Logic}
For convenience, we present here a sequent calculus for classical
propositional linear logic that works with formul\ae\ in negation
normal form and considers one-sided sequents~\citep[see e.g.][]{troelstra92}.

\subsubsection{Syntax}
Propositional linear logic formul\ae\ are defined by the abstract
syntax
\begin{align*}
  A,B ::= &\: a\mid a^\bot\tag{atomic}\\
  &\mid A\parr B\mid A\otimes B\mid\bot\mid\mathbf 1\tag{multiplicative}\\
  &\mid A\with B\mid A\oplus B\mid\top\mid\mathbf 0\tag{additive}\\
  &\mid\oc A\mid\wn A\tag{exponential}
\end{align*}
where $a$ ranges over atomic formul\ae.  We write ``$A^\bot$'' for the
negation normal form of $A$, where negations are pushed to the atoms
using the dualities $A^{\bot\bot}=A$, $(A\parr B)^\bot=A^\bot\otimes
B^\bot$, $\bot^\bot=\mathbf 1$, $(A\with B)^\bot=A^\bot\oplus B^\bot$,
$\top^\bot=\mathbf 0$, and $(\wn A)^\bot=\oc A^\bot$.  We write
``$A\multimap B$'' for the linear implication $A^\bot\parr B$.

\subsubsection{Sequent Calculus}\label{sub-seq}
The rules of the sequent calculus manipulate multisets of formul\ae,
denoted by $\Gamma$, $\Delta$, \dots, so that the exchange rule is
implicit; ``$\wn\Gamma$'' then denotes a multiset of formul\ae\ all
guarded by why-nots: $\wn\Gamma$ is of the form $\wn A_1,\dots,\wn A_n$.
\begin{gather*}
  \drule{}{\vdash A,A^\bot}{init}\qquad
  \drule{\vdash\Gamma,A\quad\vdash\Delta,A^\bot}{\vdash\Gamma,\Delta}{cut}\\[.8em]
  \drule{\vdash\Gamma,A,B}{\vdash\Gamma,A\parr B}{$\parr$}\quad\,
  \drule{\vdash\Gamma,A\quad\vdash \Delta,B}{\vdash\Gamma,\Delta,A\otimes B}{$\otimes$}\quad\,
  \drule{\vdash\Gamma}{\vdash\Gamma,\bot}{$\bot$}\quad\,
  \drule{}{\vdash\mathbf 1}{$\mathbf 1$}\\[.8em]
  \drule{\vdash\Gamma,A\quad\vdash\Gamma,B}{\vdash\Gamma,A\with B}{$\!\with$}\quad
  \drule{\vdash\Gamma,A}{\vdash\Gamma,A\oplus B}{}
  \drule{\vdash\Gamma,B}{\vdash\Gamma,A\oplus B}{$\!\oplus$}\quad
  \drule{}{\vdash\Gamma,\top}{$\!\top$}\\[.8em]
  \drule{\vdash\Gamma,A}{\vdash\Gamma,\wn A}{$\wn$D}\quad\,
  \drule{\vdash\Gamma}{\vdash\Gamma,\wn A}{$\wn$W}\quad\,
  \drule{\vdash\Gamma,\wn A,\wn A}{\vdash\Gamma,\wn A}{$\wn$C}\quad\,
  \drule{\vdash\wn\Gamma,A}{\vdash\wn\Gamma,\oc A}{$\wn$P}
\end{gather*}
The last four rules for exponential formul\ae\ are called
\emph{dereliction}~($\wn$D), \emph{logical weakening}~($\wn$W),
\emph{logical contraction}~($\wn$C), and \emph{promotion}~($\wn$P).

The \emph{cut} rule can be eliminated in this
calculus, which then enjoys the \emph{subformula
  property}: in any rule except cut, the formul\ae\ appearing in the
premises are subformul\ae\ of the formul\ae\ appearing in the
conclusion.

\subsection{Fragments and Variants}
\Citet{lincoln92} established most of the results on the decidability
and complexity of provability in propositional linear logic.  In
particular, the full propositional linear logic (LL) is undecidable,
while its multiplicative additive fragment (MALL, which excludes the
exponential connectives and rules) is decidable in polynomial space.
As mentioned in the introduction, the main open question in this area
is whether the multiplicative exponential fragment (MELL, which
excludes the additive connectives and rules) is decidable.

Regarding related logics, the structural rules of \emph{structural
  weakening}~(W) and \emph{structural contraction}~(C)
\begin{align*}
  \drule{\vdash\Gamma}{\vdash\Gamma,A}{W}\qquad
  \drule{\vdash\Gamma,A,A}{\vdash\Gamma,A}{C}
\end{align*}
give rise to two decidable variants of LL.  If we replace logical
weakening~($\wn$W) by structural weakening~(W), we define
\emph{affine} linear logic (LLW).  If we similarly replace logical
contraction~($\wn$C) by structural contraction~(C), we define
\emph{contractive} linear logic (LLC).  The sequent calculi for LLW
and LLC also enjoy cut elimination and the subformula property for
cut-free proofs.

{\renewcommand{\subsection}{\subsubsection}
\subsection{Intuitionistic Linear Logic}
Intuitionistic linear logic is essentially obtained from classical
linear logic by restricting its two-sided sequent calculus to
consequents (the right sides of sequents) with at most one formula.
We present here a variant of intuitionistic linear logic with bottom
\citep[\subsectionautorefname~2.5]{troelstra92}, which we will refer
to as ILZ:
\begin{gather*}
  \drule{}{A\vdash A}{init}\qquad
  \drule{\Gamma\vdash A\quad\Delta,A\vdash B}{\Gamma,\Delta\vdash B}{cut}\\[.8em]
  \drule{\Gamma\vdash A\quad\Delta,B\vdash C}{\Gamma,\Delta,A\multimap B\vdash C}{L$_\multimap$}\qquad
  \drule{\Gamma,A\vdash B}{\Gamma\vdash A\multimap B}{R$_\multimap$}\\[.8em]
  \drule{\Gamma,A,B\vdash C}{\Gamma,A\otimes B\vdash C}{L$_\otimes$}\qquad
  \drule{\Gamma\vdash A\quad\Delta\vdash B}{\Gamma,\Delta\vdash A\otimes B}{R$_\otimes$}\\[.8em]
  \drule{}{\bot\vdash}{L$_\bot$}\qquad
  \drule{\Gamma\vdash}{\Gamma\vdash\bot}{R$_\bot$}\\[.8em]
  \drule{\Gamma\vdash A}{\Gamma,\mathbf 1\vdash A}{L$_{\mathbf 1}$}\qquad
  \drule{}{\vdash\mathbf 1}{R$_\mathbf 1$}\\[.8em]
  \drule{\Gamma,A\vdash C\quad\Gamma,B\vdash C}{\Gamma,A\oplus B\vdash C}{L$_\oplus$}\qquad
  \drule{\Gamma\vdash A}{\Gamma\vdash A\oplus B}{~}\drule{\Gamma\vdash B}{\Gamma\vdash A\oplus B}{R$_\oplus$}
\end{gather*}\begin{gather*}%
  \drule{\Gamma,A\vdash C}{\Gamma,A\with B\vdash
    C}~\drule{\Gamma,B\vdash C}{\Gamma,A\with B\vdash C}{L$_\with$}\qquad
  \drule{\Gamma\vdash A\quad\Gamma\vdash B}{\Gamma\vdash A\with
  B}{R$_\with$}\\[.8em]
  \drule{}{\Gamma\vdash\top}{R$_\top$}\\[.8em]
  \drule{\Gamma,A\vdash B}{\Gamma,\oc A\vdash B}{$\oc$D}\quad
  \drule{\Gamma\vdash B}{\Gamma,\oc A\vdash B}{$\oc$W}\quad
  \drule{\Gamma,\oc A,\oc A\vdash B}{\Gamma,\oc A\vdash B}{$\oc$C}\quad
  \drule{\oc\Gamma\vdash A}{\oc\Gamma\vdash\oc A}{$\oc$P}
\end{gather*}
The fragment without $\bot$ is better known as ILL.

\subsubsection{Affine and Contractive Variants}
The intuitionistic versions with bottom ILZW and ILZC and without
bottom ILLW and ILLC of LLW and LLC are respectively obtained by
adding structural weakening and structural contraction:
\begin{align*}
  \drule{\Gamma\vdash B}{\Gamma,A\vdash B}{W}\qquad
  \drule{\Gamma,A,A\vdash B}{\Gamma,A\vdash B}{C}
\end{align*}
As with the sequent calculi for LL, LLW, and LLC, the intuitionistic
calculi for ILZ, ILZW, and ILZC enjoy cut elimination and the
subformula property for cut-free proofs.

\subsubsection{Relevance Logic}
The sequent calculus LR+ considered by \citet{urquhart99} for a
fragment of relevance logic is IMALLC without $\top$, i.e.\ ILZC
restricted to $\{{\multimap},{\otimes},\mathbf 1,{\oplus},{\with}\}$.

}

\section{Alternating Branching VASS}\label{sec-abvass}
We define a ``tree'' extension of vector addition systems with states
(VASS) that combines two kinds of branching behaviours: those of
alternating VASS (\autoref{ssub-avass}) and those of branching VASS
(\autoref{ssub-bvass}).  With this combination, we obtain a
reformulation of \citeauthor{kopylov01}'s vector addition
games~\citep{kopylov01}, for which he showed that
\begin{enumerate}
\item the game is inter-reducible with LL
  provability
\item the ``lossy'' version of the game is inter-reducible with LLW
  provability.
\end{enumerate}
We further add \emph{full zero tests} to this model, as they make the
reduction from LL provability straightforward (see
\autoref{sec-focus}) and can easily be removed (see
\autoref{ssub-abvass}).

\subsection{Definitions}\label{sub-abvass}

\subsubsection{Syntax}
An \emph{alternating branching vector addition system with states and
  full zero tests} (\abv) is \ifomitproofs\relax\else syntactically \fi a tuple
$\?A=\tup{Q,d,T_u,T_f,T_s,T_z}$ where $Q$ is a finite set of
\emph{states}, $d$ is a \emph{dimension} in $\+N$, and $T_u\subseteq
Q\times\+Z^d\times Q$, $T_f\subseteq Q^3$, $T_s\subseteq Q^3$ and
$T_z\subseteq Q^2$ are respectively finite sets of \emph{unary},
\emph{fork}, \emph{split} and \emph{full zero test} rules.  We denote
unary rules $(q,\vec u,q_1)$ in $T_u$ with $\vec u$ in $\+Z^d$
by ``$q\xrightarrow{\vec u}q_1$'', fork rules $(q,q_1,q_2)$ in
$T_f$ by ``$q\to q_1\wedge q_2$'', split rules $(q,q_1,q_2)$ in $T_s$
by ``$q\to q_1+q_2$'', and full zero test rules $(q,q_1)$ in $T_z$ by
``$q\rst q_1$''.

\subsubsection{Deduction Semantics}Given an \abv, its semantics is
defined by a deduction system over \emph{configurations} $(q,\vec v)$
in $Q\times\+N^d$:
\begin{gather*}
  \drule{q,\vec v}{q_1,\vec v+\vec u}{unary}
\intertext{where ``$+$'' denotes component-wise addition in $\+N^d$, if
    $q\xrightarrow{\vec u}q_1$ is a rule (and implicitly $\vec
    v+\vec u$ has no negative component, i.e.\ is in $\+N^d$), and}
  \drule{q,\vec v}{q_1,\vec v\quad q_2,\vec v}{fork}\qquad
  \drule{q,\vec v_1+\vec v_2}{q_1,\vec v_1\quad q_2,\vec
    v_2}{split}\qquad \drule{q,\vec 0}{q_1,\vec 0}{full-zero}
\end{gather*}
if $q\to q_1\wedge q_2$, $q\to q_1+q_2$, and $q\rst q_1$ are rules of
the system, respectively, and ``$\vec 0$'' denotes the $d$-vector
$\tup{0,\dots,0}$ with zeroes on every coordinate.  Such a deduction
system can be employed either \emph{top-down} or \emph{bottom-up}
depending on the decision problem at hand (as with tree automata); the
top-down direction will correspond in a natural way to \emph{proof
  search} in propositional linear logic, i.e.\ will correspond to the
consequence to premises direction in the sequent calculus of
\autoref{sub-seq}.

\subsubsection{Example}
\label{ssub-example}
Let $\?A$ be an \abv\ 
with five states ($q_0, q_1, q_2, q_3, q_4$), 
of dimension $3$,
with six unary rules:
\begin{align*}
q_0 & \xrightarrow{\tup{0, 1, 0}} q_1 &
q_1 & \xrightarrow{\tup{0, -1, 2}} q_1 &
q_1 & \xrightarrow{\tup{0, 0, 0}} q_2 \\
q_2 & \xrightarrow{\tup{0, 1, -1}} q_2 &
q_3 & \xrightarrow{\tup{0, 0, 0}} q_0 &
q_3 & \xrightarrow{\tup{-1, -2, 0}} q_4,
\end{align*}
and with one split rule $q_2 \to q_3 + q_3$.
There are no fork rules and no full zero test rules in $\?A$,
and so it is a BVASS (see \autoref{ssub-bvass}).
A depiction of $\?A$ is in \autoref{f:BVASS},
where we write $c, d, d'$ for vector indices $1, 2, 3$ (respectively),
and specify unary rules in terms of increments and decrements.

\begin{figure}
  \ifomitproofs\begin{center}\else\centering\fi
  \begin{tikzpicture}[auto,node distance=1.4cm]
    \node[state](q0){$q_0$};
    \node[state,right=1.5cm of q0](q1){$q_1$};
    \node[state,right=1.2cm of q1,label=right:{$+$}](q2){$q_2$};
    \node[state,right=of q2](q3){$q_3$};
    \node[state,below=.7cm of q3](q4){$q_4$};
    \path[->,every node/.style={font=\footnotesize}]
          (q0) edge node{$\text{{+}{+}} d$} (q1)
          (q1) edge[loop below] node{$\begin{array}{c}
                                      \text{{-}{-}} d;%
                                      \text{{+}{+}} d';%
                                      \text{{+}{+}} d'
                                      \end{array}$} ()
          (q1) edge (q2)
          (q2) edge[loop below] node{$\begin{array}{c}
                                      \text{{-}{-}} d';%
                                      \text{{+}{+}} d
                                      \end{array}$} ()
          (q2) ++(.33,.33) edge[bend left] (q3)
          (q2) ++(.33,-.33) edge[bend right] (q3)
          (q3) edge[bend left=55] (q0)
          (q3) edge node{$\begin{array}{c}
                          \text{{-}{-}} c;%
                          \text{{-}{-}} d;%
                          \text{{-}{-}} d
                          \end{array}$} (q4);
    \draw[semithick] (q2) -- +(.33,.33)
                     (q2) -- +(.33,-.33);
    \draw[color=black!40] (q2) ++(.33,-.33) 
                          arc[start angle=-45,end angle=45,radius=.47];
  \end{tikzpicture}\vspace*{-1.5em}
  \ifomitproofs\end{center}\fi
  \caption{\label{f:BVASS} An example BVASS.}
\end{figure}

From state $q_0$ and with $c, d, d'$ initialised to $5, 0, 0$
(i.e., from a root node labelled by $(q_0, \tup{5, 0, 0})$),
$\?A$ can reach $q_2$ with $d, d'$ having values $2, 0$,
perform the split rule by dividing $c$ and $d$ almost equally
(i.e., branch to two nodes labelled by 
$(q_3, \tup{3, 1, 0})$ and $(q_3, \tup{2, 1, 0})$),
then in both threads reach $q_2$ again with $d, d'$ having values $4, 0$,
perform the split rule as before, and finally reach $q_4$
with $c, d, d'$ having values $1, 0, 0$ in one thread and
$0, 0, 0$ in the remaining three threads.  
See \autoref{fig-ex-run} for the corresponding deduction tree.
\begin{figure}\footnotesize
  \begin{prooftree}\rootAtTop
    \AxiomC{$q_4,\tup{1,0,0}$}
    \UnaryInfC{$q_3,\tup{2,2,0}$}
      \AxiomC{$q_4,\tup{0,0,0}$}
      \UnaryInfC{$q_3,\tup{1,2,0}$}
    \BinaryInfC{$q_2,\tup{3,4,0}$}
    \UnaryInfC{$q_2,\tup{3,3,1}$}
    \UnaryInfC{$q_2,\tup{3,2,2}$}
    \UnaryInfC{$q_2,\tup{3,1,3}$}
    \UnaryInfC{$q_2,\tup{3,0,4}$}
    \UnaryInfC{$q_1,\tup{3,0,4}$}
    \UnaryInfC{$q_1,\tup{3,1,2}$}
    \UnaryInfC{$q_1,\tup{3,2,0}$}
    \UnaryInfC{$q_0,\tup{3,1,0}$}
    \UnaryInfC{$q_3,\tup{3,1,0}$}
        \AxiomC{$q_4,\tup{0,0,0}$}
        \UnaryInfC{$q_3,\tup{1,2,0}$}
          \AxiomC{$q_4,\tup{0,0,0}$}
          \UnaryInfC{$q_3,\tup{1,2,0}$}
        \BinaryInfC{$q_2,\tup{2,4,0}$}
        \UnaryInfC{$q_2,\tup{2,3,1}$}
        \UnaryInfC{$q_2,\tup{2,2,2}$}
        \UnaryInfC{$q_2,\tup{2,1,3}$}
        \UnaryInfC{$q_2,\tup{2,0,4}$}
        \UnaryInfC{$q_1,\tup{2,0,4}$}
        \UnaryInfC{$q_1,\tup{2,1,2}$}
        \UnaryInfC{$q_1,\tup{2,2,0}$}
        \UnaryInfC{$q_0,\tup{2,1,0}$}
        \UnaryInfC{$q_3,\tup{2,1,0}$}
    \BinaryInfC{$q_2,\tup{5,2,0}$}
    \UnaryInfC{$q_2,\tup{5,1,1}$}
    \UnaryInfC{$q_2,\tup{5,0,2}$}
    \UnaryInfC{$q_1,\tup{5,0,2}$}
    \UnaryInfC{$q_1,\tup{5,1,0}$}
    \UnaryInfC{$q_0,\tup{5,0,0}$}
  \end{prooftree}
  \caption{\label{fig-ex-run}A deduction tree in the BVASS of
    \autoref{f:BVASS}.}
\end{figure}

Further reasoning, where we need to consider arbitrarily unequal splits, 
can show that $\?A$ has a deduction tree whose root
is labelled by $(q_0, \tup{m, 0, 0})$ and with the state label at
every leaf being $q_4$ if and only if $m \geq 4$.  In fact, $\?A$ is a
slightly simplified version of the BVASS $\?B_2$
in \autoref{sec-lowb}.%

\subsection{Decision Problems}
Given an \abv\ $\?A$ and a finite set of states $Q_\ell$, we denote by
a \emph{root judgement} ``$\?A,Q_\ell\jdg q,\vec v$'' the fact that
there exists a deduction tree $\?D$ in $\?A$ with root label $(q,\vec
v)$ and leaf labels in $Q_\ell\times\{\vec 0\}$.  We call $\?D$ a
reachability witness for $(q,\vec v)$.  Root judgements can be
derived through the following deduction rules, which will be handy
in proofs:
\begin{gather*}
   \frac{}{\?A,Q_\ell\jdg q_\ell,\vec 0}{\footnotesize\text{ if }q_\ell\in Q_\ell}
   \qquad
   \frac{\?A,Q_\ell\jdg q_1,\vec v+\vec u}
        {\?A,Q_\ell\jdg q,\vec v}{\footnotesize\text{ if }q\xrightarrow{\vec u}q_1}
   \qquad
   \frac{\?A,Q_\ell\jdg q_1,\vec 0}
        {\?A,Q_\ell\jdg q,\vec 0}{\footnotesize\text{ if }q\rst q_1}
   \\[1em]
   \frac{\?A,Q_\ell\jdg q_1,\vec v
         \quad
         \?A,Q_\ell\jdg q_2,\vec v}
    {\?A,Q_\ell\jdg q,\vec v}{\footnotesize\text{ if }q\to q_1\wedge q_2}
   \qquad
   \frac{\?A,Q_\ell\jdg q_1,\vec v_1
         \quad
         \?A,Q_\ell\jdg q_2,\vec v_2}
    {\?A,Q_\ell\jdg q,\vec v_1+\vec v_2}{\footnotesize\text{ if }q\to q_1+q_2}
\end{gather*}

\subsubsection{Reachability} 
Given an \abv\ $\?A$, a finite set of states $Q_\ell$, and a state
$q_r$, the \emph{reachability} problem asks whether $\?A,Q_\ell\jdg
q_r,\vec 0$; we call a reachability witness for $(q_r,\vec 0)$ more
simply a \emph{reachability witness}.

We will see in \autoref{sec-focus} that this reachability problem is
equivalent to provability in LL; the problem is also related to games
played over vectors of natural numbers, see \appref{sec-games}.  It
is however undecidable:
\begin{fact}\label{th-abvass-reach}
  Reachability in \abv\ is undecidable.
\end{fact}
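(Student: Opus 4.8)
The plan is to reduce from the halting problem for two-counter Minsky machines, which is undecidable. Given a Minsky machine $M$ with instructions $I_1,\dots,I_n$ (start $I_1$, halt $I_n$) over two counters, I would build an \abv\ $\?A$ of dimension $2$ whose states include one state $p_j$ per instruction plus a handful of auxiliary states, together with a finite set $Q_\ell$ of leaf states, so that the root judgement $\?A,Q_\ell\jdg p_1,\vec 0$ holds iff $M$ halts from the all-zero configuration.

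I would encode the instructions as follows. An increment ``$c_i:=c_i+1$; goto $I_k$'' becomes a unary rule $p_j\xrightarrow{+\vec e_i}p_k$, and the decrement branch of a test-and-decrement ``if $c_i=0$ then goto $I_k$ else $c_i:=c_i-1$; goto $I_l$'' becomes $p_j\xrightarrow{-\vec e_i}p_l$, which by the semantics can fire only when the $i$-th coordinate is at least $1$. The only delicate part is the zero branch: since \abv\ has only \emph{full} zero tests, I simulate a single-counter zero test with a fork $p_j\to p_k\wedge r_j$, where the $p_k$-child continues the simulation with the counters unchanged (as the fork rule requires) and the $r_j$-child is a ``verifier'' equipped with a unary rule $r_j\xrightarrow{-\vec e_{3-i}}r_j$ that drains the \emph{other} counter and a full zero test $r_j\rst s_j$ into a fresh leaf state $s_j$. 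A verifier subtree rooted at $(r_j,\vec v)$ can be completed to a leaf in $\{s_j\}\times\{\vec 0\}$ precisely when the $i$-th coordinate of $\vec v$ is $0$, because draining can zero the other coordinate but cannot touch $c_i$, and the full zero test then demands it be $0$ (if it is positive the verifier is stuck). Finally, the halt instruction gets rules $p_n\xrightarrow{-\vec e_1}p_n$, $p_n\xrightarrow{-\vec e_2}p_n$, and $p_n\rst t$ with $t$ a fresh leaf state; I then set $Q_\ell=\{t\}\cup\{s_j:I_j\text{ is a zero test}\}$ and ask whether $\?A,Q_\ell\jdg p_1,\vec 0$.

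Correctness is a standard back-and-forth. For the ``if'' direction, a halting run of $M$ yields a deduction tree whose spine follows the run step by step, hanging a verifier path off every fork taken at a correctly firing zero test and ending the spine at $p_n$ by draining both counters into $t$. For the ``only if'' direction, any finite deduction tree with root $(p_1,\vec 0)$ and all leaves in $Q_\ell\times\{\vec 0\}$ must have exactly this shape: no split rules occur in $\?A$, so the tree is a spine with verifier branches; reading the states along the spine gives an instruction sequence, the vectors give the counter contents, the decrement rules enforce positivity on ``else'' steps, and a completed verifier branch forces the relevant counter to be $0$ on every ``then'' step, so the spine encodes a genuine run of $M$ reaching $I_n$.

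I expect the only real point to get right — rather than a genuine obstacle — is the faithfulness of the zero-test gadget under the \emph{top-down tree} semantics: one must verify that a verifier branch is completable if and only if the tested counter is zero (so a fork cannot be ``cheated''), that the fork genuinely forces \emph{both} children to be witnessed, and that reachability witnesses are required to be finite (so non-halting runs cannot be simulated). All of this is routine from the root-judgement deduction rules above. As an alternative, the Fact also follows a posteriori from the equivalence between \abv\ reachability and LL provability established in \autoref{sec-focus} together with the undecidability of LL~\citep{lincoln92}, but the direct Minsky-machine simulation above is self-contained.
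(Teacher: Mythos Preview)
Your proposal is correct and follows the same core idea as the paper: reduce from Minsky-machine halting by simulating a zero test on $c_i$ via a fork $p_j\to p_k\wedge r_j$ whose verifier child drains the other counter and must reach a leaf with the zero vector.

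The one difference worth noting is that your verifier gadgets use full zero tests ($r_j\rst s_j$ and $p_n\rst t$), whereas the paper's construction does not: since reachability already requires every leaf label to lie in $Q_\ell\times\{\vec 0\}$, a plain unary rule $r_j\xrightarrow{\vec 0}q_\ell$ to a single leaf state suffices to enforce that the undrained coordinate is zero. This small change lets the paper state the reduction for the restricted model AVASS (no splits, no full zero tests), deriving the \abv\ case as an immediate corollary. Your version is perfectly sound for \abv, but slightly weaker in scope and uses machinery (full zero tests, multiple leaf states) that turns out to be unnecessary.
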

\begin{proof}
  Reachability is already undecidable in the more restricted model of
  AVASS, see \autoref{fc-avass-reach} below.
\end{proof}%

\subsubsection{Lossy Reachability}\label{s:lossy.reach}
In order to obtain decidability, we must weaken the \abv\ model or the
decision problem.  For the former, let us denote by $\vec e_i$ the
unit vector in $\+N^d$ with one on coordinate $i$ and zero everywhere else.
Then a \emph{lossy} \abv\ can be understood as featuring a rule
$q\xrightarrow{-\vec e_i}q$ for every $q$ in $Q$ and $0<i\leq d$.  We
rather define it by extending its deduction system with
\begin{gather*}
  \drule{q,\vec v}{q,\vec v-\vec e_i}{loss}
\end{gather*}
for every $q$ in $Q$ and $0<i\leq d$.  We write `$\jdg_\ell$' for root
judgements where losses can occur.
In terms of proof search in linear logic, losses will correspond to
structural weakening, which is the distinguishing feature of affine
linear logic.

\paragraph*{Top-Down Coverability}
An alternative way to see the reachability problem in lossy \abv\ is
to weaken the problem.  Let us define a variant of \abv\ that feature
\emph{full resets} instead of full zero tests: we denote in this case
rules $(q,q_1)$ in $T_z$ by $q\xrightarrow{:=\vec0}q_1$ and associate
a different semantics:
\begin{equation*}
  \drule{q,\vec v}{q_1,\vec 0}{full-reset}
\end{equation*}
We call the resulting model ABVASSr.  Given an ABVASSr~$\?A$, a state
$q_r$, and a finite set of states $Q_\ell$, the \emph{top-down
  coverability} or \emph{leaf coverability} problem asks whether there
exists a deduction tree $\?D$ with root label $(q_r,\vec 0)$ and such
that, for each leaf, there exists some $q_\ell$ in $Q_\ell$ and some
$\vec v$ in $\+N^d$ such that the leaf label is $(q_\ell,\vec v)$; we
then call $\?D$ a \emph{coverability witness}.

The reachability problem for lossy \abv\ is then equivalent to
top-down coverability for ABVASSr.  Observe indeed that the unary,
fork, and split rules are \emph{monotone}: if $\vec v\leq\vec w$ for
the product ordering, i.e.\ if $\vec v(i)\leq\vec w(i)$ for all
$0<i\leq d$, and a configuration $(q,\vec v)$ allows to apply a rule
and result in some configurations $(q_1,\vec v_1)$ and (possibly)
$(q_2,\vec v_2)$, then $(q,\vec w)$ allows to apply the same rule and
to obtain some $(q_1,\vec w_1)$ and $(q_2,\vec w_2)$ with $\vec
v_1\leq\vec w_1$ and $\vec v_2\leq\vec w_2$.  This means that losses
in an \abv\ can be applied as late as possible, either right before a
full zero test or at the leaves---which corresponds exactly to
top-down coverability for ABVASSr.

\subsubsection{Expansive Reachability}\label{ssub-rcov}
In order to model structural contractions during proof search, it is
natural to consider another variant of \abv\ called \emph{expansive}
\abv\ and equipped with the deduction rules
\begin{gather*}
  \drule{q,\vec v+\vec e_i}{q,\vec v+2\vec e_i}{expansion}
\end{gather*}
for every $q$ in $Q$ and $0<i\leq d$.  We write `$\jdg_e$' for root
judgements where expansions can occur.
This is a restriction over \abv\ since expansions can be
emulated through two unary rules $q\xrightarrow{-\vec
  e_i}q'\xrightarrow{2\vec e_i}q$.  Expansive reachability is not
quite dual to lossy reachability---we deal with \emph{increasing
  reachability} in \appref{sec-ack}.

\subsection{Restrictions}

Note that \abv\ generalise \emph{vector addition systems with states}
(VASS), which are \abv\ with only unary rules.  They also generalise
two ``branching'' extensions of VASS, which have been defined in
relation with propositional linear logic.  Since these restrictions do
not feature full zero tests, their lossy reachability problem is
equivalent to their top-down coverability problem.

\makeatletter\let\@period\@empty\makeatother
\subsubsection{Alternating VASS}\label{ssub-avass}
\ifomitproofs Alternating VASS \fi were originally
called ``and-branching'' counter machines by \citet{lincoln92}, and
were introduced to prove the undecidability of propositional linear
logic.  Formally, an AVASS is an \abv\ which only features unary and
fork rules, i.e.\ with $T_s=T_z=\emptyset$.
\begin{fact}[\citet{lincoln92}]\label{fc-avass-reach}
  Reachability in AVASS is undecidable.
\end{fact}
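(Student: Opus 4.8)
The plan is to reduce the halting problem for Minsky machines (two-counter machines) to reachability in AVASS, following the classical construction of \citet{lincoln92}. Recall that a Minsky machine has two counters and a finite-control program consisting of increment instructions, and decrement-or-branch-on-zero instructions; its halting problem (from counters initialised to zero) is undecidable. The idea is that the and-branching of an AVASS lets us faithfully encode a zero test: a counter $c$ holds value $0$ precisely when we can ``verify'' this by forking into a subtree that must empty out the remaining counters and reach an accepting leaf.

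First I would fix the AVASS dimension $d=2$, one coordinate per Minsky counter, and take the state set to consist of the instruction labels of the Minsky machine together with a handful of auxiliary states. Increment of counter $i$ at label $p$ with successor $p'$ becomes a unary rule $p\xrightarrow{\vec e_i}p'$; decrement of counter $i$ with successor $p'$ becomes $p\xrightarrow{-\vec e_i}p'$ (this rule is enabled only when the counter is positive, exactly as required). The delicate case is the zero-test branch of a decrement instruction ``if $c_i=0$ goto $p'$ else \ldots'': here I would use a fork rule $p\to q_{\mathrm{zero},i}\wedge p'$, where from $q_{\mathrm{zero},i}$ one can only drain the \emph{other} counter (a loop $q_{\mathrm{zero},i}\xrightarrow{-\vec e_j}q_{\mathrm{zero},i}$ for $j\neq i$) and then pass to a designated state $q_{\mathrm{acc}}\in Q_\ell$ with the zero vector. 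Since leaf labels must lie in $Q_\ell\times\{\vec 0\}$, this branch closes off successfully if and only if coordinate $i$ was already $0$ at the moment of the fork; any non-zero value there can never be removed, so the branch — and hence the whole deduction tree — fails. Finally, from the Minsky halting label I add a unary rule draining both counters into $q_{\mathrm{acc}}$, so that a halting configuration (with arbitrary counter contents) also closes off.

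The correctness argument then runs in two directions. For soundness, given an AVASS reachability witness rooted at $(p_0,\vec 0)$ with leaves in $\{q_{\mathrm{acc}}\}\times\{\vec 0\}$, one reads off, along the non-auxiliary spine, a halting run of the Minsky machine; the zero-test forks certify the zero tests, using the observation that coordinates are non-negative and can only be decreased by explicit $-\vec e_i$ moves. For completeness, a halting Minsky run is transcribed step by step into a deduction tree, spawning and immediately closing a verification subtree at each zero test. I expect the main obstacle to be purely a matter of bookkeeping rather than of ideas: one must make sure the auxiliary verification gadgets cannot ``cheat'' — i.e.\ that there is no way to reach an accepting leaf from $q_{\mathrm{zero},i}$ with the tested coordinate positive, and no way for the auxiliary states to leak back into the main simulation — which is handled by keeping the auxiliary state space disjoint and giving those states only the draining loops. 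Since this is exactly \citeauthor{lincoln92}'s and-branching counter machine construction, the cleanest route is simply to cite it; I would state the reduction explicitly only to the extent needed to keep the paper self-contained.
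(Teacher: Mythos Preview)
Your proposal is correct and follows essentially the same construction as the paper's proof idea: emulate a zero test on counter $c$ by a fork into the continuation state and an auxiliary state that can only drain the \emph{other} counters before passing to the leaf state, so that the auxiliary branch closes successfully iff $c$ was already zero. The paper's sketch is terser but identical in substance; your added discussion of the halting-state drain and the soundness/completeness bookkeeping just fills in details the paper leaves implicit.
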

\begin{proof}[Proof Idea]
  By a reduction from the halting problem in Minsky machines: note
  that a zero test $q\xrightarrow{c\stackrel{?}{=}0}q'$ on a counter
  $c$ can be emulated through a fork $q\to q'\wedge q_c$, where unary
  rules $q_c\xrightarrow{-\vec{e}_{c'}}q_c$ for all $c'\neq c$ allow
  to empty the counters different from $c$, and a last unary rule
  $q_c\xrightarrow{\vec{0}} q_\ell$ to the single target state allows
  to check that $c$ was indeed equal to zero.
\end{proof}
Alternating VASS do not allow to model LL proof search in full;
\citet{kanovich95} identified the matching LL fragment, called the
\mbox{$({!},{\oplus})$-Horn} fragment.

The complexity of the other basic reachability problems on AVASS is
known:
\begin{itemize}
\item motivated by the complexity of fragments of relevance logic,
  \citet{urquhart99} proved that expansive reachability is complete
  for Ackermannian time, and
\item motivated by the complexity of vector addition games (see
  \appref{sec-games}), \citet{courtois14} showed that lossy
  reachability is \textsc{2-ExpTime}-complete.
\end{itemize}

\subsubsection{Branching VASS\nopunct.}\label{ssub-bvass}  Inspired by the
correspondences between the \mbox{$\oc$-Horn} fragment of linear logic and
VASS unearthed by \citet{kanovich95}, \citet{degroote04} defined
BVASS---which they originally dubbed ``vector addition tree
automata''---as a model of counter machines that matches MELL.
Formally, a BVASS is an \abv\ with only unary and split rules, i.e.\
with $T_f=T_z=\emptyset$.  This model turned out to be equivalent to
independently defined models in linguistics~\citep{rambow94} and
protocol verification~\citep{verma05}; see \citep{schmitz10} for a
survey.

Whether BVASS reachability is decidable is an open problem, and is
inter-reducible with MELL provability.  \Citet{lazic10} proved the best
known lower bound to this day, which is \textsc{2-ExpSpace}-hardness.
Two related problems were shown to be \textsc{2-ExpTime}-complete
by \citet{demri12}, namely increasing reachability (see
\appref{sec-ack}) and boundedness.

\subsubsection{Alternating Branching VASS\nopunct.}\label{ssub-abvass}
\Citet{kopylov01} defined a one-player vector game, which matches
essentially the reachability problem in ABVASS, i.e.\ in \abv\ with
$T_z=\emptyset$.  The \emph{elementary} fragment of ILL defined by
\citet{larchey13} is another counterpart to ABVASS.

While allowing full zero tests is helpful in the reduction from
LL provability, they can be dispensed with at little expense.
Let us first introduce some notation.  If node $n$ is an ancestor of a
node $n'$ in a deduction tree $\?D$, and the labels of $n$ and $n'$
are the same, we write $\?D[n \leftarrow n']$ for the
\emph{shortening} of $\?D$ obtained by replacing the subtree of rule
applications rooted at $n$ by the one rooted at $n'$.  Observe that,
if $\?D$ is a reachability witness (resp.\ a coverability witness),
then $\?D[n\leftarrow n']$ is also a reachability witness (resp.\ a
coverability witness).
\begin{lemma}\label{lem-abvass}
  There is a logarithmic-space reduction from (lossy, resp.\
  expansive) \abv\ reachability to (lossy, resp.\ expansive) ABVASS
  reachability.
\end{lemma}
\begin{proof}
  Suppose $\?A$ is an \abv\ with set of states $Q$ and dimension $d$.

  For a logarithmic-space many-one reduction, the key observation is
  that, if there exists a witness for an instance of (lossy, resp.\
  expansive) reachability for $\?A$, then by repeated shortenings,
  there must be one in which, along every vertical path, the number of
  occurrences of full zero tests is at most $|Q| - 1$.

  It therefore suffices to decide the problem for an ABVASS $\?A^\dag$ 
  whose set of states is $\{1, \ldots, |Q|\} \times Q$, 
  whose dimension is $|Q| \cdot d$, and which simulates $\?A$
  up to $|Q| - 1$ full zero tests along any vertical path.
  In any state $(i, q)$, $\?A^\dag$ behaves like $\?A$ in state $q$,
  but using the $i$th $d$-tuple of its vector components.
  To simulate a full zero test $q\rst q'$ in $\?A$, 
  $\?A^\dag$ changes state from $(i, q)$ to $(i + 1, q)$,
  postponing the check that the $i$th $d$-tuple of vector components
  are zero until the leaves \ifomitproofs\relax\else of the deduction tree\fi.
\end{proof}
\begin{remark}[Polynomial Time Turing Reduction]
  There is a polynomial time Turing reduction to the same effect.  Its
  interest is that it preserves the dimension of the \abv.  Because
  the dimension is---by far---the most important source of complexity
  in our upper bounds, preserving it might be useful in some
  circumstances.

  Let us first define the
  set of \emph{root states} relative to a subset $X$ of $Q$
  by \begin{equation} \mathrm{Root}_{\?A}(X)\eqdef\{q\in
  Q\mid\?A,X\jdg q,\vec 0\} \end{equation} as the set of states $q$
  such that there exists a deduction in $\?A$ with root label $(q,\vec
  0)$ and leaf labels in $X\times\{\vec 0\}$.  The (lossy, resp.\
  expansive) reachability problem for $\tup{\?A,q_r,Q_\ell}$ then
  reduces to checking whether $q_r$ belongs to
  $\mathrm{Root}_\?A(Q_\ell)$.

  \ifomitproofs Writing $\?A'$ for the corresponding ABVASS,
  \else Let $\?A=\tup{Q,d,T_u,T_f,T_s,T_z}$.  Writing
  $\?A'$ for the ABVASS $\tup{Q,d,T_u,T_f,T_s,\emptyset}$, \fi
  we can compute $\mathrm{Root}_{\?A'}(X)$ using $|Q|$ calls to an
  oracle for (lossy, resp.\ expansive) ABVASS reachability.  Moreover,
  since $\mathrm{Root}_{\?A'}(X)\supseteq X$ is monotone, we can use a
  least fixed point computation that discovers root states according
  to the number of full zero tests along the branches of their
  reachability witnesses:
  \begin{equation}
    \mathrm{Root}_\?A(Q_\ell)=\mu
    X.\mathrm{Root}_{\?A'}(Q_\ell)\cup\mathrm{Root}_{\?A'}(X\cup T_z^{-1}(X))\,.
  \end{equation}
  This computation converges after at most $|Q|$ steps, and therefore
  works in polynomial time relative to the same oracle.
\end{remark}

\subsection{Computational Complexity}

\subsubsection{Non-Elementary Complexity Classes\nopunct.}
We will use in this paper two complexity classes \citep[see][]{arXiv/Schmitz13}:
\begin{align}
  \text{\textsc{Tower}}&\eqdef{\bigcup_{e\in\text{\textsc{FElem}}}}\text{\textsc{DTime}}\big(\mathrm{tower}(e(n))\big)
    \intertext{is the class of problems that can be solved with a
      deterministic Turing machine in time $\mathrm{tower}$ of some
      elementary function $e$ of the input, where
      $\mathrm{tower}(0)\eqdef 1$
      and $\mathrm{tower}(n+1)\eqdef2^{\mathrm{tower}(n)}$ defines
      towers of exponentials.
      Similarly,}
  \Ack&\eqdef{\bigcup_{p\in\text{\textsc{FPR}}}}\text{\textsc{DTime}}\big(\mathrm{Ack}(p(n))\big)
\end{align}
is the class of problems solvable in time $\mathrm{Ack}$ of some
primitive recursive function $p$ of the input size, where
``$\mathrm{Ack}$'' denotes the Ackermann function---any standard
definition of $\mathrm{Ack}$ yields the same complexity
class~\citep{arXiv/Schmitz13}.

Completeness for $\textsc{Tower}$ is understood relative to many-one
elementary reductions, and completeness for \Ack\ relative to many-one
primitive-recursive reductions.%

\subsubsection{\abv\ Complexity\nopunct.}
For a set $T_u$ of unary rules, we write $\maxm(T_u)$
(resp.\ $\maxp(T_u)$) for the largest absolute value of any negative
(resp.\ positive) integer in a vector in $T_u$, and
$\mathrm{max}(T_u)$ for their overall maximum.  We assume a binary
encoding of the vectors in unary rules, thus $\mathrm{max}(T_u)$ might
be exponential in the size of the \abv.  We can however reduce to
\emph{ordinary} \abv, i.e.\ \abv\ with $\vec u=\vec e_i$ or $\vec
u=-\vec e_i$ for some $0<i\leq d$ whenever $q\xrightarrow{\vec u}q_1$
is a unary rule:
\begin{lemma}\label{lem-ordinary}
  There is a logarithmic space reduction from reachability in (lossy,
  resp.\ expansive) \abv\ to reachability in (lossy, resp.\ expansive)
  ordinary \abv.
\end{lemma}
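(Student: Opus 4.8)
The plan is to replace each unary rule $q\xrightarrow{\vec u}q_1$ with $\vec u\in\+Z^d$ by a short chain of ordinary unary rules that realises the same net effect on the counters, one coordinate at a time, using fresh intermediate states. Concretely, suppose $\vec u=\tup{u_1,\dots,u_d}$. Introduce $d-1$ fresh states $r_1,\dots,r_{d-1}$ (private to this rule) and, for coordinate $i$, insert between $r_{i-1}$ and $r_i$ (with the conventions $r_0\eqdef q$ and $r_d\eqdef q_1$) a self-loop $r_{i-1}\xrightarrow{\pm\vec e_i}r_{i-1}$ repeated $|u_i|$ times followed by a step $r_{i-1}\xrightarrow{\vec 0}r_i$ — or, more carefully, to avoid a literal $\vec 0$ step and to keep the number of added states independent of $\mathrm{max}(T_u)$, keep a single fresh state $s$ per rule that carries a self-loop $s\xrightarrow{\pm\vec e_i}s$ for each $i$, entered by $q\xrightarrow{\vec 0}s$ and exited by $s\xrightarrow{\vec 0}q_1$. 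Either way the construction is local, adds $O(1)$ states and $O(d)$ rules per original rule, and is clearly computable in logarithmic space; the dimension and the branching rules $T_f,T_s,T_z$ are untouched.

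The key steps, in order, are: (1) fix the gadget above and define the ordinary \abv\ $\?A^\circ$; (2) argue soundness — any deduction in $\?A^\circ$ restricts to one in $\?A$ by contracting each gadget traversal back to the single rule it simulates, and the leaf labels are unchanged since the fresh states carry only self-loops and $\vec 0$-steps and thus cannot appear at a leaf of a reachability witness with leaves in $Q_\ell\times\{\vec0\}$; (3) argue completeness — any use of $q\xrightarrow{\vec u}q_1$ in $\?A$ with configurations $(q,\vec v)$ above $(q_1,\vec v+\vec u)$ is expanded into a path through the gadget, performing the increments before the decrements on each coordinate (and processing coordinates in an order that keeps every intermediate vector in $\+N^d$) so that no intermediate configuration has a negative component; (4) observe that the loss and expansion rules of the lossy/expansive variants are attached to every state uniformly, including the fresh ones, so they are available in $\?A^\circ$ exactly where needed and the equivalences $\jdg_\ell$ and $\jdg_e$ transfer verbatim.

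The one genuinely delicate point is step (3): feasibility of ordering the single-coordinate moves so that all intermediate counter vectors stay nonnegative. Because the gadget operates one coordinate at a time and we always do the $|u_i|$ increments of coordinate $i$ before its $|u_i'|$ decrements for any coordinate $i'$ we later touch — and critically, decrements on coordinate $i$ are applied only after all its increments — nonnegativity of the final vector $\vec v+\vec u\in\+N^d$ is enough to guarantee nonnegativity throughout: on each coordinate the value only goes up and then down, ending at $v_i+u_i\ge 0$, so it never dips below the final value, hence never below $0$. This is the observation that makes the naive per-coordinate simulation correct, and it is where the "$\vec v+\vec u$ has no negative component" side condition of the unary rule is used. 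The reductions for the lossy and expansive cases need only the extra remark in step (4), which is immediate since those semantics add rules attached uniformly to all states. Finally I would note that $\mathrm{max}(T_u)$ being exponential is unproblematic: the gadget's self-loops are traversed a number of times equal to $|u_i|$, which is just a choice made along a particular deduction, not something written into $\?A^\circ$, so $\?A^\circ$ has size polynomial (indeed logarithmic-space computable) in that of $\?A$.
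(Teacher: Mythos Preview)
Your gadget does not simulate the single rule $q\xrightarrow{\vec u}q_1$: a self-loop at~$s$ labelled $\vec e_i$ or $-\vec e_i$ can be taken any number of times in a deduction, so a traversal of the gadget realises many net effects besides~$\vec u$, and nothing in $\?A^\circ$ records that exactly $|u_i|$ iterations are required. Soundness (your step~(2)) therefore fails. Concretely, take $d=1$, a single unary rule $q_r\xrightarrow{-2}q_\ell$, and $Q_\ell=\{q_\ell\}$: the instance is negative in all three settings (the rule cannot fire from~$0$, and neither losses nor expansions help from the zero vector), yet in $\?A^\circ$ one enters the gadget, skips the $-\vec e_1$ self-loop altogether, and exits to $(q_\ell,0)$. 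The obvious repair---unrolling each self-loop into a chain of $|u_i|$ fresh states---is exactly what you set out to avoid, and rightly: it costs $\mathrm{max}(T_u)$ many states, which may be exponential in the input. With the dimension fixed at~$d$ there is no way around this, since forcing a net change of exactly $|u_i|$ between two control points means counting to~$|u_i|$, and without spare counters that counting has to be done in the control state.

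The paper therefore takes a different route: it \emph{increases the dimension}, replacing each of the $d$ coordinates by about $\log(\mathrm{max}(T_u))$ fresh coordinates that hold a binary decomposition of its value, and rewriting each unary rule to act on those coordinates according to the bits of the~$u_i$. This uses polynomially many new states and rules and is logspace-computable. The paper also notes a subtlety you pass over in step~(4): in the expansive case, expansions must first be encoded explicitly as unary rules before the binary re-encoding, since the per-coordinate expansion rule does not transport to the new representation the way losses do.
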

\begin{proof}[Proof Idea]
  The idea is to encode each of the $d$ coordinates of the original
  \abv\ into $\lfloor\log(\mathrm{max}(T_u)+1)\rfloor$ coordinates, and
  each unary rule to apply a binary encoding of $\vec u$ to those new
  coordinates; see for instance \citep{schmitz10} where this
  construction is detailed for BVASS.  The expansive case requires to
  first explicitly encode expansions as unary rules.
\end{proof}

\paragraph{Lossy Case\nopunct.} One of the main results of this paper is the following:
\begin{theorem}\label{th-bvass}
  Reachability in lossy BVASS and lossy \abv\ is
  \textsc{Tower}-complete.
\end{theorem}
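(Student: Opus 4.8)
The plan is to establish the two bounds for the two different models and then transfer them: since every BVASS is an \abv\ (see \autoref{ssub-bvass}), it suffices to prove the \textsc{Tower} upper bound for lossy \abv\ and the matching \textsc{Tower} lower bound for lossy BVASS, after which \textsc{Tower}-completeness of both follows.

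For the upper bound, the key structural fact---obtained exactly as in the proof of \autoref{lem-abvass}, by repeated shortenings $\?D[n\leftarrow n']$---is that if a reachability witness for a lossy \abv\ $\?A$ with state set $Q$ exists, then one exists carrying at most $|Q|-1$ full zero tests on every root-to-leaf path. One may moreover reduce, via \autoref{lem-ordinary}, to the case where all unary increments are unit vectors. The lossy model being monotone (the loss, unary, fork, split and full-zero-test rules all behave well under the product ordering on $\+N^d$; cf.\ \autoref{s:lossy.reach}), the segment of a path between two consecutive full zero tests then realises a controlled bad sequence over $\+N^d$ with linear control, whose length---and hence the counter values occurring on it---is bounded by a length-function theorem for Dickson's Lemma; stacking the at most $|Q|-1$ segments bounds the length of each path, and therefore, since forks and splits branch only binarily, the size of the whole witness tree, which a nondeterministic machine can then guess and verify. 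Because the number of zero tests per path is linear in the input (this is precisely what keeps the model away from the Ackermannian behaviour of lossy counter machines with unboundedly many tests) and the dimension is polynomial in the input, the resulting time bound is a tower of exponentials whose height is elementary in the input, so the problem lies in \textsc{Tower}; see \autoref{sec-upb}.

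For the lower bound I would reduce from a \textsc{Tower}-hard problem, for instance halting of a deterministic Turing (or Minsky) machine within space $\mathrm{tower}(n)$ on inputs of size $n$. The heart of the reduction is a hierarchy of BVASS gadgets $\?B_1,\?B_2,\dots$ generalising the example of \autoref{ssub-example}---indeed, $\?B_2$ is essentially the system $\?A$ of that section, whose threshold $4$ is $\mathrm{tower}(2)$---where $\?B_k$ has a deduction tree rooted at $(q,\tup{m,0,\dots,0})$ with all leaf states in the target set iff $m\geq\mathrm{tower}(k)$. Amplification is obtained by nesting the split rule: a rule $q\to q_1+q_2$ distributes a value between two subtrees, and combining it with a doubling loop whose output is certified by a recursive call to $\?B_{k-1}$ contributes one further exponential per level, so that $\?B_n$ produces and certifies a budget of size $\mathrm{tower}(n)$. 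With this budget available one simulates the space-bounded machine, generalising Lipton's construction for vector addition systems: the current configuration is stored in counters, each transition is realised by unary rules, and the exact arithmetic required (decrements, comparisons against the budget, simulated zero tests) is performed by splitting off and certifying portions of the budget with the threshold gadgets, the remaining bookkeeping being deferred to coverability checks at the leaves. Every such check is a lower bound on a counter value, so a loss can only make a run abort and never make an unfaithful run succeed; the reduction is therefore correct for the \emph{lossy} model, and being elementary it establishes \textsc{Tower}-hardness of lossy BVASS, hence of lossy \abv. This is carried out in \autoref{sec-lowb}.

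I expect the lower-bound construction to be the main obstacle: making the $\?B_k$ hierarchy robust in the lossy model, so that every large value is genuinely verified from below and cannot be forged using losses, and threading the $\mathrm{tower}(n)$ budget through a complete machine simulation without it leaking, both demand care. On the upper-bound side the remaining work is to make the length-function bound explicit and to confirm that the preliminary reductions---in particular the one to ordinary \abv---keep the dimension, and therefore the tower height, elementary in the input.
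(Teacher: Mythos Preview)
Your lower-bound sketch is essentially the paper's: the hierarchy $\?B_k$ with the threshold property, nesting a doubling-and-split pattern so that each level contributes one more exponential, and then a simulation of $\mathrm{tower}$-bounded Minsky machines using the $\?B_K$ gadget to certify the dual counter at each zero test. That part is fine.

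The upper-bound argument, however, has a real gap. You propose to bound the length of each root-to-leaf path (between consecutive full zero tests) by observing that, in a minimal coverability witness, the labels along such a path form a bad sequence for $(Q\times\+N^d,\leq)$ with linear control, and then to invoke a length-function theorem for Dickson's Lemma. That invocation does \emph{not} yield a \textsc{Tower} bound: over $\+N^d$ with $d$ part of the input, the known length functions for controlled bad sequences live at level roughly $F_{d+1}$ of the fast-growing hierarchy, and hence the resulting bound is Ackermannian, not a tower of exponentials of elementary height. This is precisely the route Urquhart took for LLW, and it is exactly what the paper \emph{improves upon}. Limiting the number of full zero tests per path to $|Q|-1$ is not what separates \textsc{Tower} from \textsc{Ackermann} here; the dimension is.

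The paper's upper bound is obtained by a completely different mechanism: a Rackoff-style induction on the dimension $d$. One fixes a threshold $B$ that is exponential in the bound obtained for dimension $d$; above the first node where some coordinate $j$ reaches $B$, all labels are $B$-bounded and a pigeonhole shortening suffices; below that node, one applies the induction hypothesis to the $d$-dimensional projection (dropping coordinate $j$) and then \emph{reinstates} coordinate $j$ by choosing, at every split rule, to divide its value as evenly as possible. The point is that a value starting at $B=2^{H}\cdot\maxm(T_u)$ and halved at most $H$ times never goes negative, so the reconstructed subtree is a valid deduction in dimension $d+1$. Each inductive step exponentiates the height bound once, giving a tower of height $d$ --- hence \textsc{Tower}. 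The balanced-split idea is the crux you are missing; without it there is no way to lift the $d$-dimensional witness back to $d+1$ dimensions, and without the Rackoff induction there is no way to get below \textsc{Ackermann}.
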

\begin{proof}
  The upper bound is proved in \autoref{sec-upb}.  We present the
  hardness proof in detail in \autoref{sec-lowb}.
\end{proof}\noindent
Note that \autoref{th-bvass} entails an improvement for BVASS
reachability over the \textsc{2-ExpSpace} lower bound of
\citet{lazic10}.%

\paragraph{Expansive Case\nopunct.}
Regarding expansive \abv, we can adapt the proofs of \citet{urquhart99}
for expansive AVASS and the relevance calculus LR+ to show:
\begin{restatable}{theorem}{abvasscov}\label{th-abvass-cov}
  Reachability in expansive AVASS and expansive \abv\ is
  \Ack-complete.
\end{restatable}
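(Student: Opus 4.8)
The plan is to prove the two bounds by rather different means: \Ack-hardness already for expansive AVASS, by adapting \citeauthor{urquhart99}'s lower bound for the relevance calculus LR+, and membership in \Ack\ for the whole class of expansive \abv, by a curtailment argument on reachability witnesses combined with the length function theorems for Dickson's Lemma \citep{FFSS-lics2011,arXiv/Schmitz13}.

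For the upper bound I would first reduce to \emph{ordinary} expansive \abv\ using \autoref{lem-ordinary}, so that along any vertical path of a deduction tree the $\ell_1$-norm of the vector component increases by at most one per rule application (expansions and ordinary unary rules change it by one, forks leave it unchanged, splits and full zero tests never increase it). The key point---mirroring how the structural contraction rule is handled in \citeauthor{urquhart99}'s analysis of LR+---is to refine the product ordering on configurations so as to record which coordinates are still \emph{active}: declare $(q,\vec v)\preceq(q',\vec v')$ iff $q=q'$, $\vec v\le\vec v'$, and $\vec v,\vec v'$ have the same support. This $\preceq$ is a well-quasi-order, being a finite disjoint union---one component per pair consisting of a state and a support---of copies of $(\+N_{\ge 1}^k,\le)$ with $k\le d$, and its maximal order type stays below $\omega^{d+1}$. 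The relevant surgery is: whenever, on a vertical path of a witness $\?D$, a node $n$ labelled $(q,\vec v)$ is an ancestor of a node $n'$ labelled $(q,\vec v')$ with $(q,\vec v)\preceq(q,\vec v')$, one may replace the subtree of $\?D$ rooted at $n$ by a chain of expansion steps leading from $(q,\vec v)$ down to $(q,\vec v')$---legal exactly because $\vec v\le\vec v'$ and the two vectors share their support---followed by the subtree of $\?D$ rooted at $n'$; as the label at $n$ does not change, nothing above $n$ is affected and the outcome is again a reachability witness. Exhaustively applying this curtailment (after a routine normalisation ensuring strict progress) yields a witness in which every vertical path, read from the root, is a bad sequence for $\preceq$ with at most linear norm growth; by the length function theorem its length is then bounded by an Ackermannian function of the input size, since $\omega^{d+1}$ ranges cofinally in $\omega^{\omega}$, the index of \Ack. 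A binary-branching tree of Ackermann-bounded height has Ackermann-bounded size, and a search for a witness of that size places reachability in \Ack.

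For the lower bound it is enough to treat expansive AVASS, which are the expansive \abv\ with $T_s=T_z=\emptyset$. Here I would follow \citet{urquhart99}: his \Ack-hardness proof for LR+ (which, recall from \autoref{sec-llw}, is IMALLC without $\top$) amounts, through the dictionary between bottom-up proof search and counter-machine runs, to a reduction into reachability in expansive and-branching counter machines, and such machines are simulated by expansive AVASS, zero tests being realised by forks exactly as in \autoref{fc-avass-reach}. Concretely one reduces from an \Ack-complete problem such as the halting problem of a Minsky machine whose budget counter is bounded by $\mathrm{Ack}(n)$. The construction has two ingredients: (i) a gadget that \emph{robustly computes} $\mathrm{Ack}(n)$, using expansion to double counter values and and-branching to unfold the recursion $\mathrm{Ack}(k{+}1,m{+}1)=\mathrm{Ack}(k,\mathrm{Ack}(k{+}1,m))$---one branch carrying a value forward while its sibling certifies the recursive call---and (ii) a faithful AVASS simulation of the Minsky machine running inside the budget thus produced. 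The assembled expansive AVASS then admits a reachability witness iff the Minsky machine halts, and the reduction is plainly primitive recursive, yielding \Ack-hardness; it transfers to expansive \abv\ verbatim.

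The step I expect to be the genuine obstacle is the \emph{robustness} of ingredient (i) above: since expansions may fire spuriously and forks may be taken at unintended moments, the gadget must be engineered so that no deduction tree can exploit this slack to fake a larger budget---every extra expansion and every mistimed fork must lead to a dead end. This is precisely the delicate combinatorial core of \citeauthor{urquhart99}'s argument, and the part that has to be redone with care in the \abv\ vocabulary. The upper bound, by contrast, is essentially mechanical once the support-refined well-quasi-order and the curtailment-by-expansion observation are in hand, modulo invoking the length function theorem off the shelf and the minor normalisation noted above.
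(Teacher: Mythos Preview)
Your approach is essentially the paper's: the same support-refined wqo (the paper writes $(q,\vec v)\sqsubseteq(q',\vec v')$ iff $q=q'$, $\vec v\le\vec v'$, and $\sigma(\vec v)=\sigma(\vec v')$), the same curtailment-by-expansion idea, the same appeal to the length function theorems of \citep{FFSS-lics2011}, and the same deferral of the lower bound to \citet{urquhart99}. The paper preprocesses via \autoref{lem-abvass} (eliminating full zero tests) rather than \autoref{lem-ordinary}, but that is cosmetic.

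One step does need tightening. After your curtailment, a vertical path is \emph{not} a bad sequence for $\preceq$: the expansion chains you graft in are $\preceq$-increasing by construction. What you presumably want is that the subsequence of labels at \emph{non-expansion} nodes be bad---but then the control hypothesis of the length function theorem is in danger, since between two consecutive such nodes the norm may jump by the length of the intervening expansion chain, which you have not bounded. The paper resolves this (following the treatment of the increasing case in \autoref{sub-incr}) by delaying every expansion until immediately before a unary rule and absorbing it into a single \emph{pseudo-expansive} step, admissible only when $\sigma(\vec u_-)\subseteq\sigma(\vec v)$; the resulting witness has no explicit expansion steps, each step increases the infinity norm by at most $\maxm(T_u)+\maxp(T_u)$, and the label sequence along every branch is then both bad and $(g,0)$-controlled. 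Your ``routine normalisation'' should be exactly this pseudo-expansive repackaging, not merely a termination guarantee for the curtailment.
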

\begin{proof}
  The lower bound is due to \citet{urquhart99}, who proved hardness of
  expansive AVASS reachability by a direct reduction from the halting
  problem of Minsky machines with counter values bounded by the
  Ackermann function.  The upper bound can be proved following
  essentially the same arguments as \citeauthor{urquhart99}'s for
  LR+, using length function theorems for Dickson's Lemma~\citep[see
  e.g.][]{FFSS-lics2011}.  See \appref{sec-ack} for a proof.
\end{proof}
\autoref{th-abvass-cov} allows to derive the same \textsc{Ackermann}
bounds for provability in MALLC and LLC, see \appref{sec-ack}.

\subsection{\abv\ Games}\label{sec-games}
Reachability problems in \abv\ can also be understood in a
game-theoretic setting.  Let us fix a reachability instance
$\tup{\?A,q_r,Q_\ell}$ and consider the following zero-sum two players
game over the infinite arena $Q\times\+N^d$, where $d$ is the
dimension of $\?A$: its two players are called \emph{Controller} and
\emph{Environment}.  The game starts in the configuration $(q_r,\vec
0)$.  In a current configuration $(q,\vec v)$, Controller chooses a
rule of $\?A$, which allows to apply one of the deduction rules of
$\?A$, or loses if none applies.  In the case of a unary or full zero
test rule, the current configuration changes to $(q_1,\vec v+\vec u)$
and $(q_1,\vec 0)$ respectively.  In the case of a fork rule,
Environment chooses between a move to $(q_1,\vec v)$ or $(q_2,\vec
v)$.  In the case of a split rule, Controller furthermore chooses two
vectors $\vec v_1,\vec v_2$ in $\+N^d$ with $\vec v_1+\vec v_2=\vec v$
and Environment chooses between a move to $(q_1,\vec v_1)$ or a move to
$(q_2,\vec v_2)$.

The objective of Controller is to reach a configuration $(q_\ell,\vec
0)$ with $q_\ell$ in $Q_\ell$; the objective of Environment is to
prevent it.  It is easy to see that Controller has a winning strategy
if and only if the original reachability instance was positive.

Increasing, expanding, or lossy reachability are straightforward to
handle in this game setting.  Interestingly, in the case of lossy
reachability, we can take the full-reset semantics for $T_z$, and
Controller's objective can then be restated as reaching $(q_\ell,\vec
v)$ for some $q_\ell$ in $Q_\ell$ and some vector $\vec v$ in $\+N^d$,
i.e.\ as a state reachability objective.  This game view is related to
\emph{multi-dimensional energy
  games}~\citep{brazdil10,chatterjee10,abdulla13}, which are played on
AVASS (defined in \autoref{ssub-avass}).

\section{Relationships Between LL and \abv}\label{sec-focus}
\subsection{From LL to \abv}
We proceed in two steps to show a reduction from LL provability to
\abv\ reachability: first, in \autoref{sub-ll-ill}, we recall a
well-known reduction from LL provability to ILZ provability, and
second, in \autoref{sec-ill}, we exhibit a reduction from ILZ
provability to \abv\ reachability.  The outcome will thus be:
\begin{proposition}\label{prop-ll-abvass}
  There are polynomial space reductions:
  \begin{enumerate}
  \item from (affine, resp.\ contractive) LL provability to (lossy,
    resp.\ expansive) \abv\ reachability,
  \item from (affine, resp.\ contractive) MELL provability to (lossy,
    resp.\ expansive) BVASS$_{\vec 0}$ reachability.
  \end{enumerate}
\end{proposition}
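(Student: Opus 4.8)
The plan is to follow the two-step route announced just above: first recall in \autoref{sub-ll-ill} the well-known embedding of (affine, resp.\ contractive) LL into (affine, resp.\ contractive) ILZ, and then, in \autoref{sec-ill}, build a reduction from (affine, resp.\ contractive) ILZ provability to (lossy, resp.\ expansive) \abv{} reachability. The first step works with cut-free negation-normal-form sequents: a one-sided sequent $\vdash A_1,\dots,A_n$ is translated by keeping all formulas but one on the left (modulo the $(\cdot)^\bot$ dualities and the $A\multimap\bot$ encoding of negation needed to stay intuitionistic), and one checks that each LL rule becomes a bounded derivation in ILZ, that structural weakening (resp.\ contraction) becomes structural weakening (resp.\ contraction), and that MELL is sent into the $\{{\multimap},{\otimes},\mathbf 1,\bot,\oc\}$-fragment of ILZ. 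The presence of $\bot$ in LL is exactly why the target is ILZ rather than ILL. This step is logarithmic space.

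For the second step, fix the ILZ endsequent and let $\Phi$ be its set of subformulas; by cut elimination and the subformula property it suffices to look for a cut-free proof all of whose formulas lie in $\Phi$, and $|\Phi|$ is linear in the input. I would take an \abv{} of dimension $d=|\Phi|$ (up to auxiliary coordinates) whose coordinate $B$ records the multiplicity of $B$ in the left-hand context, with the control state recording the (at most one) formula on the right; a reachability witness is then precisely a cut-free ILZ proof read top-down, root $=$ endsequent and leaves $=$ axioms. Concretely: the left rules that decompose a context formula, together with $\mathrm R_\multimap$, $\mathrm L_{\mathbf 1}$, $\mathrm R_\bot$, $\oc\mathrm D$, $\oc\mathrm W$, $\oc\mathrm C$, become unary rules that increment/decrement the relevant coordinates; $\mathrm R_\otimes$ and $\mathrm L_\multimap$, which split the context, become split rules; $\mathrm R_\with$ and $\mathrm L_\oplus$, which copy the context into two goals, become fork rules; and the axioms ($\mathrm{init}$, $\mathrm R_{\mathbf 1}$, $\mathrm L_\bot$, and $\mathrm R_\top$ after first decrementing an arbitrary context away) are implemented by moving to a single leaf state $q_\ell$ guarded by a full zero test, so that the test certifies that the context has been entirely consumed. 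Structural weakening becomes the loss rules and structural contraction the expansion rules, which handles the affine and contractive variants; since MELL has no additive connectives, no fork rules are produced and one lands in BVASS$_{\vec 0}$.

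The main obstacle is the promotion rule $\oc\mathrm P$, deriving $\oc\Gamma\vdash\oc A$ from $\oc\Gamma\vdash A$: its side condition demands that the \emph{whole} current context consist of $\oc$-formulas, i.e.\ a \emph{partial} emptiness check, which a full zero test does not offer directly. This is where full zero tests earn their place in the model. In the \abv{} case I would encode $\oc\mathrm P$ by a fork into (i) a verification branch that may freely decrement every $\oc$-coordinate and then performs a full zero test into $q_\ell$ — succeeding exactly when the non-$\oc$ part of the context is empty — and (ii) a branch that retains the current configuration and continues with right-hand formula $A$; since the fork hands an identical copy of the configuration to the verification branch, nothing can be hidden in the continuing branch, which is what makes the encoding sound. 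Soundness and completeness are then proved by mutual induction between cut-free proofs and reachability witnesses, using the elementary observation that the multiplicity of an $\oc$-formula in a context is irrelevant (weakening and contraction make $\oc A,\oc A\vdash\cdots$ and $\oc A\vdash\cdots$ interderivable). For BVASS$_{\vec 0}$ fork rules are unavailable, so I would instead store in the control state the \emph{set} of $\oc$-subformulas occurring in the context — legitimate precisely because $\oc$-multiplicities are irrelevant — after which $\oc\mathrm P$ reduces to an honest full zero test of the remaining, purely non-$\oc$, coordinates; this enlarges the state space exponentially but is still carried out in polynomial work space. Composing the two steps gives the claimed polynomial space reductions.
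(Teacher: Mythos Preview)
Your two-step plan and your BVASS$_{\vec 0}$ construction (storing the set of $\oc$-subformul\ae\ in the control state so that promotion becomes a genuine full zero test) are exactly what the paper does. The gap is in your \emph{alternative} treatment of promotion for the full \abv{} case via a fork. That encoding is sound in the plain and in the expansive variants, but it breaks in the lossy (affine) variant: once losses are available, the verification branch can discard the non-$\oc$ formul\ae\ before the full zero test and thereby ``succeed'' regardless of the actual context, while the continuation branch still carries those formul\ae. Concretely, take the goal $p\multimap\oc p$, i.e.\ the sequent $p\vdash\oc p$: your lossy \abv{} admits a reachability witness---the verification branch loses $p$ and passes the zero test, the continuation branch handles $p\vdash p$ by \textit{init}---yet $p\vdash\oc p$ is not provable in ILZW. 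So the reduction is unsound precisely in the affine case that the proposition is about.

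The paper sidesteps this by using your BVASS$_{\vec 0}$ trick \emph{uniformly}: the $\oc$-part of the context lives in the state in all three cases, so promotion is a single in-line full zero test on the remaining (non-$\oc$) coordinates, and any loss applied to pass that test is also felt by the continuation. Since you already have that construction in hand, the fix is simply to use it for the \abv{} case as well; the exponential blow-up of the state space is harmless for a polynomial-\emph{space} reduction, and indeed the paper accepts it for exactly that reason.
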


\subsubsection{From LL to ILZ}\label{sub-ll-ill}
\label{sec-ll-ill}
The Kolmogorov translation of classical logic into intuitionistic
logic by double negation can be adapted to linear logic~\citep{troelstra92}:
\begin{fact}\label{fc-ll-ill}
  There is a polynomial time reduction from (affine, resp.\
  contractive) LL provability to (affine, resp.\ contractive) ILZ
  provability.
\end{fact}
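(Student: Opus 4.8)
The plan is to use the standard negative (double-negation) translation of classical linear logic into intuitionistic linear logic with $\bot$, in the spirit of \citet{troelstra92}; the only point really specific to our purposes is that the translation runs in polynomial time, and this is immediate because it blows up each formula only by a constant factor. Concretely, I would work with formulas in negation normal form and define, by induction on $A$, an ILZ formula $A^\circ$: on literals, $a^\circ \eqdef a$ and $(a^\bot)^\circ \eqdef a\multimap\bot$; the translation commutes with $\otimes$, $\oplus$, $\with$, $\oc$ and with all the units; and the two connectives ILZ lacks are removed by the de Morgan clauses $(A\parr B)^\circ \eqdef ((A^\circ\multimap\bot)\otimes(B^\circ\multimap\bot))\multimap\bot$ and $(\wn A)^\circ \eqdef (\oc(A^\circ\multimap\bot))\multimap\bot$. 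A one-sided sequent $\vdash A_1,\dots,A_n$ is then sent to the two-sided ILZ sequent $(A_1^\bot)^\circ,\dots,(A_n^\bot)^\circ\vdash\bot$. Since $|A^\circ| = O(|A|)$ and $A^\bot$ is computed in linear time, this map is computable in logarithmic space, hence a fortiori in polynomial time, and it sends the structural rules $(W)$ and $(C)$ to left weakening and left contraction of ILZ, so the affine and contractive cases come out uniformly with the classical one.

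The first half of the correctness argument is soundness: if LL (resp.\ LLW, LLC) proves $\vdash\Gamma$, then ILZ (resp.\ ILZW, ILZC) proves the translated sequent. Here I would first invoke cut elimination, available for all three calculi, so that it suffices to simulate a cut-free derivation; in such a derivation $init$ may moreover be taken atomic, and atomic $init$ translates to the immediately provable $a\multimap\bot,\ a\vdash\bot$. Every remaining rule is then simulated by a small ILZ derivation: the additive, unit and exponential rules correspond directly to the matching ILZ rules once $\circ$ is unfolded (for instance $\wn$D, $\wn$W, $\wn$C become $\oc$D, $\oc$W, $\oc$C because $(\wn B)^\bot = \oc B^\bot$, so a $\wn$-formula ends up on the left as $\oc(B^\bot)^\circ$), while the $\otimes$ rule and promotion need a short detour through the introduction rules for $\multimap$ and $\otimes$ (and through $\oc$P for promotion, which applies precisely because a context $\wn\Delta$ becomes a genuine $\oc$-context on the left). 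The structural rules $(W)$, $(C)$ of LLW, LLC are matched verbatim by left weakening and contraction. One cannot simulate $cut$ itself through this translation — the ILZ sequent $(A^\bot)^\circ\multimap\bot \vdash A^\circ$ is in general not provable — which is exactly why one passes through cut-free proofs.

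The second half is faithfulness. Suppose ILZ (resp.\ ILZW, ILZC) proves $(A_1^\bot)^\circ,\dots,(A_n^\bot)^\circ\vdash\bot$. I would push this proof through the standard sound embedding of two-sided intuitionistic sequents into one-sided classical ones, which sends $B_1,\dots,B_m\vdash C$ to $\vdash B_1^\bot,\dots,B_m^\bot,C$ and turns each ILZ rule (and each structural rule of ILZW, ILZC) into an LL- (resp.\ LLW-, LLC-) derivable rule; this yields an LL proof of $\vdash((A_1^\bot)^\circ)^\bot,\dots,((A_n^\bot)^\circ)^\bot,\bot$, hence of $\vdash((A_1^\bot)^\circ)^\bot,\dots,((A_n^\bot)^\circ)^\bot$ since the rule for $\bot$ is invertible. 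It then remains to observe, by a further straightforward induction on $B$, that $B^\circ$ read back inside LL is provably equivalent to $B$ — the inductive steps only use $X^{\bot\bot}=X$, $X\parr\bot\equiv X$, congruence of $\equiv$ under the connectives, and de Morgan, all derivable from the core rules and hence valid in LLW and LLC as well. Applying this with $B = A_i^\bot$ gives $((A_i^\bot)^\circ)^\bot \equiv A_i$, so the proof just obtained is in fact a proof of $\vdash A_1,\dots,A_n$, that is, of $\vdash\Gamma$.

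I expect the main difficulty to be the uniformity over the three calculi on the faithfulness side rather than any single calculation: one must check both that the embedding of ILZW into LLW and of ILZC into LLC really is sound — the structural rules have to line up, adding or deleting a formula on the left matching adding or deleting its dual on the right — and that the equivalences $B^\circ\equiv B$ are derived without ever invoking weakening or contraction, so that a single induction serves for LL, LLW and LLC at once. Everything else (the translation, its complexity bound, the atomic-$init$ reduction, and the rule-by-rule soundness simulation) is routine.
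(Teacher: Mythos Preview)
Your argument is correct and follows the same double-negation-translation idea as the paper, but it is organised differently in a way that is worth noting.  The paper simply invokes Troelstra's translation $A^k$ for the base LL case and then, for the affine and contractive variants, works in the two-sided calculus: there the right structural rules do not translate to left structural rules, so the paper must appeal to the ILZ-level stability $\vdash_{\text{ILZ}}((A^k\multimap\bot)\multimap\bot)\multimapboth A^k$ to recover the desired consequent.  Your choice to send the one-sided sequent $\vdash A_1,\dots,A_n$ to $(A_1^\bot)^\circ,\dots,(A_n^\bot)^\circ\vdash\bot$ puts \emph{everything} on the left, so the structural rules $(W)$ and $(C)$ become literal left weakening and left contraction in ILZW and ILZC, and no double-negation stability is needed on the ILZ side.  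Likewise, your faithfulness direction---embed ILZ back into one-sided LL and use the purely classical equivalence $B^\circ\equiv B$, derived from $X\parr\bot\equiv X$ and de~Morgan only---is more elementary than tracking Troelstra's argument; the price is that one must check the rule-by-rule LL simulation of ILZ (including the ILZW/ILZC structural rules), which is routine.  In short: same translation philosophy, but your packaging makes the uniform treatment of the three calculi cleaner, at the cost of spelling out what the paper delegates to a citation.
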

\begin{proof}[Proof Idea]
  A translation of classical linear formul\ae\ $A$ into intuitionistic
  ones $A^k$ is provided by
  \citet[\subsectionautorefname~5.12]{troelstra92}, which satisfies
  \begin{equation}\label{eq-ll-ill}
    \Gamma\prvs{LL}A\quad\text{iff}\quad\Gamma^k\prvs{ILZ}A^k\;.
  \end{equation}
  The proof of this fact uses in particular that, for all $A^k$,
  \begin{equation}\label{eq-quadneg}
    \prvs{ILZ}((A^k\multimap\bot)\multimap\bot)\multimapboth A^k\;.
  \end{equation}

  We merely need to check that the result also holds in presence of
  structural weakenings or structural contractions.  Because the
  sequent calculi for ILZW and ILZC are restrictions of those for LLW
  and LLC, we only need to exhibit a translation of the structural
  rules of the two-sided sequent calculus for LLW and LLC into ILZW
  and ILZC proofs.

  For structural weakenings, the two-sided sequent calculus for LLW
  has rules
  \begin{equation*}
    \drule{\Gamma\vdash\Delta}{\Gamma,A\vdash\Delta}{L$_W$}
    \qquad
    \drule{\Gamma\vdash\Delta}{\Gamma\vdash A,\Delta}{R$_W$}
  \end{equation*}
  In order to prove the affine version of \eqref{eq-ll-ill}, for
  (L$_W$) we need to restrict ourselves to $\Delta=B$ a single
  formula, and see that this is exactly the structural weakening of
  ILZW.  For (R$_W$) we need to restrict ourselves to an empty
  $\Delta$: then $\Gamma\prvs{LLW}$
  if and only if $\Gamma\prvs{LLW}\bot$ and\vspace*{-.5em}
  \begin{prooftree}
    \AxiomC{$\Gamma^k\prvs{ILZW}\bot$}
    \LeftLabel{{\tiny (L$_W$)}}
    \UnaryInfC{$\Gamma^k,A^k\multimap\bot\prvs{ILZW}\bot$}
    \LeftLabel{{\tiny (R$_\multimap$)}}
    \UnaryInfC{$\Gamma^k\prvs{ILZW} (A^k\multimap\bot)\multimap\bot$}
  \end{prooftree}
  which, together with 
  $\prvs{ILZW}((A^k\multimap\bot)\multimap\bot)\multimapboth
  A^k$ for all $A^k$ by \eqref{eq-quadneg}, allows to conclude.

  For structural contractions, the two-sided sequent calculus for LLC
  has rules
  \begin{equation*}
    \drule{\Gamma,A,A\vdash\Delta}{\Gamma,A\vdash\Delta}{L$_C$}
    \qquad
    \drule{\Gamma\vdash A,A,\Delta}{\Gamma\vdash A,\Delta}{R$_C$}
  \end{equation*}
  Again, for (L$_C$) we have $\Delta=B$ a single formula and it turns
  out to be exactly the structural contraction of ILZC.  For (R$_C$),
  necessarily $\Delta$ is empty.  Then $\Gamma\prvs{LLC}A,A$ if and
  only if $\Gamma,A\multimap\bot\prvs{LLC}A$ and\vspace*{-.5em}
  \begin{prooftree}
    \AxiomC{$\Gamma^k,A^k\multimap\bot\prvs{ILZC}A^k$}
    \AxiomC{}
    \RightLabel{{\tiny (L$_\bot$)}}
    \UnaryInfC{$\bot\prvs{ILZC}$}
    \LeftLabel{{\tiny (L$_\multimap$)}}
    \BinaryInfC{$\Gamma^k,A^k\multimap\bot,A^k\multimap\bot\prvs{ILZC}$}
    \LeftLabel{{\tiny (R$_\bot$)}}
    \UnaryInfC{$\Gamma^k,A^k\multimap\bot,A^k\multimap\bot\prvs{ILZC}\bot$}
    \LeftLabel{{\tiny (L$_C$)}}
    \UnaryInfC{$\Gamma^k,A^k\multimap\bot\prvs{ILZC}\bot$}
    \LeftLabel{{\tiny (R$_\multimap$)}}
    \UnaryInfC{$\Gamma^k\prvs{ILZC}(A^k\multimap\bot)\multimap\bot$}
  \end{prooftree}
  which, together with $\prvs{ILZC}((A^k\multimap\bot)\multimap\bot)\multimapboth A^k$
  for all $A^k$ by \eqref{eq-quadneg}, allows to conclude.
\end{proof}

\subsubsection{From ILZ to \abv}\label{sec-ill}
The key property of the sequent calculi for ILZ, ILZW, and ILZC we
exploit in our reduction to \abv\ is the subformula property of
cut-free proofs.

Let us consider an instance of the provability problem for ILZ,
i.e.\ some formula $F$.  The subformula property allows us to view a
sequent $\oc\Psi,\Delta\vdash A$ appearing in a cut-free proof of
$\vdash F$ as a triple consisting of a multiset $\oc\Psi$ of
$\oc$-guarded subformul\ae, a multiset $\Delta$ of
subformul\ae, and a subformula $A$---all subformul\ae\ of the target
formula $F$.  Thanks to logical weakening ($\oc$W) and logical
contraction ($\oc$C), we will even be able to treat $\oc\Psi$ as a
\emph{set}.

Let us write $S$ for the set of subformul\ae\ of $F$ and
$S_\oc\subseteq S$ for its $\oc$-guarded subformul\ae.  We define from
$F$ an \abv\ $\?A^I_F$ of dimension $|S|$ that includes
$2^{S_\oc}\times (S\uplus\{.\})$ in its state space, where ``$.$'' is
a fresh symbol.  A configuration of $\?A^I_F$ in $2^{S_\oc}\times
S\times\+N^S$ encodes a sequent $\oc\Psi,\Delta\vdash A$ as
$(\sigma(\oc\Psi),A,\Delta)$, where $\sigma$ associates to a multiset
its \emph{support}, i.e.\ its set without duplicates---note that we
completely identify multisets in $\+N^S$ with vectors in $\+N^{|S|}$.
A configuration in $(2^{S_\oc}\times\{.\}\times\+N^S)$ encodes a
sequent $\oc\Psi,\Delta\vdash$ as $(\sigma(\oc\Psi),.,\Delta)$.  We
also include a distinguished leaf state $q_\ell$ in the state space of
$\?A^I_F$.  The rules of $\?A^I_F$ implement the rules of ILZ on the
encoded configurations in a straightforward manner---they rely on
additional intermediate states for this---and are depicted in
\autoref{fig-ill-abvass}.  An additional \emph{store} rule allows to
move an ``of-course'' $\oc A$ formula from counters to state storage;
we could compile it into the other rules at the expense of a larger
number of cases.

\begin{figure}
  \centering
  \begin{tikzpicture}[auto,node distance=.65cm]
    \node[state,label=left:{init:},label=below:{{\footnotesize $A\not\in S_\oc$}}](init0){$\emptyset,A$};
    \node[state,accepting by double,right=of init0](init1){$q_\ell$};
    \node[rstate,right=of init1](init2){$\{\oc A\},\oc A$};
    \node[state,accepting by double,right=of init2](init3){$q_\ell$};    
    \path (init0) edge node{$-\vec e_{A}$} (init1)
    (init2) edge node{$\vec 0$} (init3);
    \node[state,right=1.47cm of
      init3,label=left:{store:},label=below:{{\footnotesize $B\in S\uplus\{.\}$}}](store0){$q,B$};
    \node[rstate,right=.8cm of store0](store1){$q\cup\{\oc A\},B$};
    \path (store0) edge node{$-\vec e_{\oc A}$} (store1);
    \node[rstate,below right=1.3cm and -.7cm of init0,label=left:{L$_\multimap$:}](Limp0){$q\cup q',C$};
    \node[intermediate,right=1.1cm of Limp0,label=right:{$+$}](Limp1){};
    \draw[draw=black!40] (Limp1) -- ++(.2,-.2) arc (-40:40:.33);
    \node[state,above right=of Limp1](Limp2){$q,A$};
    \node[intermediate,below right=of Limp1](Limp3){};
    \node[state,right=of Limp3](Limp4){$q',C$};
    \path (Limp0) edge node{$-\vec e_{A\multimap B}$} (Limp1)
      (Limp1) edge (Limp2)
      (Limp1) edge (Limp3)
      (Limp3) edge node{$\vec e_B$}(Limp4);
    \node[rstate,right=4.2cm of
      Limp1,label=left:{R$_\multimap$:}](Rimp0){$q,A\multimap B$};
    \node[state,right=of Rimp0](Rimp1){$q,B$};
    \path (Rimp0) edge node{$\vec e_A$} (Rimp1);
    \node[state,below=3.2cm of
      init0,label=left:{L$_\otimes$:}](Lotimes0){$q,C$};
    \node[intermediate,right=1cm of Lotimes0](Lotimes1){};
    \node[state,right=1.2cm of Lotimes1](Lotimes2){$q,C$};
    \path (Lotimes0) edge node{$-\vec e_{A\otimes B}$} (Lotimes1)
      (Lotimes1) edge node{$\vec e_A+\vec e_B$} (Lotimes2);
    \node[rstate,right=2.85cm of Lotimes2,
      label=left:{R$_\otimes$:},
      label=right:{$\,+$}](Rotimes0){$q\cup q',A\otimes B$};
    \draw[draw=black!40] (Rotimes0.east) -- ++(.1,-.1) arc (-40:40:.16);
    \node[state,above right=.2cm and .3cm of Rotimes0](Rotimes1){$q,A$};
    \node[state,below right=.2cm and .3cm of Rotimes0](Rotimes2){$q',B$};
    \path (Rotimes0.east) edge (Rotimes1)
      (Rotimes0.east) edge (Rotimes2);
    \node[state,below=1cm of
      Lotimes0,label=left:{L$_\bot$:}](Lbot0){$q,.$};
    \node[state,accepting by double,right=of Lbot0](Lbot1){$q_\ell$};
    \path (Lbot0) edge node{$-\vec e_\bot$} (Lbot1);
    \node[state,right=4.96cm of Lbot1,label=left:{R$_\bot$:}](Rbot0){$q,\bot$};
    \node[state,right=of Rbot0](Rbot1){$q,.$};
    \path (Rbot0) edge node{$\vec 0$} (Rbot1);
    \node[state,below=of Lbot0,label=left:{L$_{\mathbf
        1}$:}](Lone0){$q,A$};
    \node[state,right=of Lone0](Lone1){$q,A$};
    \path (Lone0) edge node{$-\vec e_{\mathbf 1}$} (Lone1);
    \node[state,right=4.82cm of Lone1,label=left:{R$_{\mathbf 1}$:}](Rone0){$\emptyset,\mathbf 1$};
    \node[state,accepting by double,right=of Rone0](Rone1){$q_\ell$};
    \path (Rone0) edge node{$\vec 0$} (Rone1);
    \node[state,below=1cm of Lone0,label=left:{L$_\oplus$:}](Loplus0){$q,C$};
    \node[intermediate,right=1.1cm of
      Loplus0,label=right:{$\wedge$}](Loplus1){};
    \draw[draw=black!40] (Loplus1) -- ++(.2,-.2) arc (-40:40:.33);
    \node[intermediate,above right=of Loplus1](Loplus2){};
    \node[intermediate,below right=of Loplus1](Loplus3){};
    \node[state,right=1cm of Loplus1](Loplus4){$q,C$};
    \path (Loplus0) edge node{$-\vec e_{A\oplus B}$} (Loplus1)
      (Loplus1) edge (Loplus2)
      (Loplus1) edge (Loplus3)
      (Loplus2) edge node{$\vec e_A$} (Loplus4)
      (Loplus3) edge[swap] node{$\vec e_B$} (Loplus4);
    \node[rstate,right=2.95cm of
      Loplus4,label=left:{R$_\oplus$:}](Roplus0){$q,A\oplus B$};
    \node[state,above right=.15cm and .3cm of Roplus0](Roplus1){$q,A$};
    \node[state,below right=.15cm and .3cm of Roplus0](Roplus2){$q,B$};
    \path (Roplus0.east) edge node {$\vec 0$} (Roplus1)
      (Roplus0.east) edge[swap] node {$\vec 0$} (Roplus2);
    \node[state,below left=2cm and -.6cm of
      Loplus0,label=left:{L$_\with$:}](Lwith0){$q,C$};
    \node[intermediate,right=1cm of Lwith0](Lwith1){};
    \node[state,above right=of Lwith1](Lwith2){$q,C$};
    \node[state,below right=of Lwith1](Lwith3){$q,C$};
    \path (Lwith0) edge node{$-\vec e_{A\with B}$} (Lwith1)
      (Lwith1) edge node{$\vec e_A$} (Lwith2)
      (Lwith1) edge[swap] node{$\vec e_B$} (Lwith3);
    \node[rstate,right=4.82cm of
      Lwith1,label=left:{R$_\with$:},label=right:{$\,\wedge$}](Rwith0){$q,A\with B$};
    \draw[draw=black!40] (Rwith0.east) -- ++(.1,-.1) arc (-40:40:.15);
    \node[state,above right=.15cm and .3cm of Rwith0](Rwith1){$q,A$};
    \node[state,below right=.15cm and .3cm of Rwith0](Rwith2){$q,B$};
    \path (Rwith0.east) edge (Rwith1)
      (Rwith0.east) edge (Rwith2);
    \node[state,below right=1cm and 3.5cm of
      Lwith0,label=left:{R$_\top$:}](Rtop0){$q,\top$};
    \node[intermediate,right=of Rtop0,label=below:{{\footnotesize $\forall A\in
  S\setminus S_\oc$}}](Rtop1){};
    \node[state,accepting by double,right=of
      Rtop1](Rtop2){$q_\ell$};
    \path (Rtop0) edge node{$\vec 0$} (Rtop1)
      (Rtop1) edge[loop above] node {$-\vec e_A$} ()
      (Rtop1) edge node{$\vec 0$} (Rtop2);
    \node[rstate,below left=2.2cm and -1.3cm of
      Lwith0,label=left:{$\oc$D:}](D0){$q\cup\{\oc A\},B$};
    \node[state,right=of D0](D1){$q,B$};
    \path (D0) edge node{$\vec e_A$} (D1);
    \node[rstate,right=1.2cm of D1,label=left:{$\oc$W:}](W0){$q\cup\{\oc A\},B$};
    \node[state,right=of W0](W1){$q,B$};
    \path (W0) edge node{$\vec 0$} (W1);
    \node[state,right=1.2cm of W1,label=left:{$\oc$P:}](P0){$q,\oc A$};
    \node[state,right=of P0](P1){$q,A$};
    \path (P0) edge node {$\stackrel{?}{=}\vec 0$} (P1);
  \end{tikzpicture}
  \caption{\label{fig-ill-abvass}The rules of $\?A^I_F$; $q,q'$ are
    subsets of $S_\oc$, all the formul\ae\ must be in $S$.}
\end{figure}

For $A$ empty or a formula in $S$, we write $A^\dagger\eqdef .$ if $A$
is empty and $A^\dagger\eqdef A$ otherwise.  The following claim
relates ILZ proofs with deductions in $\?A^I_F$: \stepcounter{theorem}
\begin{claim}[$\?A^I_F$ is Sound and Complete]\label{cl-ill}
  For all $\oc\Psi$ in $\+N^{S_\oc}$, $\Delta$ in $\+N^{S\setminus
    S_\oc}$ and $A^\dagger$ in $S\uplus\{.\}$,
  \begin{equation*}
    \?A^I_F,\{q_\ell\}\jdg \sigma(\oc\Psi),A^\dagger,\Delta
    \quad\text{iff}\quad
    \oc\Psi,\Delta\prvs{ILZ} A\;.
  \end{equation*}
\end{claim}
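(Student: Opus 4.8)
The plan is to prove the two implications separately, in each case by induction: for the direction ``$\Leftarrow$'' by induction on a cut-free ILZ derivation of $\oc\Psi,\Delta\vdash A$ (which exists by cut elimination, recalled above), and for ``$\Rightarrow$'' by induction on a reachability witness $\?D$ of $\?A^I_F$ for the configuration $(\sigma(\oc\Psi),A^\dagger,\Delta)$ with leaf labels in $\{q_\ell\}\times\{\vec 0\}$. Two preliminary observations make the encoding work. First, by the subformula property every sequent occurring in a cut-free proof of $\vdash F$ has the announced shape---a set of $\oc$-guarded subformul\ae\ of $F$ in storage, a multiset of subformul\ae\ of $F$ on the counters, and a right subformula or nothing---so the finite state set and the dimension $|S|$ of $\?A^I_F$ suffice. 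Second, using logical weakening $(\oc\mathrm W)$ and logical contraction $(\oc\mathrm C)$, the ILZ-provability of $\oc\Psi,\Delta\vdash A$ depends only on the support $\sigma(\oc\Psi)$, not on the multiplicities in $\oc\Psi$; this justifies keeping the ``of-course'' context as a set in the state and makes the \emph{store} rule sound---moving a formula $\oc A'$ from the counters into the set component $q$ of the state replaces linear-side copies of $\oc A'$ by the single ``set'' copy, which is provability-preserving. Consequently I may reason up to applications of \emph{store}, and always invoke the induction hypothesis on the \emph{canonical} encoding of a sequent, in which no $\oc$-guarded subformula sits on the counters.

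For the completeness direction, I case on the last rule of the cut-free proof. Each ILZ rule is mirrored by the correspondingly labelled gadget of \autoref{fig-ill-abvass}: a short tree of unary, split, fork and---for promotion---full zero test rules through fresh intermediate states, whose root is the canonical encoding of the conclusion and whose leaves are the canonical encodings of the premises, possibly after a bounded number of \emph{store} steps to re-absorb any $\oc$-guarded subformula that a left-introduction gadget ($\mathrm L_\otimes$, $\mathrm L_\multimap$, $\mathrm L_\oplus$, $\mathrm L_\with$) has just deposited on the counters. Plugging the witnesses supplied by the induction hypothesis at the leaves, and composing with the $\jdg$-deduction rules of the root-judgement calculus, yields the required witness. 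The base cases are the axioms reaching $q_\ell$: $(\mathrm{init})$ via the ``init'' gadget, which consumes the unique linear formula on the counters (the variant $\{\oc A'\},\oc A'\xrightarrow{\vec 0}q_\ell$ covering $\oc A'\vdash\oc A'$), $(\mathrm R_{\mathbf 1})$, $(\mathrm R_\top)$---whose gadget's $-\vec e_A$ loops empty the non-$\oc$ counters, matching that $\top$ absorbs an arbitrary linear context---$(\mathrm L_\bot)$, $(\mathrm L_{\mathbf 1})$ and $(\mathrm R_\bot)$. The exponential cases exploit that $q$ is a set: the $(\oc\mathrm D)$ schema instantiated with $q$ already containing $\oc A'$ realises dereliction-with-retention (equivalently $\oc\mathrm C$ then $\oc\mathrm D$), and instantiated with $\oc A'\notin q$ it realises dereliction-with-consumption; $(\oc\mathrm W)$ simply forgets $\oc A'$ from $q$. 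The only non-local case is promotion: the conclusion $\oc\Gamma\vdash\oc A'$ is canonically encoded with \emph{empty} counters and $q=\sigma(\oc\Gamma)$, so the full zero test $q,\oc A'\rst q,A'$ is enabled exactly when the side condition of $(\oc\mathrm P)$---a purely $\oc$-guarded context---holds, and its target $q,A'$ is the canonical encoding of the premise $\oc\Gamma\vdash A'$.

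For soundness, I induct on the reachability witness $\?D$. The obstacle is that $\?D$ threads through the fresh intermediate states of \autoref{fig-ill-abvass}, so individual nodes need not encode sequents. The remedy is a syntactic inspection of \autoref{fig-ill-abvass}: every maximal path of $\?D$ issuing from a ``genuine'' state---one of the form $(q,B^\dagger)$ or $q_\ell$---and passing only through intermediate states (and \emph{store} steps, which I collapse as above) is exactly one instance of one gadget, so it either terminates at $q_\ell$ via an axiom gadget, with the forced $\vec 0$ counters making the represented sequent an ILZ axiom, or it reaches the genuine state(s) canonically encoding the premise(s) of a unique ILZ rule; a full zero test met along the way occurs at a node with $\vec 0$ counters and hence yields an application of $(\oc\mathrm P)$. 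Cutting $\?D$ at the genuine states decomposes it into gadget-sized blocks; reading off the associated ILZ rule at each block and recursing on the sub-witnesses rooted at the premise states builds the desired cut-free proof of $\oc\Psi,\Delta\vdash A$.

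The main difficulty, in both directions, is the bookkeeping around the $\oc$-storage: establishing once and for all that ILZ-provability only sees $\sigma(\oc\Psi)$, and then handling the interplay of the \emph{store} rule with the left-introduction gadgets so that the induction hypothesis is always applied to a canonical encoding; in the soundness direction there is the additional, purely combinatorial, obstacle of parsing the $\?A^I_F$ deduction tree into gadget-sized blocks. The full zero test / promotion correspondence, which hinges on ``counters $=\vec 0$'' capturing exactly the promotion side condition, is the one place where the argument is more than routine rule-pushing.
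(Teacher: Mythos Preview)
Your proposal is correct and follows essentially the same approach as the paper: induction on the cut-free ILZ derivation for completeness, and induction on the reachability witness---after normalising the position of \emph{store} steps---for soundness, with the promotion/full-zero-test correspondence handled exactly as you describe. One small slip: you list $(\mathrm L_{\mathbf 1})$ and $(\mathrm R_\bot)$ among the ``base cases reaching $q_\ell$'', but both have a premise and their gadgets in \autoref{fig-ill-abvass} do not terminate at $q_\ell$; the actual axiom gadgets are only (init), $(\mathrm R_{\mathbf 1})$, $(\mathrm R_\top)$, and $(\mathrm L_\bot)$.
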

\begin{proof}[Completeness Proof]
  Let us prove by induction on the height of a proof tree for
  $\oc\Psi,\Delta\prvs{ILZ} A$ that $\?A^I_F,\{q_\ell\}\jdg
  \sigma(\oc\Psi),A^\dagger,\Delta$.  This boils down to a
  verification that the rules in \autoref{fig-ill-abvass} implement
  the sequent calculus for ILZ.  We will not detail all the cases, but
  here are four instances for (init), (L$_\multimap$), (L$_\oplus$), and
  ($\oc$P)---the remaining cases being similar.
  {\ifacm\def\descriptionlabel#1{%
    \hspace\labelsep \normalfont\itshape #1,%
    }%
    \fi
  \begin{description}
  \item[For (init)] we know that
    $\?A_F^I,\{q_\ell\}\jdg q_\ell,\vec 0$
  vacuously.  We then distinguish two cases: either $A$ is not in
  $S_\oc$, and the first (init) unary rule applies and allows to
  deduce
    $\?A_F^I,\{q_\ell\}\jdg \emptyset,A,\vec e_A$
  as desired, or $A$ is in $S_\oc$, and the second (init) unary rule
  applies and allows to deduce 
    $\?A_F^I,\{q_\ell\}\jdg \{A\},A,\vec 0$
  as desired.
  \item[For (L$_\multimap$)] let us write $\oc\Phi,\Gamma\prvs{ILZ} A$ and
  $\oc\Psi,\Delta,B\prvs{ILZ}C$ for the premises with $\oc\Phi$ and
  $\oc\Psi$ in $\+N^{S_\oc}$, $\Gamma$ and $\Delta$ in
  $\+N^{S\setminus S_\oc}$, and $A,B,C$ in $S$.  By induction
  hypothesis,
    $\?A_F^I,\{q_\ell\}\jdg \sigma(\oc\Phi),A,\Gamma$.
  If $B$ is in $S_\oc$, then by induction
  hypothesis
    $\?A_F^I,\{q_\ell\}\jdg \sigma(\oc\Psi)\cup B,C,\Delta$
  and we can apply the store rule to show
    $\?A_F^I,\{q_\ell\}\jdg \sigma(\oc\Phi),C,\Delta+\vec e_B$.
  Otherwise, i.e.\ if $B$ is in $S\setminus S_\oc$, the induction
  hypothesis provides us with the same sequent directly.  %
  Applying the rules for (L$_\multimap$), we see that
    $\?A_F^I,\{q_\ell\}\jdg\sigma(\oc\Phi)\cup\sigma(\oc\Psi),C,\Gamma+\Delta+\vec e_{A\multimap B}$
  as desired.
  \item[For (L$_\oplus$)] let us write $\oc\Phi,\Gamma,A\prvs{ILZ} C$ and
  $\oc\Phi,\Gamma,B\prvs{ILZ}C$ for the premises with $\oc\Phi$ in
  $\+N^{S_\oc}$, $\Gamma$ in $\+N^{S\setminus S_\oc}$, and $A,B,C$ in
  $S$.  Like in the previous argument for (L$_\multimap$), we can use
  the store rule if $A$ or $B$ are in $S_\oc$, and in any case the
  induction hypothesis shows
  $\?A_F^I,\{q_\ell\}\jdg \sigma(\oc\Phi),C,\Gamma+\vec{e}_A$ and
  $\?A_F^I,\{q_\ell\}\jdg \sigma(\oc\Phi),C,\Gamma+\vec{e}_B$.
  The \abv\ rules for (L$_\oplus$) then allow to remove $\vec e_A$ and
  $\vec e_B$ from these two configurations, apply the fork rule, and
  finally add $\vec e_{A\oplus B}$ to prove
  $\?A_F^I,\{q_\ell\}\jdg \sigma(\oc\Phi),C,\Gamma+\vec{e}_{A\oplus
  B}$ as desired.
  \item[For ($\oc$P)] by induction hypothesis
    $\?A_F^I,\{q_\ell\}\jdg \sigma(\oc\Gamma),A,\vec 0$
  and we can apply the corresponding full zero test to show
    $\?A_F^I,\{q_\ell\}\jdg \sigma(\oc\Gamma),\oc A,\vec 0$
  as desired.\qedhere
  \end{description}}
\end{proof}
\begin{proof}[Soundness Proof]
  As a preliminary observation, note that the store rule of
  $\?A_F^I$ is the only rule that can decrement the counter of a
  formula from $S_\oc$.  By induction over the height of deduction
  trees for $\?A_F^I$, we can normalise deductions so that store
  rules are applied either immediately after a rule that added $\vec
  e_{\oc A}$ to the current configuration, or immediately at the root
  of the deduction (if $F$ itself is in $S_\oc$).  This means that we
  can assume $\Delta$ in $\+N^{S\setminus S_\oc}$ in a root judgement
  $\?A^I_F,\{q_\ell\}\jdg \sigma(\oc\Psi),A^\dagger,\Delta$ as in the
  statement of the claim.

  Let us prove by induction on the height of a deduction tree for
  $\?A^I_F,\{q_\ell\}\jdg \sigma(\oc\Psi),A^\dagger,\Delta$ that
  $\oc\Psi,\Delta\prvs{ILZ} A$.  Again, we only cover a few cases,
  depending on which group of rules of $\?A_F^I$ was applied last:
  {\ifacm\def\descriptionlabel#1{%
    \hspace\labelsep \normalfont\itshape #1,%
    }%
    \fi
  \begin{description}
  \item[For (init)] either the first variant for $A$ in $S\setminus S_\oc$
  was used to obtain $\?A_F^I,\{q_\ell\}\jdg \emptyset,A,\vec e_A$,
  and we have $A\prvs{ILZ} A$ as desired, or the second variant for
  $\oc A$ in $S_\oc$ was used to obtain $\?A_F^I,\{q_\ell\}\jdg \oc
  A,\oc A,\vec 0$, and we have $\oc A\prvs{ILZ}\oc A$ as desired.
  \item[For (L$_\multimap$)] we know that $\?A_F^I,\{q_\ell\}\jdg
  q,A,\Gamma$ and $\?A_F^I,\{q_\ell\}\jdg q',C,\Delta+\vec e_B$ for
  some $q$ and $q'$ included in $S_\oc$, some $\Gamma$ and $\Delta$ in
  $\+N^{S\setminus S_\oc}$, and some $A$, $B$, and $C$ in $S$.  By
  induction hypothesis, there exists $\oc\Phi$ in $\+N^{S_\oc}$ with
  $q=\sigma(\oc\Phi)$ such that $\oc\Phi,\Gamma\prvs{ILZ}A$.  If $B$
  is in $S_\oc$---and is therefore the result of a store rule applied
  right after---, then $\?A_F^I,\{q_\ell\}\jdg q'\cup\{B\},C,\Delta$
  and by induction hypothesis there exists $\oc\Psi$ in $\+N^{S_\oc}$
  with $\sigma(\oc\Psi)=q'$ such that $\oc\Psi,B,\Delta\prvs{ILZ}C$.
  Otherwise, i.e.\ if $B$ is in $S\setminus S_\oc$, we obtain the
  latter sequent directly by induction hypothesis.  Then
  $\oc\Psi,\oc\Phi,B,\Delta,\Gamma,A\multimap B\prvs{ILZ}C$ as
  desired.
  \item[For (L$_\oplus$)] we know that $\?A_F^I,\{q_\ell\}\jdg
  q,C,\Gamma+\vec e_A$ and $\?A_F^I,\{q_\ell\}\jdg q,C,\Gamma+\vec
  e_B$ for some $q$ included in $S_\oc$, some $\Gamma$ in
  $\+N^{S\setminus S_\oc}$, and some $A$, $B$, and $C$ in $S$.  Using
  store rules, we can argue as in the case of (L$_\multimap$) that,
  regardless of whether $A$ or $B$ is in $S_\oc$, by induction
  hypothesis there exists $\oc\Phi$ in $\+N^{S_\oc}$ with
  $q=\sigma(\oc\Phi)$ such that $\oc\Phi,A,\Gamma\prvs{ILZ}C$ and
  $\oc\Phi,B,\Gamma\prvs{ILZ}C$.  Therefore $\oc\Phi,A\oplus
  B,\Gamma\prvs{ILZ}C$ as desired.
  \item[For ($\oc$P)] we know that $\?A_F^I,\{q_\ell\}\jdg q,A,\vec 0$,
  thus by induction hypothesis there exists $\oc\Psi$ in $\+N^{S_\oc}$
  with $q=\sigma(\oc\Psi)$ and $\oc\Psi\prvs{ILZ}A$, hence
  $\oc\Psi\prvs{ILZ}\oc A$ as desired.\qedhere
  \end{description}}
\end{proof}

\begin{claim}[Affine Case]\label{cl-illw}
  When allowing losses in $\?A_F^I$, for all $\oc\Psi$ in
  $\+N^{S_\oc}$, $\Delta$ in $\+N^{S\setminus S_\oc}$ and $A^\dagger$
  in $S\uplus\{.\}$,
  \begin{equation*}
    \?A^I_F,\{q_\ell\}\jdg_\ell \sigma(\oc\Psi),A^\dagger,\Delta
    \quad\text{iff}\quad
    \oc\Psi,\Delta\prvs{ILZW} A\;.
  \end{equation*}
\end{claim}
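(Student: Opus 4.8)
The plan is to reuse the two halves of the proof of Claim~\ref{cl-ill} almost verbatim, extending each with a single case that accounts for the structural weakening rule~(W) of ILZW by a loss step of $\?A^I_F$. As in the soundness part of Claim~\ref{cl-ill}, I first normalise deductions so that every store rule is applied immediately below the rule producing the corresponding $\vec e_{\oc A}$; in the lossy setting I additionally require every loss to be performed on a coordinate outside $S_\oc$, which is harmless because a loss on an in-transit $\vec e_{\oc A}$ can be postponed by first storing $\oc A$ and later discarding it with the $\oc$W rule. Hence every root judgement of the shape in the statement still has $\Delta\in\+N^{S\setminus S_\oc}$.

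For \emph{completeness} I would induct on the height of a cut-free proof of $\oc\Psi,\Delta\prvs{ILZW} A$. Every rule of ILZW other than~(W) is already a rule of ILZ, so its case is exactly that of Claim~\ref{cl-ill}, the induction hypothesis now delivering $\jdg_\ell$ instead of $\jdg$. If the last rule is~(W), its premise $\Gamma\vdash A$ differs from the conclusion by one extra left formula $A'$. When $A'\notin S_\oc$ we have $\Delta=\Delta'+\vec e_{A'}$ and premise $\oc\Psi,\Delta'\vdash A$; the induction hypothesis yields $\?A^I_F,\{q_\ell\}\jdg_\ell\sigma(\oc\Psi),A^\dagger,\Delta'$, whence $\?A^I_F,\{q_\ell\}\jdg_\ell\sigma(\oc\Psi),A^\dagger,\Delta$ since $\jdg_\ell$ is upward closed in its vector component (one loss step suffices). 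When $A'=\oc A''\in S_\oc$ the premise is $\oc\Psi',\Delta\vdash A$ with $\sigma(\oc\Psi)=\sigma(\oc\Psi')\cup\{\oc A''\}$; the induction hypothesis gives a root judgement with state $(\sigma(\oc\Psi'),A^\dagger)$, and one application of the $\oc$W rule of $\?A^I_F$---available because $\oc$W is a rule of ILZ---moves the state to $(\sigma(\oc\Psi')\cup\{\oc A''\},A^\dagger)$, as required.

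For \emph{soundness} I would induct on the height of a normalised lossy deduction for $\?A^I_F,\{q_\ell\}\jdg_\ell\sigma(\oc\Psi),A^\dagger,\Delta$. If the last step uses one of the rules of $\?A^I_F$ from \autoref{fig-ill-abvass}, the argument is that of the soundness part of Claim~\ref{cl-ill}, with the resulting ILZ proof read as an ILZW proof; note that the full zero test realising $\oc$P still forces the vector to vanish on both sides, so losses cannot produce illegitimate promotions. The only new case is a loss on a coordinate $C\in S\setminus S_\oc$: the premise judgement is then $\?A^I_F,\{q_\ell\}\jdg_\ell\sigma(\oc\Psi),A^\dagger,\Delta-\vec e_C$, the induction hypothesis gives $\oc\Psi,\Delta-\vec e_C\prvs{ILZW} A$, and a single application of~(W) reintroducing $C$ yields $\oc\Psi,\Delta\prvs{ILZW} A$.

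The one delicate point---and the step I expect to need the most care---is the normalisation reconciling losses with the store rule, i.e.\ the claim that losses may be confined to coordinates outside $S_\oc$; once that is settled, the two inductions reduce to a rule-by-rule check against \autoref{fig-ill-abvass} identical to the one for Claim~\ref{cl-ill}.
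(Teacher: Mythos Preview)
Your proposal is correct and follows essentially the same approach as the paper: reduce to Claim~\ref{cl-ill} and handle only the new cases, namely structural weakening~(W) on the ILZW side and loss on the \abv\ side. The paper's preliminary observation is phrased as ``by monotonicity, losses inside a rule group can be delayed until after it'', whereas you spell out the normalisation more carefully---in particular confining losses to coordinates outside $S_\oc$ by trading a loss on an in-transit $\vec e_{\oc A}$ for a store followed by $\oc$W---and you split the~(W) case according to whether the weakened formula lies in $S_\oc$; the paper leaves that case distinction implicit (it is harmless since weakening by an $\oc$-formula is already covered by $\oc$W in ILZ), but your explicit treatment is a welcome clarification rather than a different argument.
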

\begin{proof}
  As a preliminary observation, note that, by monotonicity, losses
  occurring inside a group of rules depicted in
  \autoref{fig-ill-abvass} can be delayed until after the execution of
  the group is completed.  By \autoref{cl-ill}, it therefore suffices
  to check the case of the loss and structural weakening rules.
  
  For completeness, if (W) is the last applied rule in a proof of
  $\oc\Psi,B,\Delta\prvs{ILZW} A$, then, by induction hypothesis
  $\?A_F^I,\{q_\ell\}\jdg_\ell\sigma(\oc\Psi),A,\Delta$, and an
  application of the loss rule yields
  $\?A_F^I,\{q_\ell\}\jdg_\ell\sigma(\oc\Psi),A,\Delta+\vec e_B$ as
  desired.

  For soundness, if a loss of some $B$ is the last applied rule in a
  deduction showing $\?A_F^I,\{q_\ell\}\jdg_\ell q,A,\Delta+\vec e_B$,
  then $\?A_F^I,\{q_\ell\}\jdg_\ell q,A,\Delta$.  By induction hypothesis
  there exists $\oc\Psi$ with $q=\sigma(\oc\Psi)$ such that
  $\oc\Psi\Delta\prvs{ILZW}A$, from which (W) yields
  $\oc\Psi,B,\Delta\prvs{ILZW}A$ as desired.
\end{proof}

\begin{claim}[Contractive Case]\label{cl-illc}
  When allowing expansions in $\?A_F^I$, for all $\oc\Psi$ in
  $\+N^{S_\oc}$, $\Delta$ in $\+N^{S\setminus S_\oc}$ and $A^\dagger$
  in $S\uplus\{.\}$,
  \begin{equation*}
    \?A^I_F,\{q_\ell\}\jdg_e \sigma(\oc\Psi),A^\dagger,\Delta
    \quad\text{iff}\quad
    \oc\Psi,\Delta\prvs{ILZC} A\;.
  \end{equation*}
\end{claim}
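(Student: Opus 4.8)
The plan is to adapt the proof of \autoref{cl-illw}, with the expansion rule and structural contraction~(C) now taking the places of the loss rule and structural weakening. As a preliminary remark, monotonicity lets us push any expansion that is applied between the intermediate states of one of the rule groups of \autoref{fig-ill-abvass} past that group, exactly as losses were dealt with in the proof of \autoref{cl-illw}; the only subtlety is that within the groups of \autoref{fig-ill-abvass} no coordinate is ever decremented after the first rule, so a displaced expansion can still be reapplied once the group is completed. Consequently both directions of the equivalence can be argued at the granularity of complete rule groups interleaved with isolated expansion steps, and by \autoref{cl-ill} it only remains to account for the expansion rule and for~(C). Throughout we use the root-judgement form of the expansion rule, which infers $\?A^I_F,\{q_\ell\}\jdg_e q,\vec v+\vec e_i$ from $\?A^I_F,\{q_\ell\}\jdg_e q,\vec v+2\vec e_i$.

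For the ``if'' direction we induct on the height of an ILZC proof of $\oc\Psi,\Delta\prvs{ILZC}A$. When the last rule already belongs to ILZ, the corresponding rule group of \autoref{fig-ill-abvass} builds the required deduction exactly as in the completeness proof of \autoref{cl-ill}. When the last rule is a contraction with conclusion $\oc\Psi,\Delta\prvs{ILZC}B$: if it acts on a formula in $S_\oc$, the premise has the same encoded configuration $(\sigma(\oc\Psi),B^\dagger,\Delta)$ as the conclusion, since $\sigma$ is insensitive to multiplicities in $\oc\Psi$, so the induction hypothesis applies directly; if it acts on some $A\in S\setminus S_\oc$, then $\Delta(A)\geq1$ and the premise is $\oc\Psi,\Delta+\vec e_A\prvs{ILZC}B$, and applying the expansion rule to the judgement $\?A^I_F,\{q_\ell\}\jdg_e\sigma(\oc\Psi),B^\dagger,\Delta+\vec e_A$ furnished by the induction hypothesis yields $\?A^I_F,\{q_\ell\}\jdg_e\sigma(\oc\Psi),B^\dagger,\Delta$, as desired.

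For the ``only if'' direction we first normalise deductions as in the soundness proof of \autoref{cl-ill}, so that every store rule immediately follows the rule that introduced the corresponding $\oc$-guarded counter (or sits at the root); in addition, since an expansion acting on a $\oc$-guarded coordinate leaves the encoded configuration unchanged, we delete every such expansion together with the redundant store step it induces, after which every expansion acts on a coordinate in $S\setminus S_\oc$ and $\Delta\in\+N^{S\setminus S_\oc}$ in every root judgement, as demanded by the statement. We then induct on the height of a deduction for $\?A^I_F,\{q_\ell\}\jdg_e\sigma(\oc\Psi),A^\dagger,\Delta$. Every case except that of a final expansion is treated exactly as in \autoref{cl-ill}, using that ILZC extends ILZ. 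If the last rule is an expansion on a coordinate $A\in S\setminus S_\oc$, inferring $\?A^I_F,\{q_\ell\}\jdg_e q,C^\dagger,\Delta$ from $\?A^I_F,\{q_\ell\}\jdg_e q,C^\dagger,\Delta+\vec e_A$ with $\Delta(A)\geq1$, then by the induction hypothesis there is $\oc\Psi$ with $\sigma(\oc\Psi)=q$ and $\oc\Psi,\Delta,A\prvs{ILZC}C$, and since $\Delta(A)\geq1$ a single application of~(C) gives $\oc\Psi,\Delta\prvs{ILZC}C$.

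The step I expect to be the main obstacle is the soundness normalisation: one has to check that the store-normalisation of \autoref{cl-ill} still goes through once expansions are present, and in particular that expansions on $\oc$-guarded coordinates really can be removed without affecting the derivable root judgements -- this is precisely where treating $\oc\Psi$ as a \emph{set} stored in the state component is used. The other point requiring care is simply the vector bookkeeping, i.e.\ verifying that an expansion step $\vec v+\vec e_i\mapsto\vec v+2\vec e_i$ on a coordinate in $S\setminus S_\oc$ realises precisely one application of~(C) on the formula indexed by $i$, and conversely.
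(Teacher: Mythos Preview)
Your overall approach matches the paper's, but the preliminary normalisation step goes in the wrong direction. You push expansions occurring inside a rule group \emph{past} the group, ``exactly as losses were dealt with'' in \autoref{cl-illw}; however, losses and expansions commute with the surrounding rules in opposite directions. The paper instead pushes expansions to \emph{before} the group starts, and this is the direction that monotonicity actually supports: an expansion on coordinate $D$ at an intermediate state requires $D$-value $\geq 1$ there, and since the only decrement in each multi-step group is its very first step, the $D$-value was at least as large at the preceding boundary state, so the expansion can be performed there instead.

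Your stated justification (``within the groups no coordinate is ever decremented after the first rule'') is already false for the (R$_\top$) group, whose intermediate state loops on decrements. More seriously, pushing an expansion past the split inside (L$_\multimap$) can genuinely fail: if the expansion raises coordinate $D$ from $1$ to $2$ and the split then sends one unit of $D$ to each child, no reordering that splits first (from $D$-value~$1$) and expands afterwards can recreate the pair $(1,1)$---one child receives $D$-value~$0$ and cannot expand, the other receives~$1$ and can only expand to~$2$. Since expansive reachability is \emph{not} downward-closed in the counter values, the child left with $D$-value~$0$ need not admit the sub-deduction that previously started from $D$-value~$1$. The remainder of your argument---both inductions, and the treatment of contractions on $S_\oc$ versus $S\setminus S_\oc$---is correct and coincides with the paper; the fix is simply to reverse the direction of the normalisation and push expansions \emph{before} each rule group.
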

\begin{proof}
  As a preliminary observation, note that, by monotonicity, expansions
  occurring inside a group of rules depicted in
  \autoref{fig-ill-abvass} can be applied before the execution of the
  group is started.  By \autoref{cl-ill}, it therefore suffices to
  check the case of the expansion and structural contraction rules.

  For completeness, if (C) is the last applied rule in a proof of
  $\oc\Psi,B,\Delta\prvs{ILZC} A$, then we can assume that the
  contracted formula was some $B$ in $S\setminus S_\oc$ as otherwise
  logical weakening would have sufficed, thus
  $\oc\Psi,B,B,\Delta\prvs{ILZW}A$.  By induction hypothesis,
  $\?A_F^I,\{q_\ell\}\jdg_e\sigma(\oc\Psi),A,\Delta+2\vec e_B$, and an
  application of the expansion deduction rule yields
  $\?A_F^I,\{q_\ell\}\jdg_e\sigma(\oc\Psi),A,\Delta$ as desired.

  For soundness, assume that an expansion is the rule applied at the
  root of a deduction tree for $\?A_F^I,\{q_\ell\}\jdg_e
  q,A,\Delta+\vec e_B$, hence that $\?A_F^I,\{q_\ell\}\jdg_e
  q,A,\Delta+2\vec e_B$.  Because we assume store rules to occur as
  early as possible, $B$ cannot be in $S_\oc$.  Thus, by induction
  hypothesis there exists $\oc\Psi$ with $q=\sigma(\oc\Psi)$ and
  $\oc\Psi,B,B,\Delta\prvs{ILZC} A$, and applying (C) yields
  $\oc\Psi,B,\Delta\prvs{ILZC}A$ as desired.
\end{proof}

\addtocounter{theorem}{-1}
\begin{proposition}\label{prop-ill-abvass}
  There are polynomial space reductions:
  \begin{enumerate}
  \item\label{ill-1} from (affine, resp.\ contractive) ILZ provability
    to (lossy, resp.\ expansive) \abv\ reachability,
  \item\label{ill-2} from (affine, resp.\ contractive) IMELZ
    provability to (lossy, resp.\ expansive) BVASS$_{\vec 0}$
    reachability.
  \end{enumerate}
\end{proposition}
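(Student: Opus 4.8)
The plan is to take the reduction to be the map sending an instance $F$ of (affine, resp.\ contractive) ILZ provability to the reachability instance $\tup{\?A^I_F,q_r,\{q_\ell\}}$, where $\?A^I_F$ is the \abv\ built from $F$ above and $q_r$ is the state $(\emptyset,F)$ encoding the sequent $\vdash F$: since that sequent has an empty left-hand side, both $\oc\Psi$ and $\Delta$ are empty, and $F^\dagger=F$ because $F$ is a formula.  For item~\ref{ill-1}, correctness is then immediate from \autoref{cl-ill} instantiated with $\oc\Psi=\emptyset$, $\Delta=\emptyset$ and $A=F$, which reads $\?A^I_F,\{q_\ell\}\jdg(\emptyset,F),\vec 0$ iff $\prvs{ILZ}F$.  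For the affine (resp.\ contractive) variant I would keep exactly this reduction but read the target as lossy (resp.\ expansive) \abv\ reachability and appeal to \autoref{cl-illw} (resp.\ \autoref{cl-illc}) instead of \autoref{cl-ill}.  So, once the three claims are available, the substance of the argument is entirely the complexity accounting.

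For that accounting, the point to establish is that, although $\?A^I_F$ has $2^{|S_\oc|}\cdot(|S|+1)$ states $(\sigma,B)$ with $\sigma\subseteq S_\oc$ and $B\in S\uplus\{.\}$ --- hence exponentially many --- each state, and each rule group of \autoref{fig-ill-abvass} together with its auxiliary intermediate states, can be enumerated and written to the output tape using only polynomially many bits of workspace (essentially a running subset of $S_\oc$ and pointers into the syntax tree of $F$), while the dimension $|S|$ is linear in $|F|$.  Hence the map is computable in polynomial space even though its output is of exponential size; this is exactly the ``polynomial space reduction'' of the statement, and in particular it is an elementary and a primitive-recursive reduction, which is all the later complexity transfers need.

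For item~\ref{ill-2} I would invoke the subformula property to keep the whole construction inside the multiplicative exponential fragment: if $F$ is an IMELZ formula then every element of $S$ is one, so none of the rule groups of \autoref{fig-ill-abvass} attached to the additive connectives ($\mathrm{L}_\oplus$, $\mathrm{R}_\oplus$, $\mathrm{L}_\with$, $\mathrm{R}_\with$, $\mathrm{R}_\top$) is ever instantiated.  The only branching rules that survive are the split rules coming from $\mathrm{L}_\multimap$ and $\mathrm{R}_\otimes$; in particular $T_f=\emptyset$, so $\?A^I_F$ is a BVASS$_{\vec 0}$ rather than merely an \abv.  Combining this with the observation that, on IMELZ formulae, $\prvs{ILZ}F$ coincides with $\prvs{IMELZ}F$ (by cut elimination and the subformula property for the sequent calculi), the same reduction and the same three claims deliver item~\ref{ill-2}.

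I do not anticipate a genuine obstacle, since all the real mathematical work is concentrated in \autoref{cl-ill}, \autoref{cl-illw} and \autoref{cl-illc}; what needs care is bookkeeping, namely (i) confirming that $(\emptyset,F)$ is the correct encoding of $\vdash F$ and that $q_\ell$ being a leaf state makes the base judgement $\?A^I_F,\{q_\ell\}\jdg q_\ell,\vec 0$ available, (ii) being careful that the exponential-size $\?A^I_F$ is still produced within polynomial workspace rather than polynomial time, and (iii) verifying that discarding the additive connectives really removes every fork rule so that the output is a bona fide BVASS$_{\vec 0}$.
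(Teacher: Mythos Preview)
Your proposal is correct and follows essentially the same approach as the paper: reduce $\vdash F$ to reachability from $(\emptyset,F)$ in $\?A^I_F$, invoke Claims~\ref{cl-ill}--\ref{cl-illc} for correctness of the three variants, and for item~\ref{ill-2} observe that the fork rules in $\?A^I_F$ arise only from the additive connectives (the paper names specifically $(\mathrm{L}_\oplus)$ and $(\mathrm{R}_\with)$, which are indeed the only two rule groups in \autoref{fig-ill-abvass} that use a fork). Your additional discussion of the polynomial-space complexity accounting and the explicit appeal to the subformula property for the IMELZ/ILZ coincidence are reasonable elaborations of points the paper leaves implicit or mentions only in the surrounding text.
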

\begin{proof}
  For~\ref{ill-1}, we reduce the provability of $\vdash F$ to the
  reachability of $(\emptyset,F)$ in $\?A_F^I$, which is correct
  thanks to the subformula property and
  claims~\ref{cl-ill}--\ref{cl-illc}.

  For~\ref{ill-2}, simply observe that (L$_\oplus$) and (R$_\with$)
  are the only rules of ILZ that make use of fork rules in $\?A_F^I$.
\end{proof}

Our reductions incur an exponential blow-up in the number of
states---however, as we will see with our complexity upper bounds,
this is not an issue, because the main source of complexity in \abv\
is, by far, the dimension of the system, which is here linear in
$|F|$.

\subsection{From \abv\ to LL}
In order to exhibit a reduction from \abv\ reachability to LL
provability, we extend a similar reduction proved by
\citet*{lincoln92} in the case of AVASS (also employed by
\citet{urquhart99}).  The general idea is to encode \abv\
configurations as sequents and \abv\ deductions as proofs in LL
extended with a \emph{theory}, where encoded \abv\ rules are provided
as an additional set of non-logical axioms.

\subsubsection{Linear Logic with a Theory\nopunct.}
In the framework of \citeauthor{lincoln92}, a theory $T$ is a finite
set of \emph{axioms} $C,p_1^\bot,\dots,p_m^\bot$ where $C$ is a MALL
formula and each $p_i$ is an atomic proposition.  Proofs in LL$+T$ can
employ two new rules
\begin{equation*}
  \drule{}{\vdash C,p_1^\bot,\dots,p_m^\bot}{$T$}\quad
  \drule{\vdash C,p_1^\bot,\dots,p_m^\bot\quad\vdash
    C^\bot,\Delta}{\vdash p_1^\bot,\dots,p_m^\bot,\Delta}{directed cut}
\end{equation*}
where $C,p_1^\bot,\dots,p_m^\bot$ belongs to $T$.

A proof in LL$+T$ is directed if all its cuts are \emph{directed
  cuts}.  By adapting the LL cut-elimination proof,
\citeauthor{lincoln92} show:
\begin{fact}[\citep{lincoln92}]\label{fc-dcut}
  If there is a proof of $\vdash\Gamma$ in LL$+T$, then there is a
  directed proof of $\vdash\Gamma$ in LL$+T$.
\end{fact}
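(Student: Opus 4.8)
The plan is to adapt to the theory $T$ the syntactic cut-elimination argument for plain propositional linear logic. Recall the shape of that argument: one shows that an ordinary $(cut)$ of maximal cut-rank, chosen with no cut of that rank above it, can always be pushed towards the leaves and eventually removed, either by a \emph{principal} reduction when the cut formula is introduced by the last rule of both premises --- the $\otimes/\parr$, $\with/\oplus$, $\mathbf 1/\bot$ and $\oc/\wn$ key-cases --- or by a \emph{commutative} reduction that permutes the cut upward past a rule in which the cut formula is merely a side formula; termination follows from the standard lexicographic measure on the complexity of the cut formula and the combined size of the two immediate sub-derivations. First I would rerun precisely this procedure on an LL$+T$ proof of $\vdash\Gamma$, never selecting a directed cut (directed cuts are treated as inert and are simply carried along). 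It then remains only to (i) deal with the genuinely new configurations, those where an ordinary cut abuts a theory-axiom leaf $\vdash C,p_1^\bot,\dots,p_m^\bot$, and (ii) check that no reduction ever creates an ordinary cut that cannot, in turn, be eliminated.

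For (i) there are exactly two cases. If the cut formula is the MALL formula $C$ of the axiom (so the other premise derives $\vdash C^\bot,\Delta$), then the cut is already an instance of the directed-cut rule --- possibly with its premises in the opposite order, which is immaterial since cut is symmetric --- so it is relabelled as a directed cut and left untouched. If instead the cut formula is one of the atoms the axiom contributes, i.e.\ the axiom supplies $p_i^\bot$ while the other premise derives $\vdash\Delta,p_i$, then the cut is handled as a commutative case: it is permuted into the derivation of $\vdash\Delta,p_i$. Since $p_i$ is atomic it is never $\wn$-guarded, so it cannot occur in the conclusion of a promotion; and it can become principal only in an $(init)$ axiom $\vdash p_i,p_i^\bot$, at which point the cut vanishes and the leaf collapses to the theory axiom itself. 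Should the upward permutation instead meet another theory-axiom leaf whose MALL formula is $p_i$, or a directed cut, we revert to the first case or push the cut one step further, respectively. Every step of this subprocess strictly shrinks the sub-derivation into which the atom-cut is being pushed, so it terminates; and the doubly-new situation in which both premises are theory-axiom leaves decomposes into the same two cases (plus an impossible one, since a negated atom cannot equal an atom).

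For (ii), the principal reductions replace a cut on $A$ by cuts on proper subformulas of $A$, hence respect the induction measure, while the commutative reductions keep the cut formula and lower the height above the cut; and no reduction ever dismantles a directed cut --- at worst such a cut is duplicated, by the contraction and $\with$ commutative cases or the $\oc$ key-case, which is harmless. The procedure therefore terminates and yields a proof of $\vdash\Gamma$ in LL$+T$ whose cuts are all directed. The one place where the argument really departs from textbook LL cut-elimination, and the step I expect to be the main obstacle, is the atom-permutation case: the crux is to see that a cut on an atom $p_i$ can always be driven all the way to the leaves without getting stuck, which rests on the facts that atoms are introduced only by $(init)$ and that, in negation-normal form, atoms and negated atoms are kept syntactically disjoint --- so the only leaves an atom-cut can reach are $(init)$ axioms, where it disappears, and theory axioms, where it is absorbed into a directed cut.
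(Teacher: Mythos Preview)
The paper does not give its own proof of this fact: it is stated with a citation to \citet{lincoln92}, preceded only by the sentence ``By adapting the LL cut-elimination proof, \citeauthor{lincoln92} show:''. Your proposal is exactly such an adaptation of the standard cut-elimination procedure, and it matches the approach of \citeauthor{lincoln92}. So there is nothing to compare beyond saying that your sketch fleshes out what the paper merely invokes by reference.

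One small omission worth recording: in the atom-permutation case you write that ``the only leaves an atom-cut can reach are $(init)$ axioms, where it disappears, and theory axioms, where it is absorbed into a directed cut.'' You have forgotten the $(\top)$ axiom $\vdash\Gamma,\top$, which can introduce an arbitrary positive atom in $\Gamma$. This is not a real problem---when the atom-cut reaches a $(\top)$ leaf, the conclusion of the cut still contains $\top$ and is therefore itself an instance of the $(\top)$ axiom, so the cut is absorbed---but your enumeration of the leaf cases should include it. With that addition the argument is complete and in line with the cited source.
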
%

The axioms of a theory $T$ can be translated in pure LL by
\begin{align}
  \enc{C,p_1^\bot,\dots,p_m^\bot}&\eqdef C^\bot\otimes
  p_1\otimes\cdots\otimes p_m\;.
\end{align}\vspace*{-1.5em}
\begin{fact}[\citep{lincoln92}]\label{fc-llt}
  For any finite set of axioms $T$, $\vdash\Gamma$ is provable in
  LL$+T$ if and only if $\vdash\wn\enc{T},\Gamma$ is provable in LL.
\end{fact}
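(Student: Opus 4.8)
The plan is to prove the two implications separately, following \citeauthor{lincoln92}'s argument. For the ``only if'' direction, by \autoref{fc-dcut} we may start from a \emph{directed} proof $\pi$ of $\vdash\Gamma$ in LL$+T$ and transform it into a proof of $\vdash\wn\enc{T},\Gamma$ in LL by induction on the structure of $\pi$. For every purely logical rule of LL we simply thread the context $\wn\enc{T}$ through: when a rule splits its context (as with $\otimes$, $\mathrm{cut}$, $\with$) we obtain two copies of $\wn\enc{T}$ in the conclusion and merge them by $|T|$ applications of logical contraction $(\wn\mathrm{C})$; when a rule has a small conclusion (as with \textup{init}, $\mathbf 1$, $\top$) we introduce $\wn\enc{T}$ by logical weakenings $(\wn\mathrm{W})$; and promotion $(\wn\mathrm{P})$ still applies, since prepending $\wn\enc{T}$ keeps the context a multiset of why-not formul\ae. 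Only the theory rule and the directed cut use the shape of $\enc{T}$.

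Fix an axiom $C,p_1^\bot,\dots,p_m^\bot$ of $T$ and write $E\eqdef\enc{C,p_1^\bot,\dots,p_m^\bot}=C^\bot\otimes p_1\otimes\cdots\otimes p_m$. The first step for both of these cases is to assemble, in pure LL, a proof of $\vdash C,p_1^\bot,\dots,p_m^\bot,E$ out of the identity axioms $\vdash C,C^\bot$ and $\vdash p_i,p_i^\bot$ by repeated $\otimes$-introductions, following the fixed association used in the encoding. For the theory rule producing $\vdash C,p_1^\bot,\dots,p_m^\bot$, a dereliction $(\wn\mathrm{D})$ on $E$ turns this into $\vdash C,p_1^\bot,\dots,p_m^\bot,\wn E$, and $|T|-1$ weakenings introduce the remaining why-nots of $\wn\enc{T}$. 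For the directed cut $\frac{\vdash C,p_1^\bot,\dots,p_m^\bot\quad\vdash C^\bot,\Delta}{\vdash p_1^\bot,\dots,p_m^\bot,\Delta}$, the induction hypothesis gives a pure-LL proof of $\vdash\wn\enc{T},C^\bot,\Delta$; cutting it against $\vdash C,p_1^\bot,\dots,p_m^\bot,E$ on the pair $C/C^\bot$ yields $\vdash\wn\enc{T},p_1^\bot,\dots,p_m^\bot,E,\Delta$, and a dereliction on $E$ followed by one contraction folding $\wn E$ back into $\wn\enc{T}$ completes the step.

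For the ``if'' direction, suppose $\vdash\wn\enc{T},\Gamma$ is provable in LL, hence \textit{a fortiori} in LL$+T$. For each axiom $C,p_1^\bot,\dots,p_m^\bot$ of $T$, the theory rule gives $\vdash C,p_1^\bot,\dots,p_m^\bot$; then $m$ applications of the $\parr$ rule give $\vdash C\parr p_1^\bot\parr\cdots\parr p_m^\bot$, which by the de Morgan dualities is exactly $\vdash E^\bot$; a promotion with empty context then yields $\vdash\oc E^\bot$. Since $(\wn E)^\bot=\oc E^\bot$, we cut the proof of $\vdash\wn\enc{T},\Gamma$ successively against these $|T|$ proofs, one per axiom, eliminating every why-not formula of $\wn\enc{T}$ and obtaining a proof of $\vdash\Gamma$ in LL$+T$.

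The routine part is threading $\wn\enc{T}$ through the logical rules. The main point to get right is the manufacture of the encoding formula $E$ from identity axioms in the forward direction, together with the observation that dereliction lets us absorb it into the persistent context $\wn\enc{T}$; symmetrically, one must notice that $\vdash\oc E^\bot$ is already provable in LL$+T$ (theory rule, then $\parr$, then promotion) so that it can be cut away in the backward direction. A minor point is checking the side condition of the promotion rule in the inductive case, namely that prepending the why-not context $\wn\enc{T}$ does not violate it.
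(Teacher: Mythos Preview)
The paper does not give its own proof of this Fact; it is cited from \citet{lincoln92} (their Lemmata~3.2 and~3.3, as the paper mentions later when discussing the affine case). Your argument is essentially the standard one and is correct; it also matches the pattern the paper uses explicitly in the proof of the analogous \autoref{lem-llct} for LLC (induction over the directed proof, assembling $\vdash t,\enc{t}$ from identity axioms and $\otimes$, then dereliction and contraction into the persistent $\wn\enc{T}$ context; and, for the converse, deriving $\vdash\oc\enc{t}^\bot$ in LL$+T$ and cutting each $\wn\enc{t}$ away).

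Two minor imprecisions in your write-up, neither of which affects correctness: the $\with$ rule does \emph{not} split its context (both premises share the same $\Gamma$), so no contractions are needed there; and since you start from a \emph{directed} proof, the ordinary cut rule does not occur, so that case need not be listed among the context-splitting rules.
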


\subsubsection{Encoding \abv\nopunct.}
Given an ABVASS $\?A=\tup{Q,d,T_u,T_f,T_s,\emptyset}$, a configuration
$(q,\vec v)$ in $Q\times\+N^d$ is encoded as the sequent
\begin{equation}
  \theta(q,\vec v)\eqdef\:\vdash q^\bot,(e_1^\bot)^{\vec
    v(1)},\dots,(e_d^\bot)^{\vec v(d)}
\end{equation}
where $Q\uplus\{e_i\mid i=1,\dots,d\}$ is included in the set of
atomic propositions and $A^n$ stands for the formula $A$ repeated $n$
times.

By \autoref{lem-ordinary} we assume $\?A$ to be in ordinary form.  We
construct from the rules of $\?A$ a theory $T$ consisting of sequents
of form $\vdash q^\bot,c_1^\bot,\dots,c_m^\bot,C$ with $q$ in $Q$ the
originating state, $c_j$ in $\{e_i\mid0<i\leq d\}$, and $C$ a MALL
formula containing the destination state(s) positively.  Here are the
axioms corresponding to each type of rule:
\begin{align*}
q&\xrightarrow{\vec e_i} q_1
 &&q^\bot,q_1\otimes e_i\\
q&\xrightarrow{-\vec e_i} q_1
 &&q^\bot,e_i^\bot,q_1\\
q&\to q_1\wedge q_2
 &&q^\bot,q_1\oplus q_2\\
q&\to q_1+q_2
 &&q^\bot,q_1\parr q_2
 \shortintertext{By \autoref{lem-abvass}, we do not need to consider the
   case of full zero tests.  Here is nevertheless how they could be
   encoded, provided we slightly extended the reduction
   of LL$+T$ to LL in \autoref{fc-llt} to allow exponentials in $T$:}
q&\rst q_1
 &&q^\bot,\oc q_1
\end{align*}

\stepcounter{theorem}
\begin{claim}\label{cl-abvass-ll}
  For all $(q,\vec v)$ in $Q\times\+N^d$, $\?A,Q_\ell\jdg q,\vec v$
  if and only if $\vdash\theta(q,\vec v),\wn Q_\ell$ in LL$+T$.
\end{claim}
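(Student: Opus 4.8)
\emph{Strategy and the left-to-right implication.}
The proof extends the reduction of \citet{lincoln92} for AVASS (later reused by \citet{urquhart99}): by construction the theory $T$ is in bijection with the rules of $\?A$, and a \emph{directed} cut against the $T$-axiom of a rule $r$ simulates exactly one application of the matching \abv\ deduction rule on the encoded configurations. Throughout, \emph{init} may be taken atomic. For the forward implication I induct on the height of a reachability witness $\?D$ for $(q,\vec v)$ with leaf labels in $Q_\ell\times\{\vec 0\}$, proving $\vdash\theta(q,\vec v),\wn Q_\ell$ in LL$+T$. If $\?D$ is a single leaf then $q=q_\ell\in Q_\ell$ and $\vec v=\vec 0$, and $\vdash q_\ell^\bot,\wn Q_\ell$ follows from the axiom $\vdash q_\ell^\bot,q_\ell$ by one dereliction~($\wn$D) and $|Q_\ell|-1$ logical weakenings~($\wn$W). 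Otherwise I case on the rule at the root of $\?D$ and feed the induction hypotheses for its children into the corresponding $T$-axiom through a directed cut, after one logical step on the child sequent(s): rule $q\xrightarrow{\vec e_i}q_1$ uses $q^\bot,q_1\otimes e_i$ (turn $\vdash\theta(q_1,\vec v+\vec e_i),\wn Q_\ell$ into $\vdash q_1^\bot\parr e_i^\bot,\dots$ by a $\parr$-rule, then cut on $C=q_1\otimes e_i$); rule $q\xrightarrow{-\vec e_i}q_1$ cuts $q^\bot,e_i^\bot,q_1$ with $C=q_1$ directly against $\vdash\theta(q_1,\vec v-\vec e_i),\wn Q_\ell$, whose well-formedness is precisely the side condition $\vec v-\vec e_i\in\+N^d$; a fork $q\to q_1\wedge q_2$ uses $q^\bot,q_1\oplus q_2$ (combine the two hypotheses, which carry the same $e_i^\bot$-part and the same $\wn Q_\ell$, by a $\with$-rule, then cut on $C=q_1\oplus q_2$); and a split $q\to q_1+q_2$ uses $q^\bot,q_1\parr q_2$ (merge the two hypotheses by a $\otimes$-rule---their vector parts combining into that of $\vec v=\vec v_1+\vec v_2$---apply $|Q_\ell|$ logical contractions~($\wn$C) to fuse the two copies of $\wn Q_\ell$, then cut on $C=q_1\parr q_2$).

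\emph{The right-to-left implication.}
For the converse I invoke \autoref{fc-dcut} to fix a \emph{directed} proof $\pi$ of $\vdash\theta(q,\vec v),\wn Q_\ell$ and induct on its size. The bookkeeping step, carried out as in \citet{lincoln92}, is the invariant that every sequent of $\pi$ is, up to a pending leaf-state atom produced by a dereliction and a pending dual-of-axiom formula ($p_1^\bot\parr e_i^\bot$, $p_1^\bot$, $p_1^\bot\with p_2^\bot$ or $p_1^\bot\otimes p_2^\bot$) introduced by a directed cut on its right premise, of the form $\vdash p^\bot,(e_1^\bot)^{w_1},\dots,(e_d^\bot)^{w_d},\wn\Xi$ with $p\in Q$ and $\wn\Xi$ a multiset over $Q_\ell$---that is, it encodes an \abv\ configuration together with a weakened-and-contracted copy of $\wn Q_\ell$. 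In particular $p^\bot$ and the $e_i^\bot$'s are never principal for a multiplicative or additive rule, so $\bot,\mathbf 1,\top,\parr,\otimes,\with,\oplus$ occur only transiently as that residual formula, which is decomposed by its own logical rule before the branch closes through \emph{init}. Analysing the last rule of $\pi$: a directed cut against the $T$-axiom of a rule $r$ of $\?A$ yields, by the induction hypothesis applied to its right premise (and, when $r$ is a split, to the two subproofs obtained after decomposing the $\otimes$ and the residual $\parr$'s, whose vector parts then sum to that of $\vec v$), a reachability witness for each child configuration, and prepending the \abv\ deduction rule associated with $r$ gives a witness for $(q,\vec v)$; a logical weakening or contraction acting on $\wn\Xi$ is absorbed because the root judgement records only the set $Q_\ell$; and a dereliction of some $\wn q_\ell$ whose atom is subsequently matched by \emph{init} realises the leaf $q_\ell\in Q_\ell$.

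\emph{Main obstacle.}
The only genuine difficulty is establishing the shape invariant for directed proofs: one must rule out ``junk'' sequents and show that a directed cut can consume an atom $p^\bot$ only through the family of $T$-axioms originating in $p$, so that the directed-cut skeleton of $\pi$ is literally an \abv\ deduction tree. This is exactly where the directedness guaranteed by \autoref{fc-dcut} is used, and it proceeds as in \citet{lincoln92} and \citet{urquhart99}; relative to their argument the only additions are the routine $\with$-case (fork) and $\otimes$-case (split).
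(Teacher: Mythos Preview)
Your proposal is correct and follows essentially the same approach as the paper: both rely on \citet{lincoln92} for the AVASS cases and extend the argument to split rules, with the forward direction combining the two hypotheses via $(\otimes)$, $|Q_\ell|$ contractions, and a directed cut against $q^\bot,q_1\parr q_2$, and the backward direction analysing what rules can produce the right premise of the last directed cut.

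Two small points to clean up. First, in your converse treatment of splits you write ``decomposing the $\otimes$ and the residual $\parr$'s'': there are no $\parr$'s here---the residual formula after cutting on $C=q_1\parr q_2$ is $q_1^\bot\otimes q_2^\bot$, and a single $(\otimes)$ step already yields the two subproofs in the right shape (the $\parr$ residual arises in the $+\vec e_i$ case, not the split case). Second, the paper is explicit about how $(\wn\text{W})$, $(\wn\text{C})$, $(\wn\text{D})$ interact with that $(\otimes)$: it argues that wlog derelictions occur above the $(\otimes)$ and that weakenings below it only shrink the set of $\wn q_\ell$'s, so the two premises carry $\wn Q_1$ and $\wn Q_2$ with $Q_1\cup Q_2\subseteq Q_\ell$, whence the induction hypothesis gives $\?A,Q_j\jdg q_j,\vec v_j$ and monotonicity in $Q_\ell$ concludes. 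Your shape invariant absorbs this, but it is worth stating that the two copies of $\wn Q_\ell$ need not be split symmetrically by the $(\otimes)$.
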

\begin{proof}
  The AVASS case is proved by \citet[Lemmata~3.5 and~3.6]{lincoln92}
  by induction on the height of deduction trees in $\?A$ and the
  number of directed cuts in a directed proof in LL$+T$ (with minor
  adaptations for $\wn Q_\ell$).  Thus, we only need to prove that
  split rules preserve this statement.\footnote{\Citet{degroote04}
    show how to handle split rules in IMELL, but they do not rely on
    the LL$+T$ framework, which motivates considering this case here.}

  Assume for the direct implication that $\?A,Q_\ell\jdg q,\vec v$
  as the result of a split rule $q\to q_1+q_2$, thus $\vec v=\vec
  v_1+\vec v_2$ and $\?A,Q_\ell\jdg q_1,\vec v_1$ and
  $\?A,Q_\ell\jdg q_2,\vec v_2$.  By induction hypothesis,
  $\vdash\theta(q_1,\vec v_1),\wn Q_\ell$ and $\vdash\theta(q_2,\vec
  v_2),\wn Q_\ell$, and we have the following proof of
  $\vdash\theta(q,\vec v),\wn Q_\ell$:  \vspace*{-1.5em}
  {\begin{prooftree}%
    \AxiomC{i.h.}
    \UnaryInfC{$\vdash q_1^\bot,(c_1^\bot)^{\vec
        v_1(1)},\ldots,(c_d^\bot)^{\vec v_1(d)},\wn Q_\ell$}
    \AxiomC{i.h.}
    \UnaryInfC{$\vdash q_2^\bot,(c_1^\bot)^{\vec
        v_2(1)},\ldots,(c_d^\bot)^{\vec v_2(d)},\wn Q_\ell$}
    \LeftLabel{{\tiny($\otimes$)}}
    \BinaryInfC{$\vdash q_1^\bot\otimes
      q_2^\bot,(c_1^\bot)^{\vec v_1(1)+\vec v_2(1)},\dots,(c_d^\bot)^{\vec
        v_1(d)+\vec v_2(d)},\wn Q_\ell,\wn Q_\ell$}
    \doubleLine
    \LeftLabel{{\tiny({?}C)}}
    \UnaryInfC{$\vdash q_1^\bot\otimes
      q_2^\bot,(c_1^\bot)^{\vec v_1(1)+\vec v_2(1)},\dots,(c_d^\bot)^{\vec
        v_1(d)+\vec v_2(d)},\wn Q_\ell$}
    \AxiomC{$\!\!\!\!\!\vdash q^\bot, q_1\parr q_2$}
    \LeftLabel{{\tiny (dir.\ cut)}}
    \BinaryInfC{$\vdash q^\bot,(c_1^\bot)^{\vec v_1(1)+\vec v_2(1)},\dots,(c_d^\bot)^{\vec
        v_1(d)+\vec v_2(d)},\wn Q_\ell$}
  \end{prooftree}}

  Conversely, assume that the last applied directed cut has
  \begin{equation}\label{eq-ll-split}\mbox{$\vdash q_1^\bot\otimes
      q_2^\bot$},(c_1^\bot)^{\vec v(1)},\dots,(c_d^\bot)^{\vec v(d)},\wn
    Q_\ell
  \end{equation} and $\vdash
  q^\bot,q_1\parr q_2$ as premises.  The only rules that allow to
  prove \eqref{eq-ll-split} are ($\wn$D), ($\wn$C) or ($\wn$W) applied to some $q_\ell$
  in $Q_\ell$, and ($\otimes$).  Logical contractions are irrelevant
  to the claim, and wlog.\ we can apply derelictions above
  ($\otimes$), thus we know that \eqref{eq-ll-split} is the result of
  ($\otimes$) followed by a series of ($\wn$W).  Hence $\vdash\theta(q_1,\vec v_1),\wn Q_1$ and $\vdash\theta(q_2,\vec v_2),\wn Q_2$ with $\vec v=\vec v_1+\vec
  v_2$ and $Q_\ell\supseteq Q_1\cup Q_2$.  By induction hypothesis,
  $\?A,Q_1\jdg q_1,\vec v_1$ and $\?A,Q_2\jdg q_2,\vec v_2$.  Because
  $Q_1\subseteq Q_\ell$ and $Q_2\subseteq Q_\ell$ this entails
  $\?A,Q_\ell\jdg q_1,\vec v_1$ and $\?A,Q_\ell\jdg q_2,\vec v_2$,
  from which a split allows to derive $\?A,Q_\ell\jdg q,\vec v$ as
  desired.
\end{proof}
\addtocounter{theorem}{-1}
\begin{proposition}\label{prop-abvass-ll}
  There are logarithmic space reductions
  \begin{enumerate}
    \item\label{abvass-ll1} from \abv\ reachability to LL provability and
    \item\label{abvass-ll2} from BVASS$_{\vec 0}$ reachability to MELL
      provability.
  \end{enumerate}
\end{proposition}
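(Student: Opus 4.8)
The plan is to chain together the reductions and equivalences established above. For~\ref{abvass-ll1}, start from a reachability instance $\tup{\?A,q_r,Q_\ell}$ for an \abv. By \autoref{lem-abvass} it is equivalent to a reachability instance over an ABVASS, and then by \autoref{lem-ordinary} to one over an \emph{ordinary} ABVASS $\?A'=\tup{Q,d,T_u,T_f,T_s,\emptyset}$, with some target state $q_r$ and set of leaf states $Q_\ell$; both steps are logarithmic-space, and logarithmic-space reductions compose. From the rules of $\?A'$ build the theory $T$ exactly as in the display of axioms preceding \autoref{cl-abvass-ll}. Since the target counter vector is $\vec 0$, the sequent $\theta(q_r,\vec 0)$ is just $\vdash q_r^\bot$, so \autoref{cl-abvass-ll} yields that $\?A',Q_\ell\jdg q_r,\vec 0$ iff $\vdash q_r^\bot,\wn Q_\ell$ is provable in LL$+T$, and \autoref{fc-llt} converts this into provability of $\vdash\wn\enc T,q_r^\bot,\wn Q_\ell$ in pure LL. The map sending $\tup{\?A,q_r,Q_\ell}$ to this sequent --- or to the $\parr$ of its formulas, if provability is phrased for single formulas --- is thus the required reduction, and it is computable in logarithmic space because $T$ contains one axiom of bounded size per rule of $\?A'$ and $\enc{\cdot}$ merely dualises and tensors. (Alternatively one could keep the full zero tests, encoding $q\rst q_1$ as $q^\bot,\oc q_1$, at the price of extending \autoref{fc-llt} to theories using exponentials; going through \autoref{lem-abvass} sidesteps this.)

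For~\ref{abvass-ll2} I run the same pipeline on a BVASS$_{\vec 0}$, i.e.\ an \abv\ with $T_f=\emptyset$. The key point is that every step keeps the set of fork rules empty: the shortening construction in the proof of \autoref{lem-abvass} and the coordinate-expansion of \autoref{lem-ordinary} only add fresh unary rules and reuse the existing split rules, so $\?A'$ is an ordinary BVASS. Consequently the theory $T$ it produces uses only the \emph{multiplicative} axiom shapes $q^\bot,q_1\otimes e_i$, $q^\bot,e_i^\bot,q_1$, and $q^\bot,q_1\parr q_2$ --- the additive shape $q^\bot,q_1\oplus q_2$ originates solely from fork rules and never appears --- so $\wn\enc T$, and hence the whole goal sequent $\vdash\wn\enc T,q_r^\bot,\wn Q_\ell$, lies in the multiplicative exponential fragment. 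Restricting \autoref{cl-abvass-ll} and \autoref{fc-llt} to that fragment then gives equivalence with MELL provability, under the same logarithmic-space encoding.

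The one delicate ingredient is precisely this restriction to MELL. In the soundness and completeness induction of \autoref{cl-abvass-ll} each rule type is treated separately, and the unary and split cases invoke only $\otimes$, $\parr$, the $\wn$-rules and directed cuts, so they carry over verbatim to MELL$+T$; similarly \citeauthor{lincoln92}'s passage from LL$+T$ to LL (\autoref{fc-llt}) and the directed cut-elimination underlying it (\autoref{fc-dcut}) specialise to MELL, as MELL is closed under dualisation, $\otimes$ and $\wn$, and enjoys cut elimination. I expect checking these specialisations to be the main --- though fairly light --- obstacle; the composition of the logarithmic-space reductions and the attendant bookkeeping on sizes are routine.
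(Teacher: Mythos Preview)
Your proposal is correct and follows essentially the same approach as the paper: eliminate full zero tests via \autoref{lem-abvass}, then apply \autoref{cl-abvass-ll} together with \autoref{fc-llt} to obtain the target sequent $\vdash q_r^\bot,\wn Q_\ell,\wn\enc{T}$, and for the MELL case observe that additive connectives arise only from the encoding of fork rules. Your write-up is more explicit than the paper's about invoking \autoref{lem-ordinary} (which the paper treats as part of the setup preceding the encoding), about checking that \autoref{lem-abvass} and \autoref{lem-ordinary} preserve the absence of fork rules, and about the restriction of \autoref{fc-dcut}, \autoref{fc-llt}, and \autoref{cl-abvass-ll} to MELL; but these are elaborations of the same argument rather than a different route.
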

\begin{proof}
  By \autoref{lem-abvass} we can eliminate full zero tests.
  For~\ref{abvass-ll1}, by \autoref{cl-abvass-ll} and
  \autoref{fc-llt}, $\?A,Q_\ell\jdg q_r,\vec 0$ if and only if
  \mbox{$\vdash q_r^\bot,\wn Q_\ell,\wn\enc{T}$}.
  Regarding~\ref{abvass-ll2}, simply observe that additive connectives
  are only used for the encoding of fork rules.
\end{proof}
\subsubsection{Affine Case\nopunct.}
Adapting the proof of \autoref{prop-abvass-ll} to the affine case is
relatively straightforward.  For starters, \autoref{fc-dcut} also
holds for LLW$+T$ using the cut elimination procedure for LLW, and
allowing structural weakenings does not influence the proof of
\autoref{fc-llt} in \citep[Lemmata~3.2 and~3.3]{lincoln92}.  We show:
\begin{proposition}\label{prop-abvass-llw}
  There are logarithmic space reductions
  \begin{enumerate}
  \item\label{abvass-llw1} from \abv\ lossy reachability to LLW
    provability and
  \item\label{abvass-llw2} from BVASS$_{\vec 0}$ lossy reachability to
    MELLW provability.
  \end{enumerate}
\end{proposition}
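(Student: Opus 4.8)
The plan is to follow the proof of \autoref{prop-abvass-ll} step by step, recording where affineness enters. First, using the lossy versions of \autoref{lem-abvass} and \autoref{lem-ordinary}, I reduce in logarithmic space lossy \abv\ reachability (resp.\ lossy BVASS$_{\vec 0}$ reachability) to lossy reachability in an \emph{ordinary} ABVASS (resp.\ ordinary BVASS): this disposes of full zero tests and makes every unary rule use a unit vector $\pm\vec e_i$. From such a system $\?A$ and a set $Q_\ell$ I build the very same theory $T$ as in \autoref{sec-focus} — axioms $q^\bot, q_1\otimes e_i$ for $q\xrightarrow{\vec e_i}q_1$, $q^\bot, e_i^\bot, q_1$ for $q\xrightarrow{-\vec e_i}q_1$, $q^\bot, q_1\oplus q_2$ for $q\to q_1\wedge q_2$, and $q^\bot, q_1\parr q_2$ for $q\to q_1+q_2$ — and keep the configuration encoding $\theta(q,\vec v)$.

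The heart of the argument is the affine analogue of \autoref{cl-abvass-ll}: for all $(q,\vec v)$ in $Q\times\+N^d$, one has $\?A,Q_\ell\jdg_\ell q,\vec v$ if and only if $\vdash\theta(q,\vec v),\wn Q_\ell$ is provable in LLW$+T$. Since \autoref{fc-dcut} carries over to LLW$+T$ (as recalled just before the statement), it is enough to put lossy deduction trees of $\?A$ in correspondence with directed LLW$+T$ proofs, by induction on height exactly as in \citep{lincoln92} and \autoref{cl-abvass-ll}. Every case there is reused verbatim, with a single addition: the loss rule $q,\vec v\to q,\vec v-\vec e_i$. In the forward direction this is immediate, because $\vdash\theta(q,\vec v),\wn Q_\ell$ is obtained from $\vdash\theta(q,\vec v-\vec e_i),\wn Q_\ell$ by adjoining one occurrence of $e_i^\bot$, i.e.\ by one application of structural weakening~(W) with $A=e_i^\bot$. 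In the converse direction I would normalise a directed proof (pushing logical contractions towards the conclusion and weakenings towards the axioms) so that the residual applications of~(W) only introduce either a formula $e_i^\bot$ — decoded as a loss — or a $\wn$-guarded atom $\wn q_\ell$ with $q_\ell\in Q_\ell$ — which is harmless, since the premise then proves $\vdash\theta(q',\vec v'),\wn Q'$ for some $Q'\subseteq Q_\ell$ and root judgements are monotone in $Q_\ell$; a leaf label $(q_\ell,\vec v)$ with $\vec v$ arbitrary is decoded through $\vdash q_\ell^\bot,q_\ell$ by (init), then~(W) for the $e_i^\bot$'s and the rest of $\wn Q_\ell$, then a dereliction yielding $\wn q_\ell$. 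Establishing that no other weakenings are forced — i.e.\ that all sequents occurring in a directed proof of $\vdash\theta(q,\vec v),\wn Q_\ell$ retain exactly one negated state atom besides the decomposition residue of the last directed cut — is the only genuinely new bookkeeping and is where I expect the care to lie.

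With this claim, the two reductions conclude exactly as for \autoref{prop-abvass-ll}. For \ref{abvass-llw1}, instantiating it at $(q_r,\vec 0)$ and then invoking \autoref{fc-llt} for LLW (valid, as recalled before the statement) gives $\?A,Q_\ell\jdg_\ell q_r,\vec 0$ iff $\vdash q_r^\bot,\wn Q_\ell,\wn\enc{T}$ is provable in LLW; precomposing with \autoref{lem-abvass} and \autoref{lem-ordinary} yields the desired logarithmic-space reduction. For \ref{abvass-llw2}, the additive connective $\oplus$ occurs among the axioms of $T$ only in the encoding of fork rules, so for a BVASS$_{\vec 0}$ (which has none) the theory $T$, the formula $\enc{T}$, and every sequent in the construction stay inside the multiplicative exponential fragment; the same chain then reduces lossy BVASS$_{\vec 0}$ reachability to MELLW provability. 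Thus the only real obstacle is the soundness half of the affine claim, namely controlling which~(W) steps a directed LLW$+T$ proof of an encoded configuration can use and checking that each one decodes to a loss or to a shrinking of $Q_\ell$.
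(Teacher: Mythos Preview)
Your proposal is correct and follows the paper's approach essentially verbatim: eliminate full zero tests and pass to ordinary form via \autoref{lem-abvass} and \autoref{lem-ordinary}, reuse the theory $T$ and the encoding $\theta$, establish the affine analogue of \autoref{cl-abvass-ll} by adding the loss/weakening case to the induction, and conclude via \autoref{fc-llt}; for part~\ref{abvass-llw2} you observe, as the paper does, that additives arise only from fork rules. The one point on which you are more careful than the paper is the soundness case analysis on the weakened formula: the paper's proof of \autoref{cl-abvass-llw} lists only $\wn q_\ell$ and $e_i^\bot$, silently omitting the possibility that~(W) introduces $q^\bot$; your instinct that this needs checking is right, but the missing case is trivial since the resulting premise (a sequent with no negated state atom) is not provable in LLW$+T$, so no normalisation machinery is actually required.
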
\noindent
This relies on an extension of \autoref{cl-abvass-ll}%
\ifsubmission%
\ proved in \appref{sec-omitted}%
\fi%
:%
\begin{restatable}{claim}{clabvassllw}\label{cl-abvass-llw}
  For all $(q,\vec v)$ in $Q\times\+N^d$, $\?A,Q_\ell\jdg_\ell q,\vec v$
  with lossy semantics if and only if $\vdash\theta(q,\vec v),\wn
  Q_\ell$ in LLW$+T$.
\end{restatable}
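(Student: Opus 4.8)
The plan is to adapt the proof of \autoref{cl-abvass-ll} almost verbatim, with the \emph{loss} deduction rule of lossy \abv\ playing precisely the role that structural weakening~(W) plays in LLW$+T$. As noted just before the claim, \autoref{fc-dcut} carries over to LLW$+T$ through the cut-elimination procedure for LLW, so we may again confine ourselves to directed proofs; and by \autoref{lem-abvass} we need not treat full zero tests, so the encoding $\theta$ and the theory $T$ are unchanged.

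For the ``only if'' direction, I would induct on the height of a lossy reachability witness for $(q,\vec v)$. The leaf case and the unary, fork, and split cases go through exactly as in \autoref{cl-abvass-ll}. The single new case is an application of a loss rule $q\xrightarrow{-\vec e_i}q$ at the root, whose premise is $\?A,Q_\ell\jdg_\ell q,\vec v-\vec e_i$ (so in particular $\vec v(i)\geq 1$); the induction hypothesis yields a proof of $\vdash\theta(q,\vec v-\vec e_i),\wn Q_\ell$ in LLW$+T$, and since $\theta(q,\vec v)$ differs from it only by one further occurrence of $e_i^\bot$, a single application of~(W) gives $\vdash\theta(q,\vec v),\wn Q_\ell$.

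For the ``if'' direction, I would take a directed proof of $\vdash\theta(q,\vec v),\wn Q_\ell$ in LLW$+T$ and induct on the pair (number of directed cuts, proof height), reusing all the cases of \autoref{cl-abvass-ll}---the treatment of (init), of the logical-rule-then-directed-cut combinations that decode the unary, fork, and split axioms of $T$, and of ($\wn$D)/($\wn$W)/($\wn$C) acting on the $\wn Q_\ell$ block---and adding the case where the last rule is a structural weakening~(W). When~(W) introduces a counter atom $e_i^\bot$, the premise is $\vdash\theta(q,\vec v-\vec e_i),\wn Q_\ell$, and a single application of the loss deduction rule turns the witness provided by the induction hypothesis into one for $(q,\vec v)$; when~(W) introduces a formula $\wn q_\ell$ with $q_\ell\in Q_\ell$ it is immaterial, exactly as logical contraction is, since $Q_\ell$ is tracked as a set.

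The step I expect to be the main obstacle is controlling the use of~(W) in the converse direction: a priori~(W) can introduce an arbitrary formula and thereby destroy the tight correspondence between the sequents of a directed proof and \abv\ configurations. I would deal with this by a permutation/normalisation argument---in the spirit of the analysis of directed proofs by \citet{lincoln92}---showing that in a directed proof of $\vdash\theta(q,\vec v),\wn Q_\ell$ every structural weakening may be assumed to introduce only a counter atom $e_i^\bot$ or a $\wn$-guarded leaf state $\wn q_\ell$, which are exactly the two cases treated above. Together with the ``only if'' direction this establishes the claim.
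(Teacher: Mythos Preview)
Your plan is essentially the paper's: both inductions carry over from \autoref{cl-abvass-ll}, and the only new cases are loss (for the forward direction) and structural weakening (for the converse).  The paper's proof is a single short paragraph and treats exactly the two sub-cases you list for~(W), namely the weakened formula being some $e_i^\bot$ (matched by a loss) or some $\wn q_\ell$ (harmless since $Q_\ell$ is handled as a set).

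Your stated obstacle, however, is slightly misdiagnosed, and the cure you propose is heavier than required.  Structural weakening applied as the \emph{last} rule in a proof of $\vdash\theta(q,\vec v),\wn Q_\ell$ cannot introduce an ``arbitrary'' formula: it can only introduce a formula already present in that sequent, i.e.\ $q^\bot$, some $e_i^\bot$, or some $\wn q_\ell$.  So no global permutation/normalisation in the spirit of \citet{lincoln92} is needed; a direct case split on the last rule suffices, as the paper does.  The one sub-case neither you nor the paper spells out is (W) introducing $q^\bot$: the premise then has no negated state atom, and a short directed-proof-search argument---of the same flavour as the one in \autoref{cl-abvass-llc} ruling out structural contraction on $q^\bot$---shows it is unprovable in LLW$+T$, so this sub-case is vacuous.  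Where (W) does require a moment's attention is inside the per-axiom analyses already present in \autoref{cl-abvass-ll} (e.g.\ the analysis of the premise $\vdash q_1^\bot\otimes q_2^\bot,\dots,\wn Q_\ell$ for splits), but there (W) commutes past the logical rule just as ($\wn$D), ($\wn$W), ($\wn$C) already do, so this is a local extension rather than a new global normalisation.
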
%
\ifomitproofs\relax\else
\begin{proof}
  We proceed as before by induction on the height of a deduction tree
in $\?A$ and on the number of directed cuts in a proof in LLW$+T$.
The only new cases to consider in addition to those of
\autoref{cl-abvass-ll} are those of losses and structural weakenings.
In case of a loss allowing to derive $\?A,Q_\ell\jdg_\ell q,\vec v+\vec
e_i$, by induction hypothesis $\vdash\theta(q,\vec v),\wn Q_\ell$ and
a structural weakening yields $\vdash\theta(q,\vec v+\vec e_i),\wn
Q_\ell$ as desired.  Conversely, in case of a structural weakening
allowing to derive $\vdash\theta(q,\vec v),\wn Q_\ell$, either the
weakened formula is some $\wn q_\ell$ guarded by $\wn$ and by
induction hypothesis $\?A,Q_\ell\setminus\{q_\ell\}\jdg_\ell q,\vec v$
thus $\?A,Q_\ell\jdg_\ell q,\vec v$ as desired, or the weakened formula
is some $c_i^\bot$ and by induction hypothesis $\?A,Q_\ell\jdg_\ell
(q,\vec v-\vec e_i)$ and a loss allows to derive $\?A,Q_\ell\jdg_\ell
q,\vec v$ as desired.
\end{proof}
\fi%

\subsubsection{Contractive Case\nopunct.}
Again, \autoref{fc-dcut} is straightforward to adapt to LLC$+T$ using
cut elimination.  \autoref{fc-llt} can be strengthened to avoid
exponentials in the contractive case%
\ifomitproofs%
  ; see \appref{sec-omitted} for a proof%
\fi%
:
\begin{restatable}{lemma}{lemllct}\label{lem-llct}
  For a finite set of axioms $T$, $\vdash\Gamma$ is provable in
  LLC$+T$ %
  if and only if $\vdash\bot\oplus\bigoplus_{t\in
    T}\enc{t},\Gamma$ is provable in LLC.
\end{restatable}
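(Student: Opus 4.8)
The plan is to mimic the proof of \autoref{fc-llt} from \citep{lincoln92}, but replace the ``$\wn$'' modality — which is what makes copies of an axiom available via logical contraction ($\wn$C) — by the additive connective $\oplus$ together with structural contraction (C), which is now available in LLC. The single formula $D \eqdef \bot\oplus\bigoplus_{t\in T}\enc{t}$ plays the role that $\wn\enc{T}$ played in \autoref{fc-llt}: structural contraction on $D$ lets us produce as many copies as the proof needs, the $\bot$ disjunct lets us discard a copy (it is provable from $\vdash\bot$, i.e.\ from the empty sequent via ($\bot$)), and each disjunct $\enc{t}=C^\bot\otimes p_1\otimes\cdots\otimes p_m$ lets us invoke axiom $t$ exactly as in the directed-cut simulation.

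\medskip
\noindent\textbf{The forward direction} ($\Rightarrow$): assume $\vdash\Gamma$ is provable in LLC$+T$. By (the LLC-analogue of) \autoref{fc-dcut}, take a directed proof $\pi$, and induct on the number of directed cuts in $\pi$. If $\pi$ uses no directed cut, it is an ordinary LLC proof of $\vdash\Gamma$, and one application of (W) (available in LLC$+$the affine-like contraction? — no: LLC has contraction, not weakening) — so instead I introduce $D$ via the provable sequent $\vdash D,\Gamma$ obtained by $\oplus$-introduction into the $\bot$ disjunct from $\vdash\bot,\Gamma$, which follows from $\vdash\Gamma$ by ($\bot$). For the inductive step, the bottom-most directed cut pairs some axiom instance $\vdash C^\bot,p_1,\dots,p_m$ (here I write the axiom as $\vdash C,p_1^\bot,\dots,p_m^\bot$ so its $\enc{\cdot}$-translation dual is what cuts) against a subproof; by induction that subproof's conclusion becomes provable in LLC after adding $D$, then I provide a second copy of $D$ (by structural contraction), select the $t$-disjunct with nested ($\oplus$)-elimination — i.e.\ ($\with$)-style case analysis is not needed since $\oplus$ on the \emph{right} is choice; rather I need $D$ on the \emph{left}, so I actually work with $D^\bot = \mathbf 1\with\bigwith_{t}\enc{t}^\bot$ appearing via a cut against $\vdash D$, which is provable — and then project onto the $\enc{t}^\bot = C\parr p_1^\bot\parr\cdots\parr p_m^\bot$ component to recover the axiom and re-run the directed cut inside pure LLC. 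Each remaining ``virtual'' use of the axiom is served by contracting $D$ again.

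\medskip
\noindent\textbf{The backward direction} ($\Leftarrow$): from an LLC proof of $\vdash D,\Gamma$, build an LLC$+T$ proof of $\vdash\Gamma$ by cutting against $\vdash D^\bot$. The point is that $\vdash D^\bot$, i.e.\ $\vdash\mathbf 1\with\bigwith_{t\in T}\enc{t}^\bot$, \emph{is} provable in LLC$+T$: the $\mathbf 1$ conjunct is the ($\mathbf 1$) axiom, and each $\enc{t}^\bot = C\parr p_1^\bot\parr\cdots\parr p_m^\bot$ conjunct is provable by ($\parr$) from the axiom sequent $\vdash C,p_1^\bot,\dots,p_m^\bot$ of $T$ itself. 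A single (ordinary, non-directed) cut of $\vdash D,\Gamma$ against $\vdash D^\bot$ then yields $\vdash\Gamma$ in LLC$+T$, and (the LLC-version of) \autoref{fc-dcut} is not needed here. One must double-check that introducing this cut is legitimate — it is, since LLC$+T$ permits arbitrary cuts; directedness is only needed to control proofs we \emph{analyse}, not proofs we \emph{build}.

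\medskip
\noindent\textbf{The main obstacle} will be the forward direction's bookkeeping: unlike the $\wn$-modality, where ($\wn$C), ($\wn$W), ($\wn$D) give a clean ``as many copies as you like, used one at a time'' discipline, here each copy of $D$ must be spawned by (C) \emph{before} it is consumed, and the $\oplus$/$\with$ case-analysis that selects the right axiom disjunct interacts with the context $\Gamma$ (the ($\with$)-rule duplicates the context). I expect to handle this by threading the induction through the directed-cut count so that at each step only one fresh copy of $D$ is opened, its $\mathbf 1$-component absorbs the ``no more axioms needed'' case, and the remaining $D$ is passed down untouched; verifying that the context duplication introduced by ($\with$) is harmless — because the extra copy carries only $D^\bot$-material that is immediately closed off — is the delicate point. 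The affine-case template of \autoref{cl-abvass-llw} and the original \citep{lincoln92} argument for \autoref{fc-llt} give the skeleton; only the modality-to-additive swap is new.
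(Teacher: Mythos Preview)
Your overall plan and your backward direction match the paper exactly: prove $\vdash D^\bot=\mathbf 1\with\bigwith_{t\in T}\enc{t}^\bot$ in LLC$+T$ from ($\mathbf 1$), the axioms, ($\parr$) and ($\with$), then cut against $\vdash D,\Gamma$.  The forward base case is also right: from an LLC proof of $\vdash\Gamma$, apply ($\bot$) then ($\oplus$) to obtain $\vdash D,\Gamma$ where $D\eqdef\bot\oplus\bigoplus_{t\in T}\enc{t}$.

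The gap is in your forward induction step.  You propose to contract $D$ first and then ``open'' one copy to select the $t$-disjunct, eventually contemplating a cut against ``$\vdash D$, which is provable'', or working with $D^\bot$ and a ($\with$)-projection.  Neither works in pure LLC: $\vdash D$ is \emph{not} provable (it would require $\vdash\bot$, hence the empty sequent, or some $\vdash\enc{t}$), and $\vdash D^\bot$ is only available in LLC$+T$---precisely what you exploited in the backward direction.  Your worry about ($\with$) duplicating the context is a symptom of this wrong turn; no ($\with$) is needed in the forward direction at all.

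The fix---and this is exactly the paper's argument---is to reverse the order: \emph{first} manufacture a second copy of $D$, \emph{then} contract.  From (init) and $m$ applications of ($\otimes$) one obtains in pure LLC the sequent $\vdash C,p_1^\bot,\dots,p_m^\bot,\enc{t}$, i.e.\ $\vdash t,\enc{t}$.  The induction hypothesis on the upper premise of the bottommost directed cut gives $\vdash D,C^\bot,\Delta$; a normal cut on $C$ yields $\vdash D,\enc{t},p_1^\bot,\dots,p_m^\bot,\Delta$.  Now ($\oplus$) promotes $\enc{t}$ to a second occurrence of $D$, and a single structural contraction gives $\vdash D,p_1^\bot,\dots,p_m^\bot,\Delta$, which is $\vdash D,\Gamma$.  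No elimination of $D$ is ever performed; the only copy of $D$ you ``open'' is the one you have just built from $\enc{t}$.
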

\begin{proof}
  For the direct implication, we consider a directed proof of
$\prvs{LLC$+T$}\Gamma$.  By induction on the number of directed cuts,
we build an LLC proof of $\prvs{LLC}\bot\oplus\bigoplus_{t\in
T}\enc{t},\Gamma$.  For the base case, an LLC$+T$ proof of
$\prvs{LLC$+T$}\Gamma$ without directed cuts is also an LLC proof of
$\prvs{LLC}\Gamma$, thus $\prvs{LLC}\bot,\Gamma$ using the ($\bot$)
rule, and $\vdash\bot\oplus\bigoplus_{t\in T}\enc{t},\Gamma$ by $|T|$
applications of ($\oplus$).  For the induction step, consider a
directed cut of an axiom $t=C,p_1^\bot,\dots,p_m^\bot$ in $T$ with
$\prvs{LLC$+T$} C^\bot,\Delta$.  We have $\prvs{LLC} C,C^\bot$ and
$\prvs{LLC} p_i,p_i^\bot$ for all $0<i\leq m$ by the (init) rule, and
$m+1$ applications of ($\otimes$) yield $\prvs{LLC} t,\enc{t}$.  By
induction hypothesis $\prvs{LLC}\bot\oplus\bigoplus_{t\in
T}\enc{t},C^\bot,\Delta$, thus a (normal) cut yields
$\prvs{LLC}\bot\oplus\bigoplus_{t\in
T}\enc{t},\enc{t},p_1^\bot,\dots,p_m^\bot,\Delta$.  Using $|T|$
applications of ($\oplus$) allows to prove
$\prvs{LLC}\bot\oplus\bigoplus_{t\in
T}\enc{t},\bot\oplus\bigoplus_{t\in
T}\enc{t},p_1^\bot,\dots,p_m^\bot,\Delta$ and a structural contraction
yields the desired LLC proof.

For the converse implication, if $\prvs{LLC}\bot\oplus\bigoplus_{t\in
  T}\enc{t},\Gamma$, then $\prvs{LLC$+T$}\bot\oplus\bigoplus_{t\in
  T}\enc{t},\Gamma$.  Then $\prvs{LLC$+T$}\mathbf 1$, and for each
axiom $t=C,p_1^\bot,\dots,p_m^\bot$ in $T$, we can prove
$\prvs{LLC$+T$}C\parr p_1^\bot\parr\cdots\parr p_m^\bot$ by $m$
applications of ($\parr$) from $\prvs{LLC$+T$}t$,
i.e.\ $\prvs{LLC$+T$}\enc{t}^\bot$.  Thus $|T|$ applications of
($\with$) yield $\prvs{LLC$+T$}\mathbf 1\with\bigwith_{t\in
  T}\enc{t}^\bot$, and a (normal) cut shows $\prvs{LLC$+T$}\Gamma$.
\end{proof}

Without loss of generality, we can assume that $Q_\ell=\{q_\ell\}$ for
a state $q_\ell$ with no applicable rule in $\?A$.  We extend
\autoref{cl-abvass-ll} and \autoref{prop-abvass-ll} to the contractive case%
\ifomitproofs%
  ; see \appref{sec-omitted}%
\fi%
:
\stepcounter{theorem}
\begin{restatable}{claim}{clabvassllc}\label{cl-abvass-llc}
  For all $(q,\vec v)$ in $Q\times\+N^d$, $\?A,\{q_\ell\}\jdg_e q,\vec
  v$ using expansive semantics if and only if $\vdash\theta(q,\vec
  v),q_\ell^s$ in LLC$+T$ for some $s>0$.
\end{restatable}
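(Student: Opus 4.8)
The plan is to follow the template of \autoref{cl-abvass-ll} and of its affine refinement \autoref{cl-abvass-llw}: a two-sided induction, one direction on the height of a deduction tree in $\?A$ and the other on the number of directed cuts in a directed proof in LLC$+T$. As noted in the text, \autoref{fc-dcut} carries over to LLC$+T$ (via LLC cut elimination), so both directions may freely assume a directed proof, and by \autoref{lem-abvass} and \autoref{lem-ordinary} we work, as before, with an ordinary ABVASS $\?A$ and the theory $T$ of \autoref{cl-abvass-ll}. The only structural change is that the bookkeeping device $\wn Q_\ell$ of the classical and affine settings is replaced by a power $q_\ell^{s}$ of the unique, rule-free leaf atom $q_\ell$; since structural contraction can decrease $s$ but never increase it, the existential ``for some $s>0$'' is exactly the invariant that both directions of the induction maintain.

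For the direction from deductions to proofs, I would induct on the height of a witness for $\?A,\{q_\ell\}\jdg_e q,\vec v$. A leaf $(q_\ell,\vec 0)$ is matched by the (init) axiom $\vdash q_\ell^{\bot},q_\ell$, giving $s=1$. The unary and split steps are handled exactly as in \autoref{cl-abvass-ll}: $q\xrightarrow{\pm\vec e_i}q_1$ by a ($\parr$) (respectively, no extra rule) followed by a directed cut with the corresponding axiom, which carries the exponent $s$ through unchanged; and $q\to q_1+q_2$ by a ($\otimes$) followed by a directed cut with $\vdash q^{\bot},q_1\parr q_2$, combining the two premises' exponents into $s_1+s_2>0$. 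A fork step $q\to q_1\wedge q_2$ uses a ($\with$) followed by a directed cut with $\vdash q^{\bot},q_1\oplus q_2$; here the two induction hypotheses may supply different exponents $s_1,s_2$, so I first equalise them to $\min(s_1,s_2)>0$ by a few structural contractions on $q_\ell$ before applying ($\with$). Finally, an expansion step, which derives $\?A,\{q_\ell\}\jdg_e q,\vec v+\vec e_i$ from $\?A,\{q_\ell\}\jdg_e q,\vec v+2\vec e_i$, is simulated by one structural contraction of $e_i^{\bot}$ on $\vdash\theta(q,\vec v+2\vec e_i),q_\ell^{s}$.

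For the converse, I would induct on the number of directed cuts, using proof size as a secondary measure for the structural cases, in a directed LLC$+T$ proof of $\vdash\theta(q,\vec v),q_\ell^{s}$. Since this sequent contains no compound formula, its last rule is (init) --- forcing $q=q_\ell$, $\vec v=\vec 0$, and the trivial witness --- a theory axiom used alone (a direct small case), a structural contraction, or a directed cut. A contraction of a copy of $q_\ell$ has premise $\vdash\theta(q,\vec v),q_\ell^{s+1}$ and the induction hypothesis applies verbatim. A contraction of $e_i^{\bot}$ forces $\vec v(i)\geq 1$ and has premise $\vdash\theta(q,\vec v+\vec e_i),q_\ell^{s}$; the witness it yields for $(q,\vec v+\vec e_i)$ is turned into one for $(q,\vec v)$ by a single ABVASS expansion step (taking $\vec w=\vec v-\vec e_i$, so that $\vec w+2\vec e_i=\vec v+\vec e_i$ and $\vec w+\vec e_i=\vec v$). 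A directed cut is analysed by cases on its axiom as in \autoref{cl-abvass-ll}: the two unary axioms and the fork axiom use invertibility of ($\parr$) and ($\with$) to expose premises $\vdash\theta(q_1,\vec v\pm\vec e_i),q_\ell^{s}$, resp.\ $\vdash\theta(q_1,\vec v),q_\ell^{s}$ and $\vdash\theta(q_2,\vec v),q_\ell^{s}$, with strictly fewer directed cuts; and the split axiom requires inspecting the ($\otimes$) that introduces $q_1^{\bot}\otimes q_2^{\bot}$ and reading off a split $\vec v=\vec v_1+\vec v_2$, $s=s_1+s_2$ of the surrounding context.

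The main obstacle is precisely this $\otimes$/split case of the converse. As $\otimes$ is not invertible, one must permute the structural contractions below it (and the directed cuts above it) so that the ($\otimes$) introducing $q_1^{\bot}\otimes q_2^{\bot}$ becomes visible; this is the contractive analogue of the ``derelictions above ($\otimes$)'' manipulation in \autoref{cl-abvass-ll}, and I would justify it in the same way. The genuinely new point is that the induction hypothesis is available for a premise only if it still carries a positive power of $q_\ell$, so I must rule out $s_1=0$ or $s_2=0$: a sequent all of whose formulas are $\otimes,\parr,\with,\oplus$-combinations of negated atoms is not provable in LLC$+T$. This follows by an easy induction on the size of a directed proof --- no logical or structural rule can close such a branch, and for a directed cut the premise $\vdash C^{\bot},\Delta$ is again of this form, since every $C^{\bot}$ coming from $T$ (namely $q_1^{\bot}\parr e_i^{\bot}$, $q_1^{\bot}$, $q_1^{\bot}\with q_2^{\bot}$, or $q_1^{\bot}\otimes q_2^{\bot}$) is such a combination. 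With $s_1,s_2>0$ in hand the split rule of $\?A$ applies and the induction closes; \autoref{prop-abvass-ll} then extends to the contractive case.
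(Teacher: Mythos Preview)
Your overall approach matches the paper's---a two-sided induction layered on top of \autoref{cl-abvass-ll}---and in places you are more explicit than the paper: you spell out the exponent-equalisation to $\min(s_1,s_2)$ before ($\with$) in the fork case of the forward direction, and you give a clean unprovability lemma (sequents built entirely from $\otimes,\parr,\with,\oplus$-combinations of negated atoms are not LLC$+T$-provable) to rule out $s_1=0$ or $s_2=0$ in the split case of the converse.

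There is, however, one case missing from your converse direction. When the last rule is a structural contraction, you treat contraction on $q_\ell$ and on $e_i^\bot$, but not contraction on $q^\bot$ itself. The resulting premise
\[
\vdash q^\bot,q^\bot,(e_1^\bot)^{\vec v(1)},\dots,(e_d^\bot)^{\vec v(d)},q_\ell^s
\]
is not of the shape $\theta(q',\vec v'),q_\ell^{s'}$ (it has two atomic negated state propositions), so your induction hypothesis does not apply; and your unprovability lemma does not cover it either, since the positive atom $q_\ell$ is present. The paper handles precisely this case: it argues that any sequent carrying $n>0$ extra atomic negated state propositions beyond the single one in $\theta$ is unprovable in LLC$+T$, because in directed proof search the only applicable rules are structural contraction (which never removes such an extra $q'^\bot$) and directed cuts with $T$ (each consuming one atomic $q^\bot$ and, after decomposing the resulting $C^\bot$, leaving at least one branch with the same number of extras), so no branch can ever reach an axiom. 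You need either this argument or a strengthening of your negated-atom lemma that tolerates the presence of $q_\ell$ to close your case analysis.
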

\ifomitproofs\relax\else
\begin{proof}
  By \autoref{cl-abvass-ll}, it suffices to consider the case of
expansions and structural contractions in a proof by induction over
deduction tree height and number of directed cuts.  In case of an
expansion allowing to derive $\?A,\{q_\ell\}\jdg_e q,\vec v+\vec e_i$,
by induction hypothesis, $\vdash\theta(q,\vec v+2\vec e_i),q_\ell^s$
and a structural contraction allows to prove $\vdash\theta(q,\vec
v+\vec e_i),q_\ell^s$ as desired.  Conversely, in case of a structural
contraction proving \mbox{$\vdash\theta(q,\vec v),q_\ell^s$}, several
cases are possible.  If the contracted formula is $q_\ell$, then by
induction hypothesis $\?A,\{q_\ell\}\jdg_e q,\vec v$ as desired.  If
the contracted formula is some $e_i^\bot$ with $0<i\leq d$, then by
induction hypothesis $\?A,\{q_\ell\}\jdg_e q,\vec v+2\vec e_i$ and an
expansion allows to deduce $\?A,\{q_\ell\}\jdg_e q,\vec v+\vec e_i$ as
desired.  Last of all, the contracted formula cannot be $q^\bot$:
assume for the sake of contradiction that $\vdash
q_1^\bot,\dots,q_n^\bot,\theta(q,\vec v),q_\ell$ were provable in
LLC$+T$ for some $n>0$ negated atomic state propositions
$q_1^\bot,\dots,q_n^\bot$ (in addition to $q^\bot$), and attempt to
perform directed proof search.  The only applicable rules are
\begin{itemize}
\item structural contraction, which cannot decrease $n$, and
\item directed cuts using $T$, which also preserve $n$.
\end{itemize}
In the absence of any axiom allowing $n>0$, this sequent is not
provable.
\end{proof}
\fi
\addtocounter{theorem}{-1}\begin{restatable}{proposition}{propabvassllc}\label{prop-abvass-llc}
  There is a logarithmic space reduction from \abv\ expansive
  reachability to MALLC provability.
\end{restatable}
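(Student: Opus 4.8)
The plan is to chain the results of this section. First I would normalise the given expansive \abv\ instance $\tup{\?A,q_r,Q_\ell}$: by \autoref{lem-abvass} (expansive case) we may assume $\?A$ has no full zero tests, and by \autoref{lem-ordinary} that it is moreover ordinary, both steps being computable in logarithmic space. As noted just before \autoref{cl-abvass-llc}, I would then assume without loss of generality that $Q_\ell=\{q_\ell\}$ for a fresh state $q_\ell$ with no applicable rule: it suffices to add $q_\ell$ together with a unary rule $q'\xrightarrow{\vec 0}q_\ell$ for every $q'$ in the original $Q_\ell$, which preserves reachability witnesses (truncate at, resp.\ extend to, these new leaves) and is again a logarithmic-space transformation. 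Let $T$ be the theory associated with the rules of this ordinary ABVASS as in \autoref{sec-focus}.

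Next I would invoke \autoref{cl-abvass-llc} at $\vec v=\vec 0$: since $\theta(q_r,\vec 0)$ is the sequent $\vdash q_r^\bot$, it gives that $\?A,\{q_\ell\}\jdg_e q_r,\vec 0$ iff $\vdash q_r^\bot,q_\ell^s$ is provable in LLC$+T$ for some $s>0$. The existential over $s$ is harmless: $s-1$ applications of structural contraction (C) to the formula $q_\ell$ turn a proof of $\vdash q_r^\bot,q_\ell^s$ into a proof of $\vdash q_r^\bot,q_\ell$, while conversely $\vdash q_r^\bot,q_\ell$ is exactly the case $s=1$. Hence $\?A,\{q_\ell\}\jdg_e q_r,\vec 0$ iff $\vdash q_r^\bot,q_\ell$ is provable in LLC$+T$. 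Finally I would apply \autoref{lem-llct} to discharge the theory: this holds iff $\vdash\bot\oplus\bigoplus_{t\in T}\enc{t},q_r^\bot,q_\ell$ is provable in LLC.

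It then remains to observe that this last query lives inside MALLC. Every axiom of $T$ produced from an ordinary ABVASS is built solely from $\otimes$, $\parr$, $\oplus$ and (negated) atomic state and counter propositions, so each $\enc{t}$, and hence $\bot\oplus\bigoplus_{t\in T}\enc{t}$, is a MALL formula and the whole sequent is a MALL sequent; by cut elimination and the subformula property for LLC, any cut-free LLC proof of it uses only the multiplicative and additive rules together with structural contraction, i.e.\ is a MALLC proof, while the converse inclusion of MALLC into LLC is trivial. Composing the logarithmic-space steps above then yields the desired reduction. The point needing the most care is precisely this last one --- checking that no exponential is reintroduced --- which rests on \autoref{lem-llct} being exponential-free and on the subformula property of cut-free LLC proofs; everything else is a routine composition of the lemmas of this section and the remark that structural contraction makes the multiplicity of $q_\ell$ immaterial.
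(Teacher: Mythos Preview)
Your proposal is correct and follows essentially the same route as the paper: eliminate full zero tests via \autoref{lem-abvass}, normalise to a single leaf state $q_\ell$, apply \autoref{cl-abvass-llc} at $\vec v=\vec 0$, collapse $q_\ell^s$ to $q_\ell$ by structural contraction, and discharge the theory via \autoref{lem-llct}. You are in fact more thorough than the paper's proof in two respects: you spell out the reduction to ordinary form (which the paper fixed once and for all earlier in the section), and you explicitly argue via cut elimination and the subformula property that the resulting LLC query is actually a MALLC query---a point the paper leaves entirely implicit.
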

\ifomitproofs\relax\else\begin{proof}
  As usual, we start by eliminating full zero tests using
  \autoref{lem-abvass}.  Let $\tup{\?A,q_r,\{q_\ell\}}$ be an
  expansive reachability instance.  By
  \autoref{cl-abvass-llc} and \autoref{lem-llct}
  $\?A,\{q_\ell\}\jdg_e q_r,\vec 0$ if and only if
  $\vdash\bot\oplus\bigoplus_{t\in T}\enc{t},q_r^\bot,q_\ell^s$ for some
  $s>0$, which by structural contractions on $q_\ell$ happens if and
  only if
  $\vdash\bot\oplus\bigoplus_{t\in T}\enc{t},q_r^\bot,q_\ell$.
\end{proof}
\fi

\section{\textsc{Tower} Upper Bounds}\label{sec-upb}
To show that the reachability problem for lossy \abv\ is in \textsc{Tower},
we establish by induction over the dimension $d$ a bound on the height
of minimal reachability witnesses, following in this the reasoning
used by \citet{rackoff78}
to show that the coverability problem for VASS is in \textsc{ExpSpace}.
The main new idea here is that, where there is freedom to choose how
values of vector components are distributed when 
performing split rules top-down (see \autoref{sub-abvass}), 
splitting them equally (or with the difference of $1$) allows sufficient 
lower bounds to be established along vertical paths in deduction trees 
for the inductive argument to go through.
Since the bounds we obtain on the heights of smallest 
witnessing deduction trees are exponentiated at every inductive step 
(rather than multiplied as in \citeauthor{rackoff78}'s proof), 
the resulting complexity upper bound involves a tower of exponentials, 
but will be shown broadly optimal in \autoref{sec-lowb}.

The following lemma in fact addresses the equivalent top-down
coverability problem (see \autoref{s:lossy.reach}), and considers
systems without full resets thanks to \autoref{lem-abvass}.  We first
define some terminology.  %
We say that a deduction tree is:
\begin{itemize}
\item
\emph{$(q_r, \vec v_0)$-rooted} iff 
that is the label of its root;
\item
\emph{$Q_\ell$-leaf-covering} iff, 
for every leaf label $(q, \vec v)$,
we have $q \in Q_\ell$;
\item
of \emph{height} $h$ iff that is the maximum number of edges,
i.e.\ the maximum number of rule applications, 
along any path from the root to a leaf.
\end{itemize}
For integers $d, m \geq 0$ and $s \geq 1$,
we define a natural number $H(d,s, m)$ recursively:
\begin{align}
    H(0, s, m) &\eqdef s\;, \\
H(d + 1,s, m) &\eqdef s (m \cdot 2^{H(d,s, m)})^{d + 1} + H(d,s, m)\;.
\end{align}

\begin{lemma}
\label{l:ABVASS.t.d.c}
If an ABVASS $\?A = \tup{Q,d,T_u,T_f,T_s,\emptyset}$
has a $(q_r, \vec v_0)$-rooted $Q_\ell$-leaf-covering deduction tree,
then it has such a deduction tree of height at most 
$H(d,|Q|, \maxm(T_u))$.
\end{lemma}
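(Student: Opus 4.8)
The plan is to follow Rackoff's strategy, adapted to the tree setting with the equal-splitting idea that the authors highlight. I would prove the statement by induction on the dimension $d$. The key device is to bound, for each $i \leq d$, the height of a deduction tree in which we only worry about keeping the first $i$ coordinates nonnegative --- treating coordinates $i+1,\dots,d$ as unconstrained (allowed to go negative) --- and then specialize to $i = d$.

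First I would set up the induction. The base case $d = 0$ (equivalently, $i = 0$: no coordinates are constrained) is essentially the observation that if there is \emph{any} witnessing tree at all, then by repeatedly applying the shortening operation $\?D[n \leftarrow n']$ from \autoref{ssub-abvass} along every root-to-leaf path, one obtains a witness in which no state repeats along a vertical path; since leaves carry states in $Q_\ell$ and the tree is finitely branching, the height is at most $|Q| - 1 \leq s = H(0,s,m)$. Here one uses that with no coordinates constrained, shortening never breaks applicability of any rule (unary, fork, split), because negative values are permitted. For the inductive step, I would consider a $(q_r,\vec v_0)$-rooted $Q_\ell$-leaf-covering tree $\?D$ where coordinates $1,\dots,d{+}1$ must stay nonnegative. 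Call a configuration \emph{large} if coordinate $d{+}1$ (the ``new'' coordinate we add in passing from $d$ to $d{+}1$) exceeds the threshold $B \eqdef m \cdot 2^{H(d,s,m)}$ along the relevant path --- or rather, following Rackoff, I would track whether \emph{all} of the last-added coordinate values encountered so far on the path stay above $B$. In the region of the tree where the tracked coordinate stays $\geq B$, that coordinate can never be exhausted by unary decrements (each decrements by at most $m$, and below we bound the number of steps), so we may as well ignore it: the subtree is governed by the $d$-dimensional bound $H(d,s,m)$.

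The heart of the argument is the counting that produces the $s(m \cdot 2^{H(d,s,m)})^{d+1}$ term. Consider a maximal ``top region'' of $\?D$ consisting of nodes all of whose ancestors (including themselves) have the tracked coordinate $< B$; this region, on each coordinate $1,\dots,d$, behaves like a $d$-dimensional instance. Within it I would argue --- via the equal-splitting choice at split rules --- that along any vertical path, the vector restricted to coordinates $1,\dots,d$ visits a number of distinct ``small'' configurations bounded by $s \cdot B^{d+1}$ roughly as follows: project each configuration on the path to the pair (state, $d$-tuple of coordinate values truncated at $B$); there are at most $|Q| \cdot B^{d}$ such projections plus the $B$ possible values of the tracked coordinate, and between two repetitions we can shorten. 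Crucially, the equal-split rule guarantees that when a split sends $\vec v = \vec v_1 + \vec v_2$ with $|\vec v_1 - \vec v_2| \leq 1$ componentwise, each child still carries at least $\lfloor B/2 \rfloor$ in any coordinate the parent had $\geq B$ in --- this is exactly what lets the lower-bound-along-paths argument survive branching and is the step I expect to be the main obstacle: one must verify that halving (with rounding) a value that is ``safely large'' (calibrated against $H(d,s,m)$ and the step count below it, hence against $2^{H(d,s,m)}$) keeps it safely large enough for the recursive invocation, which forces the choice of threshold $B = m \cdot 2^{H(d,s,m)}$ and the exponentiation at each level. Putting the pieces together: the top region contributes height $\leq s \cdot B^{d+1} = s(m \cdot 2^{H(d,s,m)})^{d+1}$, and each maximal subtree hanging below it (where the tracked coordinate is $\geq B$ and hence effectively unconstrained) contributes $\leq H(d,s,m)$ by the induction hypothesis applied in dimension $d$; the total is at most $s(m \cdot 2^{H(d,s,m)})^{d+1} + H(d,s,m) = H(d+1,s,m)$, as required.

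Two bookkeeping points I would be careful about. First, the fork rule duplicates the vector without change, so it is harmless for the counting --- it only multiplies the number of leaves, not the vector magnitudes --- but one must phrase the path-repetition argument so that shortening a repeated configuration on one branch of a fork does not damage sibling branches; this is fine because $\?D[n \leftarrow n']$ replaces an entire subtree, so it acts branch-locally. Second, to apply the induction hypothesis in dimension $d$ to a subtree where coordinate $d{+}1$ is ``frozen,'' I would formally instantiate \autoref{l:ABVASS.t.d.c}'s statement to a $d$-dimensional ABVASS obtained by deleting coordinate $d{+}1$ from the system --- valid because in that subtree no rule application would have violated nonnegativity of coordinate $d{+}1$ anyway (it stays $\geq B$, and $B$ dominates the total decrement achievable in $\leq H(d,s,m)$ further steps). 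The parameter $\maxm(T_u) = m$ enters only through the per-step decrement bound and is unchanged under deleting a coordinate, so the recursion closes with the stated $H$.
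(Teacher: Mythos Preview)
Your high-level plan (Rackoff-style induction on dimension, with balanced splitting to keep a ``large'' coordinate from being drained) is the paper's approach, but your write-up contains a genuine gap in how the induction step is organised.

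The problem is that you designate coordinate $d{+}1$ as \emph{the} tracked coordinate. In the paper's argument---and in Rackoff's original---the coordinate that first exceeds the threshold $B$ may be \emph{any} of the $d{+}1$ coordinates, and will generally differ from branch to branch. Concretely, the paper defines the top region as the set of nodes \emph{all of whose $d{+}1$ vector components} are below $B$; this is what makes the pigeonhole bound $|Q|\cdot B^{d+1}$ on distinct labels, and hence the shortening argument, go through. Your top region only constrains coordinate $d{+}1$ to be below $B$, so coordinates $1,\dots,d$ may be arbitrarily large there and you cannot bound the number of distinct labels. Your attempt to rescue this by ``truncating at $B$'' does not help: shortening requires actual label repetitions, not repetitions of truncated projections. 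Correspondingly, when you reach the boundary of the top region, the coordinate $j_i$ you project out to invoke the induction hypothesis is whichever one just became $\geq B$ at that node, not a fixed $d{+}1$.

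A second, related confusion: you place the equal-splitting device inside the top region. In the paper the top region is handled purely by shortening the \emph{given} witness $\?D$; no choices about splits are made there. Balanced splitting enters only \emph{below} the boundary nodes $n_i$, when the $d$-dimensional witness produced by the induction hypothesis (which says nothing about coordinate $j_i$) is lifted back to dimension $d{+}1$: one reinstates coordinate $j_i$ by picking any compatible unary rule and by splitting $j_i$ as evenly as possible, and then verifies inductively that at depth $h$ below $n_i$ the $j_i$-component is at least $2^{H(d,s,m)-h}\cdot m$, hence nonnegative. Once you move balanced splitting to the correct place and let $j_i$ range over all coordinates, your outline becomes exactly the paper's proof.
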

\begin{proof}
We use induction on the dimension $d$.

For the base case, if $\?A$ is $0$-dimensional, 
then the labels in its deduction trees are states only.
Starting with a deduction tree whose root label is $q_r$
and whose every leaf label is in $Q_\ell$,
we obtain by repeated shortenings %
a deduction tree in which 
labels along every %
branch are mutually distinct, 
with height at most $|Q| - 1$.

For the induction step in dimension $d+1$, suppose that $\?A =
\tup{Q,d + 1,T_u,T_f,T_s,\emptyset}$, and $\?D$ is a $(q_r, \vec
v_0)$-rooted $Q_\ell$-leaf-covering deduction tree.  Let
\begin{equation}
B\eqdef 2^{H(d,|Q|, \maxm(T_u))} \cdot \maxm(T_u)\;,
\end{equation}
and let $\{n_1, \ldots, n_k\}$ be the set of all nodes of $\?D$ 
such that, for all $i$, we have:
\begin{itemize}
\item
all vector components in labels of ancestors of $n_i$ are smaller than $B$;
\item
for some $0 < j_i \leq d + 1$, we have $\vec v_i(j_i) \geq B$,
where the label of $n_i$ is $(q_i, \vec v_i)$.
\end{itemize}\begin{figure}[tbp]
  \centering
  \begin{tikzpicture}[inner sep=1.5pt,every
    node/.style={font=\footnotesize}]
    \node[label=above:{$q_r,\vec v_0$},draw,circle,fill=black!90](rp){};
    \draw[fill=black!10] (rp) --++ (-1.5,-2) --++ (3,0) -- (rp);
    \node[below left=.75 and -.6 of rp,color=black!60]{$<B$};
    \node[label=above:{$q_1,\vec v_1$},draw,circle,fill=black!90,below
    left=1.7 and .5 of rp](tp1){};
    \draw[fill=white] (tp1) --++ (-.5,-1.3) --++ (1,.1) -- (tp1);
    \node[below=.4 of tp1]{$\?D^\dagger_1$};
    \node[label=above:{$q_k,\vec v_k$},draw,circle,fill=black!90,below
    right=1.7 and .5 of rp](tpn){};
    \draw[fill=white] (tpn) --++ (-.5,-1.3) --++ (1,.1) -- (tpn);
    \node[below=.4 of tpn]{$\?D^\dagger_k$};
    \node[below=2.4 of rp]{$\dots$};
    \path[draw=black!60,sloped,above,every node/.style={font=\tiny}]
      (1.7,0) edge[|-|] node{$\leq |Q|\cdot B^{d+1}$} (1.7,-2)
      (-1.8,-3.2) edge[|-|] node{$\leq H(d,|Q|,\max^-(T_u))$} (-1.8,-1.9)
      (2.7,0) edge[|-|] node{$\leq H(d+1,|Q|,\max^-(T_u))$} (2.7,-3.2);
    \path[draw=black!40,dotted]
      (rp) edge[-] (2.7,0)
      (1.7,-2) --++ (-.3,0)
      (-1.8,-1.9) edge[-] (tp1)
      (-1.8,-3.2) edge[-] (2.7,-3.2);
  \end{tikzpicture}
  \caption{\label{fig-rackoff}Induction step in the proof of \autoref{l:ABVASS.t.d.c}.}
\end{figure}

By repeated shortenings, we can assume that 
the length (i.e., the number of edges) of every path in $\?D$, which is 
from the root either to some $n_i$ or to a leaf with no $n_i$ ancestor, 
is at most $|Q| \cdot B^{d + 1}$, the number of possible labels
with all vector components smaller than $B$.

In the remainder of the argument, we apply the induction hypothesis
below each of the nodes $n_i$.  More precisely, let $\?A_i$ denote the
$d$-dimensional ABVASS obtained from $\?A$ by projecting onto vector
indices $\{1, \ldots, d + 1\} \setminus \{j_i\}$.  (The only change is
in the set of unary rules.)  From the subtree of $\?D$ rooted at
$n_i$, we know that $\?A_i$ has a $(q_i, \vec v_i(-j_i))$-rooted
$Q_\ell$-leaf-covering deduction tree.  (Here $\vec w(-j)$ denotes the
projection of $\vec w$ to all indices except $j$.)  Let $\?D_i$ be
such a deduction tree, which we can choose of height at most $H(d,|Q|,
\maxm(T_u))$ by induction hypothesis.

Now, to turn $\?D_i$ into a $(q_i, \vec v_i)$-rooted 
deduction tree $\?D_i^\dag$ of $\?A$, we have to do two things:
\begin{enumerate}
\item
For every application of a unary rule $q\xrightarrow{\vec u}q'$ in $\?D_i$,
decide which unary rule $q\xrightarrow{\vec {u}'}q'$ of $\?A$
such that $\vec u = \vec u'(-j_i)$ to apply: we do that arbitrarily.
\item
For every application of a split rule $q\to q'+q''$ in $\?D_i$,
decide how to split the vector component $x$ with index $j_i$:
we do that by balancing, i.e.\ picking the corresponding components 
$x_1$ and $x_2$ of the two child vectors so that $|x_1 - x_2| \leq 1$.
\end{enumerate}
We claim that $\?D_i^\dag$ thus obtained is indeed 
a $(q_i, \vec v_i)$-rooted $Q_\ell$-leaf-covering deduction tree of $\?A$.
Since the node labels in $\?D_i^\dag$ differ from those in $\?D_i$
only by the extra $j_i$th components,
it suffices to show that all the latter are non-negative.
In fact, at the root of $\?D_i^\dag$, we have $\vec v_i(j_i) \geq B$,
and it follows by a straightforward induction that, 
for every node $n$ in $\?D_i^\dag$ whose distance from the root is $h$
(which is at most $H(d,|Q|, \maxm(T_u))$),
its vector label $\vec w$ satisfies
\begin{equation}
\vec w(j_i) \geq 2^{H(d,|Q|, \maxm(T_u)) - h} \cdot \maxm(T_u)\;.
\end{equation}

It remains to observe that, by replacing for each $0 < i \leq k$, the
subtree of $\?D$ rooted at $n_i$ by $\?D_i^\dag$, the height of the
resulting deduction tree (see \autoref{fig-rackoff} for a depiction)
is at most
\begin{equation*}
  |Q| \cdot B^{d + 1} + H(d,|Q|, \maxm(T_u)) =
  H(d + 1,|Q|, \maxm(T_u)),
\end{equation*}
thereby establishing the lemma.
\end{proof}

The following auxiliary function and proposition will be useful
for deriving the complexity upper bounds.
Let
\begin{equation}
  H'(d, s, m)\eqdef 4 (d + 1) (s + m + 1) H(d, s, m)\;.
\end{equation}%
\ifomitproofs%
  We show in \appref{sec-omitted} that:%
\fi%
\begin{restatable}{proposition}{prHp}
\label{pr:Hp}
For all $d, m \geq 0$ and $s \geq 1$, we have:
\[H'(d + 1, s, m) \leq 2^{H'(d, s, m)}\;.\]
\end{restatable}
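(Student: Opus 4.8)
The plan is to fix $s\ge 1$ and $m\ge 0$ and abbreviate $D:=d+1$, $a:=s+m+1$ and $h:=H(d,s,m)$; then $D\ge 1$, $a\ge 2$, and since the recurrence gives $H(d+1,s,m)\ge H(d,s,m)$ one has $h\ge H(0,s,m)=s\ge 1$. Rewriting $(m\cdot 2^{h})^{d+1}=m^{d+1}2^{(d+1)h}$, the recurrence reads $H(d+1,s,m)=s\,m^{D}2^{Dh}+h$, and since $H'(d+1,s,m)=4(D+1)a\,H(d+1,s,m)$ and $2^{H'(d,s,m)}=2^{4Dah}$, the proposition becomes
\[4(D+1)a\bigl(s\,m^{D}2^{Dh}+h\bigr)\le 2^{4Dah}.\]
I would split off the case $m=0$, since there the recurrence degenerates to $H(d+1,s,0)=H(d,s,0)$.

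In the case $m=0$ one has $H(\cdot,s,0)\equiv s$, hence $H'(d,s,0)=4(d+1)(s+1)s$ and $H'(d+1,s,0)=4(d+2)(s+1)s\le 2\,H'(d,s,0)$ because $d+2\le 2(d+1)$. The claim then follows from the elementary inequality $2x\le 2^{x}$, valid for all positive integers $x$ (here $x=H'(d,s,0)\ge 8$).

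In the case $m\ge 1$ I would first bound the left-hand side above. Since $s\,m^{D}\ge 1$ and $h\le 2^{h}\le 2^{Dh}$, we get $h\le s\,m^{D}2^{Dh}$ and hence $H(d+1,s,m)\le 2\,s\,m^{D}2^{Dh}$. Writing this as a power of $2$ and using $\log_2 s\le s$ and $\log_2 m\le m$ yields $H(d+1,s,m)\le 2^{\,1+s+Dm+Dh}$, so it suffices to prove $4(D+1)a\le 2^{\,4Dah-1-s-Dm-Dh}$. Expanding $4Dah=4Dhs+4Dhm+4Dh$ and using $Dh\ge 1$ and $h\ge 1$, the exponent on the right is at least $3Dhs+3Dhm+2Dh\ge D(3s+3m+2)$. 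Taking logarithms of the target inequality and bounding $\log_2(D+1)\le D$ together with $\log_2 a=\log_2(s+m+1)\le s+m$, it remains to verify the linear inequality $2+D+s+m\le D(3s+3m+2)$; its right-hand side minus its left-hand side equals $s(3D-1)+m(3D-1)+(D-2)$, which is nonnegative for all $D\ge 1$, $s\ge 1$, $m\ge 0$. This completes the argument.

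I do not expect a genuine obstacle: the proof is a chain of deliberately crude inequalities. The only thing to watch is that enough slack survives --- bounding $s\,m^{D}2^{Dh}$ by a power of two loses a factor $3$ in the exponent, whereas $H'$ carries the larger factor $4$, so the surplus (of order $2^{Dah}$) comfortably absorbs the remaining polynomial factor $4(D+1)a$. The $m=0$ case is the most delicate, which is exactly why it is handled directly through $2x\le 2^{x}$ rather than through the power-of-two estimate used for $m\ge 1$.
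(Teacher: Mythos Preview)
Your proof is correct and follows essentially the same approach as the paper: bound $H(d+1,s,m)$ above by a single power of $2$ whose exponent is a fraction of $H'(d,s,m)$, then absorb the polynomial prefactor $4(d+2)(s+m+1)$ into the remaining exponential slack. The only tactical difference is that the paper avoids your case split on $m$ by writing $s(m\cdot 2^{h})^{d+1}+h\le s\bigl((m+1)\cdot 2^{h}\bigr)^{d+1}$ as its first step, which absorbs the ``$+h$'' term and handles $m=0$ uniformly; thereafter the chain $s(m+1)^{d+1}\le 2^{(d+1)(s+m)}$ and $s+m+h\le (s+m+1)h$ yields $H(d+1,s,m)\le 2^{(d+1)(s+m+1)h}$ directly.
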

\ifomitproofs\relax\else
\begin{proof}
We first observe the following inequality involving the $H$ function:
\begin{align*}
  H(d + 1, s, m) &=
  s (m \cdot 2^{H(d, s, m)})^{d + 1} + H(d, s, m)\\
  &\leq s \big((m + 1) \cdot 2^{H(d, s, m)}\big)^{d + 1}\\
  &\leq 2^{(d + 1) (s + m + H(d, s, m))}\\
  &\leq 2^{(d + 1) (s + m + 1) H(d, s, m)}\;,
\end{align*}
and then use it to conclude that:
\begin{align*}
  H'(d + 1, s, m) &=
  4 (d + 2) (s + m + 1) H(d + 1, s, m)\\
  &\leq 4 (d + 2) (s + m + 1) 2^{(d + 1) (s + m + 1) H(d, s, m)}\\
  &\leq 2^{d + 2} \cdot 2^{s + m + 1} \cdot 2^{(d + 1) (s + m + 1)
    H(d, s, m)}\\
  &\leq 2^{2 (d + 1) (s + m + 1)} \cdot 2^{(d + 1) (s + m + 1) H(d, s,
    m)}\\
  &\leq 2^{4 (d + 1) (s + m + 1) H(d, s, m)}\\
  & = 2^{H'(d, s, m)}\;.\qedhere
\end{align*}
\end{proof}
\fi

We are now in a position to establish the membership in
\textsc{Tower}.  More precisely, since the height of the tower of
exponentials in the bounds we obtained is equal to the system
dimension, the problem in fixed dimension $d$ is in
$d$-\textsc{ExpTime}.

\begin{theorem}
\label{th:upb}
Reachability for lossy \abv\ is in \textsc{Tower}.
For every fixed dimension $d$, 
it is in \textsc{PTime} if $d = 0$,
and in $d$-\textsc{ExpTime} if $d \geq 1$.
\end{theorem}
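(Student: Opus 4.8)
The plan is to reduce reachability for lossy \abv\ to top-down coverability for ABVASS and then to decide the latter by a least-fixpoint (equivalently, \textsc{And}/\textsc{Or} graph reachability) computation over a finite configuration space whose size is controlled by the height bound of \autoref{l:ABVASS.t.d.c}. By \autoref{lem-abvass}, reachability for lossy \abv\ reduces in logarithmic space (many-one) to reachability for lossy ABVASS, and also via a polynomial-time Turing reduction that preserves the dimension; and for ABVASS, which has no full zero tests, reachability in the lossy semantics coincides with top-down coverability (see \autoref{s:lossy.reach}). So it suffices, given an ABVASS $\?A=\tup{Q,d,T_u,T_f,T_s,\emptyset}$, a state $q_r$, and a set $Q_\ell$, to decide whether $\?A$ has a $(q_r,\vec 0)$-rooted $Q_\ell$-leaf-covering deduction tree. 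By \autoref{l:ABVASS.t.d.c}, if one exists then one of height at most $h\eqdef H(d,|Q|,\maxm(T_u))$ exists; and since along any root-to-node path each unary rule increases a vector component by at most $\maxp(T_u)$ while forks and splits never increase a component, every node label of such a height-$\le h$ deduction tree lies in the finite set $C\eqdef Q\times\{0,\dots,h\cdot\mathrm{max}(T_u)\}^{d}$.

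I would then compute the set $G\subseteq C$ of configurations that root a $Q_\ell$-leaf-covering deduction tree all of whose node labels lie in $C$, as the least fixpoint of the monotone operator that adds a configuration $(q,\vec v)\in C$ as soon as some unary, fork, or split rule applies at $(q,\vec v)$ and yields only configurations already in the current set; for a split rule $q\to q_1+q_2$ this amounts to the existence of a decomposition $\vec v=\vec v_1+\vec v_2$ with $(q_1,\vec v_1)$ and $(q_2,\vec v_2)$ in the current set, of which at most $|C|$ need be tested. The fixpoint stabilises after at most $|C|$ rounds, each costing time polynomial in $|C|$ and $|\?A|$, and by \autoref{l:ABVASS.t.d.c} we have $\?A,Q_\ell\jdg q_r,\vec 0$ iff $(q_r,\vec 0)\in G$; hence top-down coverability for ABVASS is decided in time polynomial in $|C|$ and $|\?A|$.

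It remains to bound $|C|=|Q|\cdot(h\cdot\mathrm{max}(T_u)+1)^{d}$ with $h=H(d,|Q|,\maxm(T_u))$; write $n$ for the input size, so that $|Q|\le n$ and $\maxm(T_u),\mathrm{max}(T_u)\le 2^{n}$. For the general bound, \autoref{pr:Hp} together with $H\le H'$ and $H'(0,s,m)=4(s+m+1)s\le 2^{O(n)}$ shows that $h$, hence $|C|$ and the running time, is bounded by a tower of exponentials of polynomial height in $n$; combined with the logarithmic-space reduction above, this places reachability for lossy \abv\ in \textsc{Tower}. For a fixed dimension $d$: if $d=0$ then $h=|Q|$ and $C=Q$, so the computation runs in polynomial time; if $d\ge 1$ then a direct induction on $d$ from $H(0,s,m)=s\le n$ and $H(d+1,s,m)=s(m\,2^{H(d,s,m)})^{d+1}+H(d,s,m)$ shows that $h$ is bounded by a tower of $d$ exponentials of a polynomial in $n$ --- the magnitude $\mathrm{max}(T_u)\le 2^{n}$ only inflating the innermost level --- so that $|C|$ and the running time are as well, and the computation runs in $d$-\textsc{ExpTime}. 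Finally, composing with the dimension-preserving polynomial-time Turing reduction of \autoref{lem-abvass} (polynomially many calls to a $d$-\textsc{ExpTime}, resp.\ \textsc{PTime}, oracle remain within $d$-\textsc{ExpTime}, resp.\ \textsc{PTime}) transports these fixed-dimension bounds back from ABVASS to \abv.

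The step I expect to be the most delicate is this last height estimate in fixed dimension: establishing that $H(d,\cdot)$ is a $d$-fold --- rather than a $(d+1)$-fold --- iterated exponential of the input size, i.e.\ that the contribution of the binary-encoded vectors remains confined to the innermost exponential. For this, \autoref{pr:Hp} is slightly too coarse and a separate, careful estimate on $H$ itself is required.
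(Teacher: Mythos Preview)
Your argument is correct and shares the paper's overall scaffolding: reduce to ABVASS via \autoref{lem-abvass}, invoke the height bound of \autoref{l:ABVASS.t.d.c}, observe that all vector components along a bounded-height deduction tree remain bounded, and then work over the resulting finite configuration space. The genuine difference is in the decision procedure that follows. The paper uses an \emph{alternating} Turing machine that guesses the deduction tree top-down while holding only $O(1)$ configurations; its space is $O\big((d+1)\log((\maxp(T_u)+1)\cdot H'(d,|Q|,\maxm(T_u)))\big)$, and the \textsc{PTime} and $d$-\textsc{ExpTime} bounds are then read off from the classical identities $\textsc{ALogSpace}=\textsc{PTime}$ and $(d{-}1)\text{-}\textsc{AExpSpace}=d\text{-}\textsc{ExpTime}$. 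You instead run a deterministic least-fixpoint computation over the whole space $C$, in time polynomial in $|C|$ and $|\?A|$. Your route is more self-contained---no appeal to alternation---at the cost of materialising the configuration space; the paper's is terser but leans on the standard alternation results.

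Your closing caveat is well placed and in fact applies to both arguments: to obtain $d$-\textsc{ExpTime} (rather than $(d{+}1)$-\textsc{ExpTime}) under binary encoding one must check that $H(d,|Q|,\maxm(T_u))$ is only a $d$-fold exponential in the input size even though $\maxm(T_u)$ may itself be singly exponential. A naive iteration of \autoref{pr:Hp} from $H'(0,s,m)=4(s+m+1)s$ loses one level, because that base value is already exponential in $n$. The direct unfolding you sketch---where $m$ enters only multiplicatively at the innermost level, so $H(1,s,m)=sm\,2^s+s=2^{O(n)}$ and each further step adds exactly one exponential---is the right fix, and it validates the bound you state on $|C|$.
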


\begin{proof}
By \autoref{lem-abvass}, it suffices to consider an ABVASS.
We argue in terms of the
top-down coverability problem
(see \autoref{s:lossy.reach}):
given an ABVASS $\?A = \tup{Q,d,T_u,T_f,T_s,\emptyset}$,
a state $q_r$ and a set of states $Q_\ell$,
to decide whether $\?A$ has a $(q_r, \vec 0)$-rooted $Q_\ell$-leaf-covering 
deduction tree.

By \autoref{l:ABVASS.t.d.c}, if $\?A$ has such a deduction tree,
then it has one of height at most $H(d, |Q|, \maxm(T_u))$.
Observing that, in such a deduction tree, all vector components are bounded by 
\[(\maxp(T_u) + 1) \cdot H(d, |Q|, \maxm(T_u))\;,\]
we conclude that it can be guessed and checked in
\[O((d + 1) \cdot \log ((\maxp(T_u) + 1) \cdot H'(d, |Q|, \maxm(T_u))))\]
space by an alternating algorithm which manipulates 
at most three configurations of $\?A$ at a time.

The memberships in the statement (for ABVASS)
follow from the fact that $H'(0, |Q|, \maxm(T_u))$ is polynomial,
by \autoref{pr:Hp}, and since 
$\mbox{\textsc{ALogSpace}} = \mbox{\textsc{PTime}}$,
$\mbox{\textsc{APSpace}} = \mbox{\textsc{ExpTime}}$, and
$(d - 1)\mbox{-\textsc{AExpSpace}} = d\mbox{-\textsc{ExpTime}}$%
\ifomitproofs.\else\ (see \citet{CKS81}).\fi %
\end{proof}

By \autoref{prop-ll-abvass}, this shows:
\begin{corollary}
  LLW provability is in \textsc{Tower}.
\end{corollary}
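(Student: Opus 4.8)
\textbf{The plan} is to chain the reduction from affine linear logic to lossy \abv\ reachability with the upper bound just established. By \autoref{prop-ll-abvass}(1), LLW provability (i.e.\ affine LL) reduces in polynomial space to reachability in lossy \abv, and by \autoref{th:upb} the latter is in \textsc{Tower}. Since a polynomial-space many-one reduction is in particular an elementary reduction, and \textsc{Tower} is closed under elementary reductions, this already places LLW provability in \textsc{Tower}.

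The only point requiring a second look is that the reduction of \autoref{prop-ll-abvass} goes through the ABVASS $\?A^I_F$ of \autoref{sec-ill}, which has exponentially many states in $|F|$; I would check that running the algorithm of \autoref{th:upb} on such an instance still stays within \textsc{Tower}. For this I would note that $\?A^I_F$ has dimension $|S| = O(|F|)$, number of states $|Q| = 2^{O(|F|)}$ so that $\log |Q|$ is polynomial in $|F|$, and unary rules whose vectors have entries bounded by a constant (by inspection of \autoref{fig-ill-abvass}), so that $\maxm(T_u)$ and $\maxp(T_u)$ are $O(1)$. Plugging these parameters into the bound of \autoref{th:upb}, and iterating the inequality of \autoref{pr:Hp}, the running time is bounded by a tower of exponentials whose height equals the dimension $d = O(|F|)$ and with a polynomial-in-$|F|$ quantity at the top; this is a tower of an elementary---indeed linear---function of the input size, which is exactly what membership in \textsc{Tower} requires.

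There is essentially no obstacle here: the corollary is immediate from \autoref{prop-ll-abvass} and \autoref{th:upb}. The only subtlety---and the reason the exponential state blow-up of the reduction is harmless, as the paper already remarks---is that the dominant parameter in the \textsc{Tower} bound of \autoref{th:upb} is the dimension, which the reduction keeps linear in $|F|$, whereas the number of states enters only through its logarithm and hence contributes at most a constant number of extra exponentiations to the tower height.
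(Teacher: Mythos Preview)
Your proposal is correct and takes essentially the same approach as the paper: the corollary is stated immediately after \autoref{th:upb} with the one-line justification ``By \autoref{prop-ll-abvass}, this shows,'' combining the polynomial-space reduction from LLW to lossy \abv\ reachability with the \textsc{Tower} upper bound for the latter. Your additional analysis of the state blow-up is not needed for the formal argument (closure of \textsc{Tower} under elementary reductions suffices), but it correctly unpacks the remark the paper makes after \autoref{prop-ill-abvass} about the dimension being the dominant parameter.
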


\section{\textsc{Tower} Lower Bounds}\label{sec-lowb}

The rough pattern of our hardness proof 
resembles those by e.g.\ \ifomitproofs\citet{urquhart99}\else\citet{urquhart99,phs-mfcs2010}\fi,
where a fast-growing function is computed weakly,
then its result is used to allocate space for simulating a universal machine,
and finally the inverse of the function is computed weakly 
for checking purposes.
Indeed, we simulate Minsky machines whose counters are $\mathrm{tower}$-bounded,
but the novelty here is in the inverse computations.
Specifically, for each Minsky counter $c$, we maintain its dual $\hat{c}$ 
and simulate each zero test on $c$ by a split rule that launches a thread 
to check that $\hat{c}$ has the maximum value.
Recalling that such rules split all values non-deterministically,
we must construct the simulating system carefully so that 
such non-determinism cannot result in erroneous behaviours.

The auxiliary threads check that a counter is at least $\mathrm{tower}(k)$
by seeking to apply split rules at least $\mathrm{tower}(k - 1)$ times
along every branch.  The difficulty here is, similarly, 
how to count up to $\mathrm{tower}(k - 1)$ or more in a manner which is 
robust with respect to the non-determinism of the split rules.

A hierarchy of BVASS for the latter purpose is given in
\autoref{f:Bk}--- recall the depicting conventions in
\autoref{ssub-example}.  In this system, after the unary rule from
$q^{\mathrm{loop}}_k$ that decrements $d_k$, we have that $\?B_k$
behaves like $\?B_{k - 1}$ from state $q^{\mathrm{init}}_{k - 1}$.

\begin{figure}
  \ifomitproofs\begin{center}\else\centering\fi
  \begin{tikzpicture}[auto,node distance=1.5cm]
    \node[state,label=left:{$\?B_k$:}](qinitk){$q^{\mathrm{init}}_k$};
    \node[state,right=of qinitk](q1k){$q^1_k$};
    \node[state,right=of q1k,label=right:{$+$}](q2k){$q^2_k$};
    \node[state,right=of q2k](qloopk){$q^{\mathrm{loop}}_k$};
    \node[rectangle,dotted,draw=black!50,below left=1.1cm and -.9cm of
      qloopk,rounded corners=8pt,text width=1.2cm,text
      height=1.2cm]{$\?B_{k - 1}$};
    \node[state,below=1.2cm of qloopk](Bkm1){$q^{\mathrm{init}}_{k-1}$};
    \path[->,every node/.style={font=\footnotesize}]
      (qinitk) edge node{$\text{{+}{+}} d_{k - 1}$} (q1k)
          (q1k) edge[loop below] node{$\begin{array}{c}
                                      \text{{-}{-}} d_{k - 1} \\
                                      \text{{+}{+}} d'_{k - 1} \\
                                      \text{{+}{+}} d'_{k - 1}
                                      \end{array}$} ()
          (q1k) edge (q2k)
          (q2k) edge[loop below] node{$\begin{array}{c}
                                      \text{{-}{-}} d'_{k - 1} \\
                                      \text{{+}{+}} d_{k - 1}
                                      \end{array}$} ()
          (q2k) ++(.33,.33) edge[bend left] (qloopk)
          (q2k) ++(.33,-.33) edge[bend right] (qloopk)
          (qloopk) edge[bend angle=70,bend left] (qinitk)
          (qloopk) edge node{$\text{{-}{-}} d_k$} (Bkm1);
    \draw[semithick] (q2k) -- +(.33,.33)
                     (q2k) -- +(.33,-.33);
    \draw[color=black!40] (q2k) ++(.33,-.33) 
                          arc[start angle=-45,end angle=45,radius=.47];
    \node[state,below=2cm of qinitk,label=left:{$\?B_1$:}](qinit1){$q^{\mathrm{init}}_1$};
    \node[state,accepting by double,right=of qinit1](qleaf){$q^{\mathrm{leaf}}$};
    \path[->,every node/.style={font=\footnotesize}]
          (qinit1) edge node{$\begin{array}{c}
                              \text{{-}{-}} d_1;%
                              \text{{-}{-}} d_1
                              \end{array}$} (qleaf);
  \end{tikzpicture}
  \ifomitproofs\end{center}\fi
  \caption{\label{f:Bk} Defining $\?B_k$ for $k > 1$ (above),
                        and $\?B_1$ (below).}
\end{figure}

\begin{lemma}
\label{l:Bk}
For every $k \geq 1$ and vector of naturals $\vec v_0$
such that $\vec v_0(d_i) = \vec v_0(d'_i) = 0$ for all $i < k$,
we have that $\?B_k$ has a $(q^{\mathrm{init}}_k, \vec v_0)$-rooted
$\{q^{\mathrm{leaf}}\}$-leaf-covering deduction tree if and only if
$\vec v_0(d_k) \geq \mathrm{tower}(k)$.
\end{lemma}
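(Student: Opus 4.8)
The proof will go by induction on $k$, the crux being that $\mathrm{tower}(k)=2^{\mathrm{tower}(k-1)}$, so that $\?B_k$ must unfold a complete binary ``doubling'' tree of depth $\mathrm{tower}(k-1)$ before it can hand control to copies of $\?B_{k-1}$. First I would fix vocabulary. Call the portion of a deduction tree using only the states $q^{\mathrm{init}}_k,q^1_k,q^2_k,q^{\mathrm{loop}}_k$ its \emph{cycling part}, and contract each maximal chain that runs from a $q^{\mathrm{init}}_k$ through $q^1_k$ and $q^2_k$ (with their self-loops) and ends in the split $q^2_k\to q^{\mathrm{loop}}_k+q^{\mathrm{loop}}_k$ into a single \emph{cycle node} with two children; in the contracted tree $\?C$ the remaining nodes are exactly the applications of the descending rule from $q^{\mathrm{loop}}_k$ to $q^{\mathrm{init}}_{k-1}$ (which decrements $d_k$). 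Since $q^{\mathrm{leaf}}$ occurs only in $\?B_1$ and no edge returns from $q^{\mathrm{init}}_{k-1}$ to the cycling states, in any $\{q^{\mathrm{leaf}}\}$-leaf-covering deduction tree the cycling part is a finite tree all of whose $\?C$-leaves are descents. For $k=1$ the only rule of $\?B_1$ decrements $d_1$ twice and leads to $q^{\mathrm{leaf}}$, so such a tree exists iff $\vec v_0(d_1)\ge 2=\mathrm{tower}(1)$, settling the base case. For $k>1$ I record two easy facts about $\?B_k$ below $q^{\mathrm{init}}_k$: \emph{(i)} it touches only the coordinates $d_k$ (decremented exactly once, at each descent), $d_{k-1}$ and $d'_{k-1}$, so $d_i,d'_i$ for $i<k-1$ stay $0$; \emph{(ii)} the quantity $2d_{k-1}+d'_{k-1}$ is invariant along the $q^1_k$-loop and $d_{k-1}+d'_{k-1}$ is invariant along the $q^2_k$-loop, so, writing $\rho$ for the value of $d_{k-1}+d'_{k-1}$ at the $q^{\mathrm{loop}}_k$-configuration entering a cycle node, the split reached by that node has $d_{k-1}+d'_{k-1}\le 2\rho+2$, with equality achievable and evenly splittable.

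For the ``if'' direction, suppose $\vec v_0(d_k)\ge\mathrm{tower}(k)=2^{T}$ with $T\eqdef\mathrm{tower}(k-1)$. Using fact \emph{(ii)} I would build the complete binary tree of cycle nodes of depth $T$ in which a node at depth $i$ is entered with $d_{k-1}=i$, $d'_{k-1}=0$: from $\vec v_0(d_{k-1})=0$, a node entered with value $m$ is driven to its split with $d_{k-1}=2(m+1)$, $d'_{k-1}=0$ and split evenly into two children entered with value $m+1$. After $T$ levels every branch sits at $q^{\mathrm{loop}}_k$ with $d_{k-1}=T$, $d'_{k-1}=0$ and $d_i=d'_i=0$ for $i<k-1$ by fact \emph{(i)}; descending into $\?B_{k-1}$ and invoking the induction hypothesis supplies a $\{q^{\mathrm{leaf}}\}$-leaf-covering subtree below each of the $2^T=\mathrm{tower}(k)$ descents. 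Since $d_k$ is never incremented and consumed only at these descents, and $\vec v_0(d_k)\ge\mathrm{tower}(k)$ equals their number, one unit can be routed to each (halve $d_k$ at every split).

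For the ``only if'' direction I would take an arbitrary $\{q^{\mathrm{leaf}}\}$-leaf-covering deduction tree, pass to its contracted cycling tree $\?C$, and show it already has $\ge\mathrm{tower}(k)$ descents. By fact \emph{(i)} the induction hypothesis for $\?B_{k-1}$ applies at every descent, so the subtree below it is a $\{q^{\mathrm{leaf}}\}$-leaf-covering deduction tree of $\?B_{k-1}$, forcing $d_{k-1}\ge T$ there. Extend $\rho$ to all of $\?C$ by $\rho(\mathrm{root})\eqdef 0$ and, at a descent, $\rho\eqdef d_{k-1}+d'_{k-1}\;(\ge d_{k-1}\ge T)$, and let $\ell(n)$ be the number of descents in the $\?C$-subtree at $n$. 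I claim $\ell(n)\ge 2^{\max(0,\,T-\rho(n))}$, by induction on the height of $n$ in $\?C$. A descent gives $\ell=1=2^0$. For a cycle node $n$ with children $n_1,n_2$, fact \emph{(ii)} gives $\rho(n_1)+\rho(n_2)\le 2\rho(n)+2$; if $\rho(n)\ge T$ then $\ell(n)\ge 2\ge 1$, and if $\rho(n)<T$ then, with $p_i\eqdef\max(0,T-\rho(n_i))\ge 0$, one has $p_1+p_2\ge 2T-(\rho(n_1)+\rho(n_2))\ge 2(T-\rho(n)-1)\ge 0$, hence $\ell(n)=\ell(n_1)+\ell(n_2)\ge 2^{p_1}+2^{p_2}\ge 2^{1+(p_1+p_2)/2}\ge 2^{T-\rho(n)}$ by AM--GM. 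Taking $n=\mathrm{root}$ yields $\ell(\mathrm{root})\ge 2^{T}=\mathrm{tower}(k)$ descents; since $d_k$ starts at $\vec v_0(d_k)$, is never incremented, and each descent consumes one unit, the whole tree contains at most $\vec v_0(d_k)$ descents, so $\vec v_0(d_k)\ge\mathrm{tower}(k)$.

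The delicate part is this last direction: split rules redistribute all counter values freely, so the $\mathrm{tower}(k)$ bound must hold however one chooses to cycle and split. The convexity (AM--GM) step is exactly what certifies that balancing each split is the most economical strategy, and the auxiliary coordinate $d'_{k-1}$ has to be folded into the potential $\rho=d_{k-1}+d'_{k-1}$, together with the invariants of fact \emph{(ii)} which pin down a cycle's effect on $\rho$ as ``at most $2\rho+2$, split between the two children''.
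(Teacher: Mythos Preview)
Your proof is correct, and the ``only if'' direction takes a genuinely different route from the paper's.

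Both arguments track the same potential $\rho=d_{k-1}+d'_{k-1}$ and exploit the key inequality $\rho(n_1)+\rho(n_2)\le 2\rho(n)+2$ across one cycle-and-split. The paper, however, does not argue by a single structural induction over the contracted tree. Instead it proves two nested claims: an outer claim that for each $0<h\le\mathrm{tower}(k-1)$ there are $2^h$ incomparable $q^{\mathrm{loop}}_k$-nodes with potential at most $h$, and an inner claim that from any $q^{\mathrm{init}}_k$-node with potential $h'<\mathrm{tower}(k-1)$ one can always find two incomparable $q^{\mathrm{loop}}_k$-descendants with potential at most $h'+1$. The inner claim is proved by a case split: if the first split happens to divide the potential evenly, done; otherwise one child has potential $\le h'$, and the induction hypothesis on $\?B_{k-1}$ is invoked \emph{right there} to rule out a descent, forcing that child to loop back so the argument can be repeated (terminating by finiteness of the tree).

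Your argument replaces this repeat-until-balanced case analysis by a single convexity estimate: the AM--GM step $2^{p_1}+2^{p_2}\ge 2^{1+(p_1+p_2)/2}$ absorbs every possible imbalance at once and yields the sharp pointwise bound $\ell(n)\ge 2^{\max(0,T-\rho(n))}$. You then only need the induction hypothesis on $\?B_{k-1}$ at the base case (to certify $\rho\ge T$ at descents), rather than inside the inductive step. This is cleaner and more quantitative; the paper's version, on the other hand, makes it more visible \emph{where} the $2^h$ nodes sit (namely, at the first balanced split below each of the $2^{h-1}$ nodes from the previous stage), which is perhaps more intuitive operationally. Either way, the conservation-of-$d_k$ argument that converts a lower bound on the number of descents into $\vec v_0(d_k)\ge\mathrm{tower}(k)$ is the same.
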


\begin{proof}
We proceed by induction on $k$, 
where the base case $k = 1$ is immediate,
so let us consider $k > 1$ and $\vec v_0$
such that $\vec v_0(d_i) = \vec v_0(d'_i) = 0$ for all $i < k$.

If $\vec v_0(d_k) \geq \mathrm{tower}(k)$,
we observe that $\?B_k$ can proceed from $(q^{\mathrm{init}}_k, \vec v_0)$
as follows:
\begin{itemize}
\item
each loop at $q^1_k$ empties $d_{k - 1}$,
i.e.\ doubles $d_{k - 1}$ and transfers it to $d'_{k - 1}$;
\item
each loop at $q^2_k$ empties $d'_{k - 1}$,
i.e.\ transfers $d'_{k - 1}$ back to $d_{k - 1}$;
\item
each split from $q^2_k$ divides $d_{k - 1}$ into two equal values,
and divides $d_k$ into two values that differ by at most $1$.
\end{itemize}
In any deduction tree thus obtained, at every node which is 
the $h$th node with state label $q^{\mathrm{loop}}_k$ from the root, 
and whose vector label is $\vec w$, we have:
\begin{align}
\vec w(d_{k - 1}) & = h\:, &
\vec w(d'_{k - 1}) & = 0\:, &
\vec w(d_k) & \geq 2^{\mathrm{tower}(k - 1) - h}\:.
\end{align}
Hence, by returning control to $q^{\mathrm{init}}_k$ 
as long as the value of $d_k$ is at least $2$,
$\?B_k$ can reach along every vertical path 
a node with state label $q^{\mathrm{loop}}_k$ 
at which the values of $d_{k - 1}$ and $d_k$ 
are respectively equal to $\mathrm{tower}(k - 1)$ and at least $1$.  
To complete the deduction tree to be $\{q^{\mathrm{leaf}}\}$-leaf-covering, 
from every such node we let $\?B_k$ decrement $d_k$ 
and apply the induction hypothesis on $\?B_{k-1}$.

The interesting direction remains, 
so suppose $\?D$ is a $(q^{\mathrm{init}}_k, \vec v_0)$-rooted 
$\{q^{\mathrm{leaf}}\}$-leaf-covering deduction tree of $\?B_k$.
Since at every $q^{\mathrm{loop}}_k$-labelled node in $\?D$,
the value of $d_k$ must be at least $1$,
it suffices to establish the following claim
and apply it for the maximum $h$:
\begin{claim}
For each $0 < h \leq \mathrm{tower}(k - 1)$,
$\?D$ contains $2^h$ incomparable nodes (i.e., none is a descendant of another) 
whose state label is $q^{\mathrm{loop}}_k$ and 
at which $d_{k - 1} + d'_{k - 1}$ has value at most $h$.
\end{claim}
In turn, by induction on $h$, 
that claim is a straightforward consequence of the next one.
(For the base case of that induction, i.e.\ $h = 1$, 
apply the next claim with $h' = 0$.)
\begin{claim}
For each node $n$ in $\?D$ whose state label is $q^{\mathrm{init}}_k$ and 
at which $d_{k - 1} + d'_{k - 1}$ has some value $h' < \mathrm{tower}(k - 1)$,
there must be two incomparable descendants $n_1$ and $n_2$ 
whose state labels are $q^{\mathrm{loop}}_k$ and 
at which the values of $d_{k - 1} + d'_{k - 1}$ are at most $h' + 1$.
\end{claim}
Consider a node $n$ as in the latter claim.
After the increment of $d_{k - 1}$ and the loops at $q^1_k$ and $q^2_k$,
the value of $d_{k - 1} + d'_{k - 1}$ will be at most $2 (h' + 1)$.
If the first split divides $d_{k - 1} + d'_{k - 1}$ equally, we are
done.

Otherwise, we have a $q^{\mathrm{loop}}_k$-labelled descendant $n'$ of $n$ 
at which $d_{k - 1} + d'_{k - 1}$ has value at most $h'$.
In particular, $d_{k - 1}$ is less than $\mathrm{tower}(k - 1)$ at $n'$,
so recalling the induction hypothesis regarding $\?B_{k - 1}$, 
the child $n''$ of $n'$ cannot be $q^{\mathrm{init}}_{k - 1}$-labelled.
Thus, $n''$ must be $q^{\mathrm{init}}_k$-labelled,
and the value of $d_{k - 1} + d'_{k - 1}$ at $n''$ 
is the same as at $n'$, so at most $h'$.  
We can therefore repeat the argument with $n''$ instead of $n$, 
but since $\?D$ is finite, 
two incomparable descendants as required eventually exist.
\end{proof}

Relying on the properties of the BVASS $\?B_k$, we now establish the
hardness of lossy reachability, matching the membership in
\textsc{Tower} in \autoref{th:upb} already for BVASS.  Although we do
not match the upper bounds when the system dimension is fixed, we
remark that our simulation uses a number of counters which is linear
in the height of the tower of exponentials with coefficient~$2$.

\begin{theorem}
Reachability for lossy BVASS is \textsc{Tower}-hard.
\end{theorem}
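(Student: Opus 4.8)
The plan is to reduce from the halting problem for Minsky machines that operate in space $\mathrm{tower}(k)$, i.e.\ whose counters remain bounded by $\mathrm{tower}(k)$ throughout the (deterministic) run, where $k$ is part of the input; this problem is \textsc{Tower}-hard, since such a machine can compute $\mathrm{tower}(k)$ into a counter and then simulate a $\mathrm{tower}(k)$-space Turing machine (see \citet{arXiv/Schmitz13}). By the equivalence recalled in \autoref{s:lossy.reach} it suffices to construct, in polynomial time from such a machine $M$ and its budget $k$, a BVASS $\?C$ with a root state and a set of leaf states such that $\?C$ admits a leaf-covering deduction tree rooted at the initial configuration if and only if $M$ halts. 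The system $\?C$ uses $O(k)$ counters: one counter $x$ for each Minsky counter of $M$, a \emph{dual} counter $\hat x$ meant to satisfy the invariant $x+\hat x\le\mathrm{tower}(k)$, and the scratch counters $d_{k-1},d'_{k-1},\dots,d_1$ of the hierarchy $\?B_k$ of \autoref{f:Bk}. Its control states follow the control graph of $M$: an increment of $x$ is mirrored by decrementing $\hat x$ and incrementing $x$, a decrement by the reverse, and a zero test on $x$ by a split $q\to q'+q^{\mathrm{init}}_k$ that routes the current value of $\hat x$ into a fresh embedded copy of $\?B_k$ (with its scratch counters set to $0$). By the ``only if'' direction of \autoref{l:Bk}, that auxiliary branch can be leaf-covered only if it received a value at least $\mathrm{tower}(k)$, which together with the invariant forces $x=0$; the sibling $q'$ then resumes the simulation after re-establishing the dual budget. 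A prefix of $\?C$ lets the prover build the initial dual budgets using generator gadgets of the same recursive shape as $\?B_k$ (whose output is capped at $\mathrm{tower}(k)$ by induction on $k$, the base case being hard-coded), and the halting instruction of $M$ leads to a leaf state, leftover counter contents being irrelevant in the leaf-covering formulation.

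Correctness splits as usual. For completeness, if $M$ halts we follow its unique run step by step, maintaining $x+\hat x=\mathrm{tower}(k)$ exactly, and discharge each simulated zero test by the ``if'' direction of \autoref{l:Bk} applied to the auxiliary $\?B_k$ branch. Soundness is the crux: from a leaf-covering deduction tree of $\?C$ one reads the simulation spine top-down and shows that it tracks a genuine run of $M$. Two sources of spurious behaviour must be excluded. First, the monotone slack of the coverability semantics: leaves and the children of split rules may carry arbitrarily large junk, exactly as losses would; but a smaller $\hat x$ only makes the $\?B_k$ test harder, so no zero test on a non-zero counter can ever succeed, and extra junk in $x$ only corresponds to additional decrements, which a halting run can ignore. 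Second, the non-deterministic way split rules partition vectors: one must isolate an invariant on the triples $(x,\hat x,\text{scratch})$ that is preserved by \emph{every} rule of $\?C$ under \emph{all} partitions, so that the prover can neither inflate $\hat x$ beyond $\mathrm{tower}(k)$ nor desynchronise a dual pair. The robustness of the internal counting performed by the $\?B_k$ branches is already supplied by \autoref{l:Bk}, whose two internal claims are re-used to bound what the prover can route out of each split.

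The principal obstacle is exactly this soundness design: laying out the gadgets of $\?C$ so that the adversarial, value-splitting semantics of BVASS cannot keep a counter $x$ small ``for free'' while its dual stays maximal --- in particular so that the dual budgets, which are consumed by the $\?B_k$ branch at each zero test, can be re-established between zero tests without ever exceeding $\mathrm{tower}(k)$. This cap is the genuinely delicate point, because lower bounds on a counter are certified destructively by feeding it through $\?B_k$, whereas upper bounds must instead be threaded down as an invariant all the way to the hard-coded base case $\?B_1$. Everything else --- the polynomial bookkeeping of the reduction, the $O(k)$ bound on the dimension, and the routine checks that each Minsky instruction is faithfully mirrored --- is straightforward; combined with \autoref{th:upb} (and \autoref{prop-abvass-llw}) it will establish \textsc{Tower}-completeness of lossy BVASS reachability, and hence of affine MELL provability.
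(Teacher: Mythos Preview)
Your overall scheme---reduce from $\mathrm{tower}$-bounded Minsky machines, maintain a dual $\hat x$ with the invariant $x+\hat x\le\mathrm{tower}(k)$, and certify each zero test by feeding the dual through the checker $\?B_k$ of \autoref{l:Bk}---is exactly the paper's. The one substantive design difference is how the dual survives a zero test. You propose to \emph{consume} $\hat x$ at the split (routing it to the $\?B_k$ branch) and then \emph{regenerate} it on the continuation via a weak $\mathrm{tower}$-computer. The paper instead never regenerates: before the split it copies $\hat c$ simultaneously into a fresh $d_K$ \emph{and} into an auxiliary $c'$, then restores $\hat c$ from $c'$, and only then splits, so the continuing branch already holds $\hat c$ intact while only $d_K$ is handed to $\?B_K$ (cf.\ \autoref{f:Minsky}).

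This difference is not cosmetic; your version has a soundness gap. A weak Petri-net computer is only guaranteed to output at most $\mathrm{tower}(k)$ when its internal work counters start at~$0$; with leftover values it can overshoot (e.g.\ a stale unit in the level counter buys the prover an extra doubling round). Since a single zero-test instruction of $M$ may fire arbitrarily often along the simulation spine, your regenerator is re-entered with dirty work counters, and in a BVASS there is no way to force them back to~$0$: you cannot zero-test them, and ``dumping'' them through an extra split is a prover choice, so it constrains completeness but not soundness. The prover could thus inflate $\hat x$ beyond $\mathrm{tower}(k)$ and later pass a $\?B_k$ check with $x>0$. The paper's copy-and-restore trick sidesteps this entirely by running the weak computer exactly once, at initialisation; the invariant $c+\hat c+c'\le\mathrm{tower}(K)$ then persists down the whole spine without any further generation, and the soundness argument reduces to the simple observation that any value leaked at a split can only starve a \emph{later} zero test of the same counter, never fake one.
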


\begin{proof}
For a notion of Minsky machines that is similar to 
how \abv\ were defined in \autoref{sub-abvass},
let such a machine be given by a finite set of states $Q$,
a finite set of counters $C$, and finite sets of
increment rules ``$q \xrightarrow{\text{{+}{+}}c} q_1$,''
decrement rules ``$q \xrightarrow{\text{{-}{-}}c} q_1$'' and
zero-test rules ``$q \xrightarrow{c \stackrel{?}{=} 0} q_1$.''
By simulating a tape using two stacks,
and simulating a stack using two counters, 
it is straightforward to verify that 
the following problem\ifomitproofs\relax\else~\citep[called $\mathbf
F_3$-MM in][]{arXiv/Schmitz13}\fi\ is \textsc{Tower}-hard:
\begin{quote}
Given a Minsky machine $\?M$ and two states $q_0, q_H$,
does $\?M$ have a computation that 
starts in $q_0$ with all counters having value $0$, 
ends in $q_H$, and is such that all counter values are at most
$\mathrm{tower}(|\?M|)$?
\end{quote}

We establish the theorem by working with the equivalent
top-down coverability problem (see \autoref{s:lossy.reach}).
We show that, given a Minsky machine $\?M$ of size $K$
and two states $q_0, q_H$, then a BVASS $\?A(\?M)$, a state $q_r$ and a
finite set $Q_\ell\eqdef\{q_H,q^\mathrm{leaf}\}$
are computable in logarithmic space, such that 
$\?M$ has a $0$-initialised $\mathrm{tower}(K)$-bounded 
computation from $q_0$ to $q_H$ if and only if 
$\?A(\?M)$ has a $(q_r, \vec 0)$-rooted $Q_\ell$-leaf-covering 
deduction tree.

\begin{figure}
\centering
\begin{tikzpicture}[auto,node distance=1.5cm]
    \node[rectangle,dotted,draw=black!50,rounded corners=8pt,text width=8cm,text
      height=2.1cm] at (1.9,-.3) {$\mathrm{tower}(m)$};
    \node[rectangle,dotted,draw=black!50,fill=black!2,rounded corners=8pt,text width=4.75cm,text
      height=2cm] at (3.45,-.3) {$2^m$};
    \node[rectangle,dotted,draw=black!50,fill=black!4,rounded corners=8pt,text width=2.25cm,text
      height=1.8cm] at (4.6,-.3) {$2m$};
    \node[intermediate,accepting above](z0){};
    \node[intermediate,initial,left=of z0](z){};
    \node[intermediate,right=of z0](z1){};
    \node[intermediate,right=of z1](z2){};
    \node[intermediate,right=of z2](z3){};
    \path[->,every node/.style={font=\footnotesize}]
      (z) edge node {$\text{{+}{+}}c_0$} (z0)
      (z0) edge[bend left] node {$\text{{-}{-}}c_3;\text{{+}{+}}c_0$} (z1)
      (z1) edge[bend left] (z0)
      (z0) edge[loop below] node {$\text{{-}{-}}c_0;\text{{+}{+}}c_2$} ()
      (z1) edge[bend left] node {$\text{{-}{-}}c_2$} (z2)
      (z2) edge[bend left] (z1)
      (z1) edge[loop below] node {$\text{{-}{-}}c_0;\text{{+}{+}}c_1$} ()
      (z2) edge[bend left] node {$\text{{-}{-}}c_1$} (z3)
      (z3) edge[bend left] node {$\text{{+}{+}}c_0;\text{{+}{+}}c_0$} (z2);
  \end{tikzpicture}  
\caption{\label{fig-weaktower}A VASS weakly computing
$\mathrm{tower}(m)$, with input counter $c_3$ and output counter
$c_0$~\citep{MayrMeyer81}.  Counter~$c_3$ is initialised to $m$, and
the others to~$0$.}
\end{figure}
For each counter $c$ of $\?M$, there are three counters in $\?A(\?M)$
denoted $c, \hat{c}, c'$.  The initial part of $\?A(\?M)$ employs a
``weak Petri computer''~\citep{MayrMeyer81} for the $\mathrm{tower}$
function, namely a constant VASS with designated input and output
counters and initial and final states.  Given a natural number $m$ in
its input counter and starting in its initial state, it can compute
$\mathrm{tower}(m)$ in its output counter upon reaching its final
state, but non-deterministically may also compute a smaller value (but
never a larger one).  This is a standard construction, using weak
routines for $2m$ and $2^m$, which we depict
in \autoref{fig-weaktower}.\footnote{The reader puzzled by the
``$\text{{+}{+}}c_0$'' increments in \autoref{fig-weaktower} should
observe that $\mathrm{tower}(0)=2^0=1$ but $2\cdot 0=0$.}  By means of
the latter VASS, each counter $\hat{c}$ in $\?A(\?M)$ is initialised
to have value $\mathrm{tower}(K)$ or possibly smaller.  Recalling that
the auxiliary VASS is constant, a simple pattern for incorporating it
into $\?A(\?M)$ is to use fresh states and counters for each
$\hat{c}$.

The main part of $\?A(\?M)$ consists of simulating $\?M$ from $q_0$, 
using the translations of increments, decrements and zero tests 
in \autoref{f:Minsky}.
For the increments and decrements, 
$\?A(\?M)$ also performs the opposite operation on the hatted counter, 
thereby keeping the sums $c + \hat{c}$ constant.
For the zero tests, $\?A(\?M)$ attempts 
by two loops and using the primed counter, 
to copy the hatted counter to $d_K$ 
and then employ $\?B_K$ (see \autoref{f:Bk})
to verify that the latter is maximal (i.e., has value $\mathrm{tower}(K)$).
Thus, $\?A(\?M)$ also has counters 
$d_i$ for $0 < i \leq K$ and 
$d'_i$ for $0 < i < K$,
and more precisely a variant of $\?B_K$ is employed 
that has the same dimension as $\?A(\?M)$ 
(and does not use the extra counters).

\begin{figure}
  \ifomitproofs\begin{center}\else\centering\fi
  \begin{tikzpicture}[auto,node distance=1.5cm]
    \node[intermediate,label=left:{$c \stackrel{?}{=} 0$\/:}](z0){};
    \node[intermediate,right=of z0](z1){};
    \node[intermediate,right=of z1](z2){};
    \node[intermediate,right=of z2](z3){};
    \node[rectangle,dotted,draw=black!50,below right=.4cm and .6cm of
      z2,rounded corners=8pt,text width=1.2cm,text
      height=.70cm]{$\?B_{K}$};
    \node[state,below right=.6cm and 1.3cm of z2](zB){$q^{\mathrm{init}}_K$};
    \path[->,every node/.style={font=\footnotesize}]
          (z0) edge (z1)
          (z1) edge[loop below] node{$\begin{array}{c}
                                      \text{{-}{-}} \hat{c};%
                                      \text{{+}{+}} d_K;%
                                      \text{{+}{+}} c'
                                      \end{array}$} ()
          (z1) edge (z2)
          (z2) edge[loop below] node{$\begin{array}{c}
                                      \text{{-}{-}} c';%
                                      \text{{+}{+}} \hat{c}
                                      \end{array}$} ()
          (z2) edge (z3)
          (z2) edge[bend left=10] (zB);
    \draw[color=black!40] (z2) ++(.6,0) 
                          arc[start angle=0,end angle=-23,radius=.6];
    \draw (z2) ++(.75,-.13) node{$+$};
    \node[intermediate,above=.6cm of z0,label=left:{$\text{{+}{+}} c$\/:}](i0){};
    \node[intermediate,right=of i0](i1){};
    \path[->,every node/.style={font=\footnotesize}]
          (i0) edge[swap] node{$\begin{array}{c}
                                \text{{+}{+}} c;%
                                \text{{-}{-}} \hat{c}
                                \end{array}$} (i1);
    \node[intermediate,right=of i1,label=left:{$\text{{-}{-}} c$\/:}](d0){};
    \node[intermediate,right=of d0](d1){};
    \path[->,every node/.style={font=\footnotesize}]
          (d0) edge[swap] node{$\begin{array}{c}
                                \text{{-}{-}} c;%
                                \text{{+}{+}} \hat{c}
                                \end{array}$} (d1);
  \end{tikzpicture}\vspace*{-.8em}
  \ifomitproofs\end{center}\fi
  \caption{\label{f:Minsky} Simulating the Minsky operations.}
\end{figure}

For each $0$-initialised $\mathrm{tower}(K)$-bounded 
computation of $\?M$ from $q_0$ to $q_H$, it is straightforward to check that
$\?A(\?M)$ can simulate it%
\ as follows:
\begin{itemize}
\item
each counter $\hat{c}$ is initialised to $\mathrm{tower}(K)$;
\item
in every simulation of a zero test $c \stackrel{?}{=} 0$,
the values of $c, \hat{c}, c', d_K$ 
are resp.\ $0, \mathrm{tower}(K), 0, 0$
before the two loops,
and $0, \mathrm{tower}(K), 0, \mathrm{tower}(K)$
before the split;
\item
at every start of $\?B_K$, 
the value of $d_K$ is $\mathrm{tower}(K)$
and all other counters have value $0$.
\end{itemize}
By \autoref{l:Bk}, we obtain a $(q_r, \vec 0)$-rooted 
$Q_\ell$-leaf-covering deduction tree of $\?A(\?M)$.

The other direction is more involved:
  we show that, if $\?A(\?M)$ has a $(q_r, \vec 0)$-rooted $Q_\ell$-leaf-covering
  deduction tree $\?D$, then $\?M$ has a $0$-initialised
  $\mathrm{tower}(K)$-bounded computation from $q_0$ to $q_H$.
By construction, $\?D$ consists of a path $\pi$ 
from which there are branchings to deduction trees of $\?B_K$.
The main part of $\pi$ consists of the simulations of
increments, decrements and zero tests as in \autoref{f:Minsky}.
From it, we obtain a $0$-initialised $\mathrm{tower}(K)$-bounded 
computation of $\?M$ from $q_0$ to $q_H$,
after observing the following for every counter $c$ of $\?M$:
\begin{itemize}
\item
After $\hat{c}$ is initialised in $\?D$,
the value of $c + \hat{c} + c'$ is always at most $\mathrm{tower}(K)$.
\item
For each simulation of a zero test of $c$, we have by \autoref{l:Bk} 
that the value of $d_K$ is $\mathrm{tower}(K)$ before the split
and is $0$ after the split on the path $\pi$,
and consequently that the values of $c, \hat{c}, c'$ 
are $0, \mathrm{tower}(K), 0$ (respectively) before the two loops.
\item
The value of $c$ may erroneously decrease due to the branchings,
but since that makes the value of $c + \hat{c} + c'$ 
smaller than $\mathrm{tower}(K)$,
such losses may occur only after the last simulation of a zero test of $c$,
and so cannot result in an erroneous such simulation.
\item
Similarly, only the last transfer of $c'$ to $\hat{c}$ may be incomplete 
(i.e., it does not empty~$c'$).\ifomitproofs\qedhere\fi
\end{itemize}

\ifomitproofs\relax\else We conclude that $\?A(\?M)$ has the required properties.\fi
\end{proof}

Since lossy reachability reduces to reachability and by
\autoref{prop-abvass-ll} and \autoref{prop-abvass-llw}, this entails:
\begin{corollary}\label{cor-mell}
  Provability in MELL, MELLW, and LLW is \textsc{Tower}-hard.
\end{corollary}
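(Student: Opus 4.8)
The plan is to derive all three hardness statements by composing the \textsc{Tower}-hardness of lossy BVASS reachability just obtained (\autoref{th-bvass}) with the reductions from counter-system reachability to provability established in \autoref{sec-focus}. Recall that \textsc{Tower}-hardness is meant with respect to many-one elementary reductions, and every reduction used below is a logarithmic-space one; so it suffices to exhibit, for each of MELL, MELLW and LLW, a chain of such reductions starting from lossy BVASS reachability.

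First I would treat LLW and MELLW. A BVASS is a special case of an \abv\ (one with $T_f=T_z=\emptyset$), so \autoref{th-bvass} already shows that lossy \abv\ reachability is \textsc{Tower}-hard; composing this with the logarithmic-space reduction from lossy \abv\ reachability to LLW provability supplied by \autoref{prop-abvass-llw} gives \textsc{Tower}-hardness of LLW provability. Similarly, a BVASS is a BVASS$_{\vec 0}$ (one with $T_z=\emptyset$), so the logarithmic-space reduction from lossy BVASS$_{\vec 0}$ reachability to MELLW provability of \autoref{prop-abvass-llw} yields \textsc{Tower}-hardness of MELLW provability. For MELL, I would first replace losses by ordinary rules: as recalled in \autoref{s:lossy.reach}, a lossy BVASS is equivalent to the BVASS obtained by adjoining the unary rules $q\xrightarrow{-\vec e_i}q$ for every state $q$ and every $0<i\le d$, a logarithmic-space transformation that adds no fork and no full zero test rules and therefore still produces a BVASS, in particular a BVASS$_{\vec 0}$. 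Composing this reduction with the one from BVASS$_{\vec 0}$ reachability to MELL provability given by \autoref{prop-abvass-ll} shows that MELL provability is \textsc{Tower}-hard.

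No step here is a genuine obstacle: the mathematical content is already contained in \autoref{th-bvass} and in the reductions of \autoref{sec-focus}, and the proof is a matter of chaining them. The only points that deserve a moment's care are that every reduction in the chain is elementary, so that \textsc{Tower}-hardness is transported along the composition, and that eliminating losses by additional unary self-loops keeps us inside the BVASS (hence BVASS$_{\vec 0}$) class, so that the second part of \autoref{prop-abvass-ll} indeed applies. As an extra sanity check, \textsc{Tower}-hardness of LLW can also be recovered from that of MELLW, since cut elimination and the subformula property for cut-free proofs make LLW conservative over its multiplicative exponential fragment.
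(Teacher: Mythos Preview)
Your proposal is correct and follows essentially the same route as the paper: the paper's one-line justification is that lossy reachability reduces to (plain) reachability, and then \autoref{prop-abvass-ll} and \autoref{prop-abvass-llw} apply, which is exactly the chain of reductions you spell out in more detail. Your explicit observation that encoding losses as unary self-loops $q\xrightarrow{-\vec e_i}q$ stays within the BVASS (hence BVASS$_{\vec 0}$) class is the only point the paper leaves implicit, and your closing conservativity remark is a harmless extra.
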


\section{\textsc{Ackermann} Upper Bounds}\label{sec-ack}
\label{sec-incr}
We investigate in this section the complexity of reachability in
increasing or expansive \abv.  The latter is related to provability in
contractive linear logic, and as shown by \citet{urquhart99}, to
provability in the conjunctive-implicative fragment of relevance
logic, and we treat it in \autoref{sub-exp}.

\subsection{Increasing Reachability}\label{sub-incr}
Expansive \abv\ are not quite dual to lossy \abv: the natural model
for this is that of \emph{increasing} \abv, which feature additional
deduction rules
\begin{gather*}
  \drule{q,\vec v}{q,\vec v+\vec e_i}{increase}
\end{gather*}
for all $q$ in $Q$ and $0<i\leq d$.

\subsubsection{Bottom-Up Coverability\nopunct.} As with lossy reachability,
increasing reachability corresponds to a coverability problem in a variant
of \abv.  Let us define a variant of \abv\ that
provides different semantics
\begin{itemize}
\item for rules in $T_f$ as \emph{meets} ``$q\to q_1\sqcap q_2$''
  instead of forks, and
\item for rules in $T_z$ as \emph{zero-jumps}
  ``$q\xrightarrow{\vec\omega}q_1$'' instead of full zero tests.
\end{itemize}
Their semantics are now defined by the deduction rules
\begin{equation*}
  \drule{q,\vec v_1\sqcap\vec v_2}{q_1,\vec v_1\quad q_2,\vec v_2}{meet}
  \qquad\drule{q,\vec 0}{q_1,\vec v}{zero-jump}
\end{equation*}
where the \emph{meet} $\vec v_1\sqcap\vec v_2$ of two vectors in
$\+N^d$ is the component-wise minimum of $\vec v_1$ and $\vec
v_2$: for all $0<i\leq d$,
\begin{equation}
  (\vec v_1\sqcap\vec v_2)(i)\eqdef\min(\vec v_1(i),\vec v_2(i))\;.
\end{equation}%

Let us call the resulting model ABVASSi.  Given an ABVASSi $\?A$, a
state $q_r$, and a finite set of states $Q_\ell$, the \emph{bottom-up
  coverability} or \emph{root coverability} problem asks for the
existence of a deduction tree $\?D$ with root label $(q_r,\vec v)$ for
some $\vec v$ in $\+N^d$ where $d$ is the dimension of $\?A$, and with
every leaf labelled by some element of $Q_\ell\times\{\vec 0\}$.

By a reasoning similar to the one employed for top-down coverability
in ABVASSr, bottom-up coverability in ABVASSi corresponds to
increasing reachability in \abv: by monotonicity we can always
increase as soon as possible in the latter, either at the root, or
right after a full zero test, or right after an ``imbalanced''
fork---where increases differ on the two branches.

\subsubsection{Pseudo-Increasing \abv\nopunct.}\label{sub-pincr}
Let us consider yet another variant of increasing \abv, which will be
used in the complexity analysis, and which combines increasing steps
with unary steps.  Given a vector $\vec u$ in $\+Z^d$, let us denote
by $\vec u_-$ the vector in $\+N^d$ defined for all
$0<i\leq d$ by
\begin{align}
  \vec u_-(i)&\eqdef\begin{cases}-\vec u(i)&\text{if }\vec u(i)<0\\
  0&\text{otherwise}.\end{cases}
\intertext{Then, for a vector $\vec v$ in $\+N^d$, the
\emph{join} $\vec v\sqcup\vec u_-$ defined for all $0<i\leq d$ by}
  (\vec v\sqcup\vec u_-)(i)&\eqdef\max(\vec v(i),\vec u_-(i))
\end{align}
is the minimal vector greater or equal to $\vec v$ that allows to fire
a unary rule with vector~$\vec u$.

A \emph{pseudo-increasing} \abv\ does not have the increasing rule,
but uses instead a different semantics for its unary rules
$q\xrightarrow{\vec u}q'$ in $T_u$, which can be used for \emph{any}
$\vec v$ in $\+N^d$:
\begin{equation*}
  \drule{q,\vec v}{q',(\vec v\sqcup\vec u_-)+\vec u}{pseudo-unary}
\end{equation*}
The idea of the pseudo-unary rule is that it implicitly applies the
minimal amount of increase necessary to use a given unary rule.

Reachability (from $(q_r,\vec 0)$ to $Q_\ell\times\{\vec 0\}$) in an
increasing \abv\ is then equivalent to reachability in the same \abv\
with pseudo-increasing semantics.  We can indeed delay increases
occurring right before another rule: an increase right before a fork
or a split can be performed after it, no increase can occur right
before a full zero test, and \emph{superfluous} increases right before
a unary rule $q\xrightarrow{\vec u}q'$ can be performed after it,
i.e.\ an increase from $(q,\vec v)$ to $(q,\vec v+\vec e_i)$ with
$\vec v\geq\vec u_-$ can rather be performed from $(q',\vec v+\vec u)$
to $(q',\vec v+\vec e_i+\vec u)$.  The remaining increases right
before unary rules become part of pseudo-unary rules.

\subsection{Complexity of Increasing Reachability}
In the more restricted case of BVASS, which do not feature forks nor
full zero tests, bottom-up coverability coincides with increasing
reachability, and this problem was called more simply ``coverability''
by \citet{verma05}:
\begin{fact}[\citet{demri12}]
  Reachability in increasing BVASS is \textsc{2-ExpTime}-complete.
\end{fact}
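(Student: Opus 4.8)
The statement is quoted from \citet{demri12} rather than proved here, but the plan I would follow is this. By the reformulation recorded just above --- for BVASS, bottom-up coverability in ABVASSi coincides with increasing reachability --- it suffices to treat the classical \emph{coverability} problem for branching vector addition systems, which I would attack by a Rackoff-style argument for the \textsc{2-ExpTime} upper bound and by an alternating-Turing-machine simulation for the matching lower bound. For the upper bound, the structural observation to exploit is that, reading deduction trees from the leaves towards the root, the family $(R_q)_{q\in Q}$, where $R_q\subseteq\+N^d$ collects the vectors $\vec v$ such that $(q,\vec v)$ can label the root of a $Q_\ell$-leaf-covering deduction tree, consists of \emph{downward-closed} sets: from $(q,\vec v')$ with $\vec v'\leq\vec v$ one may first apply increase rules up to $(q,\vec v)$. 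The instance is positive iff $R_{q_r}\neq\emptyset$, and $(R_q)_{q\in Q}$ is the least fixed point of the monotone operator that seeds $\vec 0$ into every $R_{q_\ell}$, adds to $R_q$ every $\vec v$ with $\vec v+\vec u\in R_{q_1}$ for each unary rule $q\xrightarrow{\vec u}q_1$, adds to $R_q$ the downward closure of the Minkowski sum $R_{q_1}+R_{q_2}$ for each split rule $q\to q_1+q_2$, and then downward-closes. Representing each $R_q$ by a finite antichain over $\+N\cup\{\omega\}$, this fixed point is computed by a Karp--Miller-style acceleration that introduces $\omega$ components wherever split rules create unbounded growth.

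The quantitative part --- and the step I expect to be the main obstacle --- is to bound the magnitudes of the $\omega$-markings arising in the $R_q$, and the number of iterations to stabilisation, by functions that are only \emph{doubly} exponential in the input size; equivalently, to prove a Rackoff-type bound on the size of a minimal coverability witness. The difficulty compared to Rackoff's original path-based pumping (and to \autoref{l:ABVASS.t.d.c} of the present paper) is that the argument must absorb the additive, nondeterministic redistribution of counter values performed by split rules over a branching structure, while being controlled tightly enough that the bound does not compound into a tower as it does in the lossy case. Granting such bounds, the problem is decided by an alternating procedure that guesses a witnessing deduction tree branch by branch --- using a universal branch at each split rule --- and keeps only $O(1)$ configurations of the system in memory; since a configuration takes exponentially many bits, this runs in alternating exponential space, and \textsc{AExpSpace} equals \textsc{2-ExpTime}, which gives membership. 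Alternatively, the accelerated fixed point can be evaluated directly in deterministic doubly-exponential time.

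For the lower bound I would reduce from acceptance by an alternating Turing machine running in space $2^n$, a canonical \textsc{2-ExpTime}-complete problem. The branching of a BVASS simulates the universal branching of the machine --- a split rule $q\to q_1+q_2$ forces both subtrees to be completed --- while nondeterminism among rules simulates existential branching; a tape of $2^n$ cells, together with the counters needed to address it, is realised by the usual Lipton-style construction nested $n$ levels deep, yielding reliable counters of magnitude about $2^{2^{\Theta(n)}}$, and the zero tests required for Minsky-machine-like control are replaced by the complementary-counter trick --- keeping each counter $c$ alongside a dual $\hat c$ with $c+\hat c$ fixed --- whose verification, in a coverability setting, needs only a weak check that the dual has reached its maximal value. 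Exactly as in \autoref{sec-lowb} of the present paper, the delicate point is that split rules redistribute counter values nondeterministically, so the gadgets must be designed so that no spurious redistribution yields a spurious accepting computation, typically by having each universal branch first produce a verified copy of the current machine configuration before splitting and by assigning the two children complementary checking duties. Checking faithfulness of the simulation in both directions, and that the system is output in logarithmic space, then completes the argument.
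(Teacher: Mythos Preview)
The paper does not prove this Fact at all: it is simply attributed to \citet{demri12} and quoted without argument, so there is no ``paper's own proof'' to compare against. You correctly note this at the outset.

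Your upper-bound sketch is headed in the right direction---a Rackoff-style induction on the dimension giving doubly-exponential witness bounds, followed by an alternating exponential-space search---and you correctly flag the doubly-exponential bound as the crux. The detour through a Karp--Miller acceleration on downward-closed sets is unnecessary and a little muddled (Karp--Miller is native to upward-closed coverability sets, not the downward-closed $R_q$ you define), but the essential idea matches what \citeauthor{demri12} do.

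The lower-bound sketch, however, has a genuine gap. You are effectively transplanting the machinery of \autoref{sec-lowb}---complementary counters $c+\hat c=M$ with ``weak'' maximality checks, and verified copies before splitting---from the \emph{lossy} setting to the \emph{increasing} one. But the two settings check opposite inequalities. In lossy BVASS (top-down coverability) the leaf condition is relaxed, so the $\?B_k$ gadgets can verify \emph{lower} bounds like $\hat c\geq M$; the complementary-counter trick works precisely because the only thing that can go wrong is values becoming too small. In increasing BVASS, by contrast, the increase rule lets any counter be made arbitrarily large at any moment, so a check of the form ``$\hat c$ has reached its maximal value'' is vacuous, and the invariant $c+\hat c=M$ is not preserved. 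Dually, what one \emph{can} robustly test in the increasing setting is an \emph{upper} bound $c\leq M$ (by attempting to decrement $M$ and requiring the leaf to be $\vec 0$), but that test is destructive, and any attempt to make it non-destructive via a split gives the continuing branch a deficit it can only ``repair'' by an unconstrained increase---so the intended equality constraint evaporates. For the same reason, producing a ``verified copy'' before a split does not work: the continuing child can increase to any value, not specifically the original one. The actual lower bound in \citet{demri12} therefore cannot be obtained along the lines you describe; it requires a construction tailored to what split rules can enforce in the bottom-up direction, rather than a reuse of the lossy-case gadgets.
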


Since increasing \abv\ are not too different from expanding \abv, the
fact that their complexity is the same is not too surprising:
\begin{theorem}\label{fc-avass-rcov}
  Reachability in increasing AVASS and increasing \abv\ is
  \Ack-complete.
\end{theorem}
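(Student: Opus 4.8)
The plan is to prove the two directions separately, using that it suffices to establish the lower bound for increasing AVASS (which is a special case of increasing \abv) and the upper bound for increasing \abv\ (which subsumes increasing AVASS), exactly as the complexity of the expansive case was pinned down in \autoref{th-abvass-cov}.

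For the upper bound I would work with the \emph{pseudo-increasing} reformulation of \autoref{sub-pincr} and---arguing as in \autoref{lem-abvass}---assume that every branch of a reachability witness crosses at most $|Q|-1$ full zero tests. The crucial point is a shortening argument: if, along a branch, a node $n'$ is a descendant of a node $n$ with the same state and a componentwise-larger vector label, then, since increases are free, any reachability witness below $n'$ already witnesses reachability from $n$, so $\?D[n\leftarrow n']$ is again a witness. Thus in a minimal witness the sequence of configurations along any branch, between two consecutive full zero tests, is a bad sequence for the well-quasi-order $Q\times(\+N^d,{\leq})$; it is moreover \emph{controlled}, since a pseudo-unary, fork or split step can increase the norm of a configuration by at most a fixed quantity depending only on $d$ and $\mathrm{max}(T_u)$ (splits and full zero tests in fact only decrease it). Length function theorems for Dickson's Lemma \citep[see e.g.][]{FFSS-lics2011} then bound the length of such a sequence, hence---branching being bounded---the size of a minimal witness, by $\mathrm{Ack}$ of a polynomial in the input, and such a witness can be guessed and checked within \Ack. (For increasing AVASS alone one can shortcut this by reducing to expansive AVASS reachability: represent the value $v$ of each coordinate by $2v+1$ so that all components stay positive, whence an expansion simulates a free unit increase, and then appeal to \autoref{th-abvass-cov}.)

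For the lower bound I would follow the pattern of \citeauthor{urquhart99}'s proof for expansive AVASS and reduce from the halting problem of Minsky machines whose counter values stay bounded by the Ackermann function, which is \Ack-hard. For each counter $c$ of the machine I would keep a companion counter $\hat c$ subject to the invariant $c+\hat c=N$ for an Ackermann-sized budget $N$; $N$ is laid down by an initial ``weak'' phase that uses the free increase rules (possibly overshooting, to be caught later), each increment or decrement of $c$ is mirrored on $\hat c$ so as to preserve the sum, and each zero test on $c$ is simulated by a fork spawning a checker that verifies $\hat c$ to be maximal, i.e.\ equal to $N$; acceptance requires all companion and budget counters to be driven back to $0$, which exposes any overshoot or cheating.

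The step I expect to be the main obstacle is precisely the soundness of this last reduction under \emph{unconditional} increases: whereas expansions can only increment a nonzero coordinate, the increase rule applies everywhere, so one must design the simulation---and, in particular, the forked checkers---so that raising a counter past its intended value, or incrementing a ``forbidden'' coordinate from $0$, necessarily breaks an invariant that is subsequently detected, or leaves a residue that cannot be cleared before a leaf. Verifying this robustness, rather than assembling the skeleton of the reduction or the Dickson-theoretic bookkeeping behind the upper bound, is where the real work lies; an alternative worth attempting is a direct reduction from expansive AVASS reachability, but making that sound seems to require the same robustness analysis.
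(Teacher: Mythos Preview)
Your upper-bound argument coincides with the paper's: eliminate full zero tests via \autoref{lem-abvass}, switch to the pseudo-increasing semantics of \autoref{sub-pincr}, argue by shortening that along every branch of a minimal witness the sequence of configurations is a controlled bad sequence for the product wqo on $Q\times\+N^d$, and invoke the length function theorem for Dickson's Lemma from~\citep{FFSS-lics2011}. (Your aside about reducing increasing AVASS to expansive AVASS via $v\mapsto 2v+1$ is not obviously sound as stated: a single expansion adds $1$, taking $2v+1$ to the even value $2v+2$, which represents no $v'$; making this work would need a parity argument you have not supplied, and it is in any case unnecessary given that the main argument already covers the full \abv\ case.)

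Where you diverge from the paper is the lower bound, and there you are making life much harder than necessary. The paper does not redo an Urquhart-style construction with companion counters and budget checking; it simply reduces from reachability in \emph{increasing Minsky machines}---Minsky machines whose counters may spontaneously increase---whose \Ack-hardness is already established in~\citep{phs-mfcs2010}. Because the source machines themselves carry increasing semantics, the robustness issue you flag as the main obstacle evaporates: the fork-based encoding of zero tests from the proof sketch of \autoref{fc-avass-reach} works verbatim, and the free increases of the simulating AVASS correspond directly to the increasing errors of the Minsky machine. Your route, reducing from \emph{exact} Ackermann-bounded Minsky machines and then certifying the bound inside the AVASS, can presumably be pushed through, but it amounts to reproving the content of~\citep{phs-mfcs2010} inside your reduction and requires precisely the delicate soundness analysis you anticipate; the paper's route is a two-line reduction.
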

\begin{proof}
  For the lower bound, a reduction from the reachability problem in
  \emph{increasing} Minsky machines~\citep{phs-mfcs2010} is
  straightforward, since it uses the same encoding of zero tests as
  the proof sketch for \autoref{fc-avass-reach}.

  For the upper bound, by \autoref{lem-abvass}, we can restrict
  ourselves to ABVASS without loss of generality.  Define the partial
  order $\leq$ over configurations in $Q\times\+N^d$ by $(q,\vec
  v)\leq(q',\vec v')$ if $q=q'$ and $\vec v\leq\vec v'$; this is the
  product ordering over $Q\times\+N^d$, where $Q$ is ordered using
  equality.  By Dickson's Lemma, $(Q\times\+N^d,\leq)$ is a \emph{well
    quasi order}.  The proof for the upper bound thus follows the
  typical steps of an application of the \emph{length function
    theorem} for Dickson's Lemma found in \citep{FFSS-lics2011}.

  Let us consider an increasing reachability witness $\?D$ and the
  sequence of labels $(q_r,\vec 0)=(q_0,\vec v_0),(q_1,\vec
  v_1),\dots,(q_m,\vec v_m)\in Q_\ell\times\{\vec 0\}$ along some
  branch $n_0,n_1,\dots,n_m$ of $\?D$.  By ignoring the intermediate
  increase steps, we extract a \emph{pseudo-subsequence} $(q_{i_0},\vec
  v_{i_0}),(q_{i_1},\vec v_{i_1}),\dots,(q_{i_p},\vec v_{i_p})$ with
  $(q_{i_0},\vec v_{i_0})\geq (q_r,\vec 0)$ and, for each $0<j\leq p$,
  $(q_{i_j},\vec v_{i_j})\geq (q_{i_{j-1}+1},\vec v_{i_{j-1}+1})$.

  Assume now that $\?D$, among all the increasing reachability
  witnesses, has pseudo-subsequences of minimal length (noted $p+1$
  above) along each branch.  Then, along any branch, for all
  $0\leq j<k\leq p$, $(q_{i_j},\vec v_{i_j})\not\leq(q_{i_k},\vec
  v_{i_k})$, or a sequence of increases would allow to go from
  $n_{i_{j-1}}$ to $n_{i_k}$ directly with a strictly shorter
  pseudo-subsequence.  In terms of the wqo, this means that, along any
  branch, the pseudo-subsequence is a \emph{bad} sequence.  Let us
  furthermore apply the strategy in \autoref{sub-pincr} and delay
  increases as much as possible---note that this might provide further
  opportunities for reducing the length of pseudo-sequences along the
  branches of $\?D$.  The resulting increasing reachability witness,
  where the remaining increases occur necessarily just before unary
  rules, can then be seen as a \emph{pseudo-increasing} reachability
  witness, where every sequence of labels along every branch is a bad
  sequence for the wqo $(Q\times\+N^d,\leq)$.

  Define the \emph{norm} $\|q,\vec v\|$ of a configuration $(q,\vec
  v)$ in $Q\times\+N^d$ as the infinite norm $\max_{0<i\leq d}\vec
  v(i)$ of $\vec v$.  Observe that, along any branch of a
  pseudo-increasing reachability witness, if $(q_j,\vec v_j)$ and
  $(q_{j+1},\vec v_{j+1})$ are two successive labels, then
  \begin{equation}
    \|q_{j+1},\vec v_{j+1}\|\leq\|q_j,\vec
    v_j\|+\mathrm{max}^-(T_u)+\mathrm{max}^+(T_u)\;.
  \end{equation}
  Define accordingly $g(x)\eqdef
  x+\mathrm{max}^-(T_u)+\mathrm{max}^+(T_u)$, then for the $j$th label
  along a branch, $\|q_j,\vec v_j\|\leq g^j(0)$ the $j$th iterate of
  $g$.  This shows that the sequence of labels along every branch of
  our pseudo-increasing reachability witness is a bad sequence
  \emph{controlled} by $(g,0)$.

  A \emph{length function theorem} for a wqo is a combinatorial
  statement bounding the length of bad controlled sequences.  In our
  case, for the wqo \mbox{$(Q\times\+N^d,\leq)$} and the control $(g,0)$, the
  theorem in~\citep{FFSS-lics2011} yields an
  \begin{equation}\label{eq-lft}
    F_{d+1}\big(p(\mathrm{max}^-(T_u)+\mathrm{max}^+(T_u),
    |Q|)\big)\leq\mathrm{Ack}(p'(|\?A|))
  \end{equation}
  upper bound on the length of branches for some polynomial functions
  $p$ and $p'$, where $(F_d{:}\,\+N\to\+N)_d$ is a hierarchy of
  fast-growing functions~\citep{lob70} with $\mathrm{Ack}(n)\eqdef
  F_{n+1}(n)$.  A non-deterministic combinatorial algorithm can thus
  compute the bound in \eqref{eq-lft} and attempt to find a
  pseudo-increasing witness of such bounded height (note that the
  branching degree of witnesses is also bounded) in Ackermannian time.
  As with lossy reachability, the main parameter in this complexity
  upper bound is the dimension $d$ of the \abv.
\end{proof}

\subsection{Complexity of Expansive Reachability}\label{sub-exp}

Turning to expansive reachability, we present now the missing proof of
\autoref{th-abvass-cov}:
\abvasscov*
\begin{proof}
  The lower bound is proved by \citet{urquhart99}.  The upper bound is
  also similar to that of \citeauthor{urquhart99} for provability in
  LR+, and follows essentially the same scheme as in the increasing
  case in the proof of \autoref{fc-avass-rcov}.  By
  \autoref{lem-abvass} we restrict ourselves to ABVASS.  Define the
  partial order $\sqsubseteq$ over configurations in $Q\times\+N^d$ by
  $(q,\vec v)\sqsubseteq (q',\vec v')$ if $q=q'$, $\vec v\leq\vec v'$,
  and $\sigma(\vec v)=\sigma(\vec v')$, where $\sigma(\vec
  v)\eqdef\{0<i\leq d\mid\vec v(i)>0\}$ denotes the \emph{support} of
  $\vec v$.  The quasi-order $(Q\times\+N^d,\sqsubseteq)$ is
  isomorphic to the sub-order of the product ordering over
  $Q\times\+N^d\times 2^d$ induced by the restriction to triples
  $(q,\vec v,s)$ where $\sigma(\vec v)=s$, and is therefore a wqo by
  Dickson's Lemma.

  Substituting $\sqsubseteq$ for $\leq$, we show as in the proof of
  \autoref{fc-avass-rcov} that, if there is an expansive reachability
  witness $\?D$, then there is one where pseudo sequences are bad
  sequences for $(Q\times\+N^d,\sqsubseteq)$ along each branch, and
  where expansions are applied as late as possible.  Thus $\?D$ can be
  seen as a \emph{pseudo-expansive} reachability witness, using the
  semantics of pseudo-unary rules for unary rules in $T_u$, with the
  additional restriction that such a rule can only be applied if
  $\sigma(\vec u_-)\subseteq\sigma(\vec v)$.  This restriction
  reflects the fact that expansions cannot increase a zero coordinate
  in $\vec v$.  The remaining steps are the same as in the proof of
  \autoref{fc-avass-rcov}: the sequences of labels along the branches
  of $\?D$ are bad $(g,0)$-controlled sequences for
  $(Q\times\+N^d,\sqsubseteq)$, and we obtain similarly an
  \begin{equation}\label{eq-lft-exp}
    F_{d+1}\big(p(\mathrm{max}^-(T_u)+\mathrm{max}^+(T_u),|Q|\cdot
    2^d)\big)\leq\mathrm{Ack}(p'(|\?A|))
  \end{equation}
  upper bound on the height of our witness, for some polynomial
  functions $p$ and $p'$.  %
  Again,
  the main complexity parameter is the dimension $d$ of the \abv.
\end{proof}

\begin{corollary}
  MALLC and LLC provability are \textsc{Ackermann}-complete.
\end{corollary}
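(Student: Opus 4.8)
The plan is to bracket both bounds between reductions already available in the paper and the \Ack-completeness of expansive \abv\ reachability established in \autoref{th-abvass-cov}. Concretely, I would prove \Ack-hardness for the smaller calculus MALLC and \Ack-membership for the larger one LLC, and then observe that each bound propagates to the other, since every MALLC sequent is an LLC sequent.

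For the lower bound, I would start from the \Ack-hardness half of \autoref{th-abvass-cov}: reachability in expansive AVASS is \Ack-hard (this is Urquhart's result). As AVASS are a special case of \abv, every such instance is in particular an instance of expansive \abv\ reachability, which by \autoref{prop-abvass-llc} reduces in logarithmic space to MALLC provability. A composition of logarithmic-space many-one reductions is again a many-one reduction that is both elementary and primitive-recursive, so this chain witnesses \Ack-hardness of MALLC provability, and hence of LLC provability as well.

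For the upper bound, I would invoke \autoref{prop-ll-abvass}(1), which gives a polynomial-space — hence primitive-recursive — reduction from contractive LL provability, i.e.\ LLC provability, to expansive \abv\ reachability; the latter lies in \Ack\ by \autoref{th-abvass-cov}, and \Ack\ is robust under such reductions, so LLC provability is in \Ack. The same bound then holds for MALLC: by cut elimination together with the subformula property for cut-free LLC proofs, a multiplicative-additive sequent is provable in LLC exactly when it is provable in MALLC, so the restriction of the reduction of \autoref{prop-ll-abvass}(1) to such sequents already maps MALLC provability into expansive \abv\ reachability. Putting the two directions together yields \Ack-completeness of both MALLC and LLC. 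I do not expect a genuine obstacle here; the only point that needs attention is that \Ack-completeness is taken relative to many-one primitive-recursive reductions (as fixed in \autoref{sec-abvass}), so one must check that the logarithmic-space and polynomial-space reductions used are of that kind and that hardness likewise transfers along them — which it does, these reductions being in fact elementary.
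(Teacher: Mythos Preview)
Your proposal is correct and follows essentially the same route as the paper: the lower bound for MALLC via \autoref{prop-abvass-llc} composed with the \Ack-hardness of expansive AVASS reachability from \autoref{th-abvass-cov}, and the upper bound for LLC via \autoref{prop-ll-abvass}(1) composed with the \Ack\ membership in \autoref{th-abvass-cov}. You spell out a few points the paper leaves implicit---propagation of the bounds between MALLC and LLC, conservativity of LLC over MALLC via cut elimination and the subformula property, and that the reductions involved are primitive-recursive---but the argument is the same.
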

\begin{proof}
  By \autoref{th-abvass-cov} and the reductions from LLC provability
  to \abv\ expansive reachability in \autoref{prop-ll-abvass} and from
  AVASS expansive reachability to MALLC provability
  in \autoref{prop-abvass-llc}.
\end{proof}

\section{Concluding Remarks}
\begin{table}
  \tbl{\label{tab-ll}The
  complexity of provability in fragments and variants of LL.}{
  \begin{tabular}{lccc}
  \toprule
                & MELL                                   & LL           \\
  \midrule
                & \textsc{Tower}-hard, $\Sigma_1^0$-easy & $\Sigma_1^0$-c.~\citep{lincoln92} \\
  with W        %
                &\textsc{Tower}-c.                      &\textsc{Tower}-c.\\
  with C        %
                &\textsc{2Exp}-c.~\citep{schmitz14}     & \textsc{Ack}-c. \\
  \bottomrule    
  \end{tabular}}
\end{table}
Although connections between propositional linear logic and families
of counter machines have long been known, they have rarely been
exploited for complexity-theoretic results%
.  Using a model of alternating branching VASS, we have unified
several of these connections, and derived complexity bounds for
provability in substructural logics from the (old and new) bounds on
\abv\ reachability, summarised in \autoref{tab-ll} and
\autoref{tab-concl} respectively.

\begin{table}
  \tbl{\label{tab-concl}The complexity of reachability problems in
  \abv.}{
  \begin{tabular}{lccc}
  \toprule
                & AVASS                                & BVASS                                 & \abv           \\
  \midrule
  Reachability  & $\Sigma_1^0$-c.~\citep{lincoln92}    & \textsc{Tower}-hard, $\Sigma_1^0$-easy & $\Sigma_1^0$-c. \\
  Lossy reach.  & \textsc{2Exp}-c.~\citep{courtois14} &\textsc{Tower}-c.                       &\textsc{Tower}-c.\\
  Incr.\ reach. & \textsc{Ack}-c.~\citep{urquhart99}  &\textsc{2Exp}-c.~\citep{demri12}        & \textsc{Ack}-c. \\
  \bottomrule    
  \end{tabular}}
\end{table}

Our main results in this regard are the \textsc{Tower}-completeness of
provability in LLW and the new \textsc{Tower} lower bound for MELL:
the latter has consequences on numerous problems mentioned
in \autoref{sec-abvass}, and entails for instance that the
satisfiability problem for FO$^2$ on data trees is
non-elementary~\citep{bojanczyk09,dimino13}.
The \textsc{Ackermann}-completeness of MALLC and LLC is perhaps less
surprising in the light of \citeauthor{urquhart99}'s results, but we
take it as a testimony to the versatility of the \abv\ model.

The main open question remains whether BVASS reachability, or
equivalently MELL provability, is decidable.

\bibliographystyle{abbrvnat}
\bibliography{journalsabbr,conferences,llw}

\end{document}